\documentclass[aps,twocolumn,superscriptaddress,amsmath,amssymb, notitlepage,pra,raggedbottom,floatfix,numbers]{revtex4-1}
\usepackage[latin1]{inputenc}
\usepackage[shortlabels]{enumitem}
\usepackage{graphicx}
\usepackage{amsmath}
\usepackage{amsthm}
\usepackage{bm}
\usepackage{layout}
\usepackage{float}
\usepackage{amsfonts}
\usepackage{dsfont}
\usepackage{amssymb}
\usepackage[margin=0.75in]{geometry}
\usepackage{soul}
\usepackage{mathtools}
\usepackage[colorlinks=true,citecolor=blue]{hyperref}
\usepackage{color}
\usepackage[capitalise,compress]{cleveref}
\usepackage{csquotes}
\usepackage{svg}
\usepackage{pgfplots}
\pgfplotsset{compat=1.18}

\newtheorem{theorem}{Theorem}

\newtheorem{lemma}{Lemma}

\newtheorem{conjecture}{Conjecture}

\newcommand{\abs}[1]{\left | #1 \right|}
\renewcommand{\epsilon}{\varepsilon}

\def\ba{\begin{eqnarray}}
\def\ea{\end{eqnarray}}
\def\beq{\begin{equation}}
\def\eeq{\end{equation}}

\usepackage{braket}

\newcommand{\E}{\mathbb{E}}

\newcommand{\haf}{\mathrm{Haf}}

\usepackage{array}

\let\vec\mathbf

\usepackage{mathtools}
\DeclarePairedDelimiter\ceil{\lceil}{\rceil}

\DeclarePairedDelimiter\bkt{[}{]}				
\DeclarePairedDelimiter\paren{(}{)}

\bibliographystyle{apsrev4-2}

\begin{document}
\title{The Second Moment of Hafnians in Gaussian Boson Sampling}

\author{Adam Ehrenberg}
\affiliation{Joint Center for Quantum Information and Computer Science, NIST/University of Maryland College Park, Maryland 20742, USA}
\affiliation{Joint Quantum Institute, NIST/University of Maryland College Park, Maryland 20742, USA}

\author{Joseph T. Iosue}
\affiliation{Joint Center for Quantum Information and Computer Science, NIST/University of Maryland College Park, Maryland 20742, USA}
\affiliation{Joint Quantum Institute, NIST/University of Maryland College Park, Maryland 20742, USA}

\author{Abhinav Deshpande}
\affiliation{IBM Quantum, Almaden Research Center, San Jose, California 95120, USA}

\author{Dominik Hangleiter}
\affiliation{Joint Center for Quantum Information and Computer Science, NIST/University of Maryland College Park, Maryland 20742, USA}

\author{Alexey V. Gorshkov}
\affiliation{Joint Center for Quantum Information and Computer Science, NIST/University of Maryland College Park, Maryland 20742, USA}
\affiliation{Joint Quantum Institute, NIST/University of Maryland College Park, Maryland 20742, USA}
\date{\today}

\begin{abstract}
Gaussian Boson Sampling is a popular method for experimental demonstrations of quantum advantage, but many subtleties remain in fully understanding its theoretical underpinnings. 
An important component in the theoretical arguments for approximate average-case hardness of sampling is anticoncentration, which is a second-moment property of the output probabilities.
In Gaussian Boson Sampling these are given by hafnians of generalized circular orthogonal ensemble matrices. 
In a companion work [\href{https://arxiv.org/abs/2312.08433}{arXiv:2312.08433}], we develop a graph-theoretic method to study these moments and use it to identify a transition in anticoncentration.
In this work, we find a recursive expression for the second moment using these graph-theoretic techniques.
While we have not been able to solve this recursion by hand, we are able to solve it numerically exactly, which we do up to Fock sector $2n = 80$. 
We further derive new analytical results about the second moment. 
These results allow us to pinpoint the transition in anticoncentration and furthermore yield the expected linear cross-entropy benchmarking score for an ideal (error-free) device. 
\end{abstract}
\maketitle
\tableofcontents
\section{Introduction}\label{sec:introduction}
One of the major goals of quantum computer science is to find examples of certain tasks on which quantum devices can outperform classical computers. While the ultimate goal is to develop quantum computers that can run, say, Shor's algorithm \cite{shor_algorithms_1994}, the qubit numbers, gate fidelities, and error correction needed to accomplish such a task fault-tolerantly are well beyond the current state of the art. Therefore, there is interest in finding near-term examples of quantum advantage.

One area of focus that has strong theoretical evidence for an exponential speedup over the best possible classical algorithms comprises the so-called sampling problems. Aaronson and Arkhipov introduced one such promising framework called Boson Sampling \cite{aaronsonComputationalComplexityLinear2013}. The Boson Sampling task is to produce a sample (that is, a valid output Fock state) according to the outcome distribution generated by measuring indistinguishable photons that have been subjected to a random linear optical network of beam-splitters and phase shifters. In Boson Sampling, the input states consist of single photons on many input modes. However, because single-photon sources have imperfect efficiency,  
these states are difficult to produce experimentally, requiring an exponential amount of post-selection \cite{hangleiterComputationalAdvantageQuantum2023}. Therefore, generalizing this framework to other inputs that are more reliably produced has been an important topic of study.

Gaussian Boson Sampling represents one such popular generalization. There, the input states are quadratic, meaning they are generated from the vacuum by some combination of displacement and squeezing (assuming pure input states that have no thermal contribution) \cite{serafini2017quantum-continu}. Typically, the displacements are ignored because they do not contribute to entanglement between the modes. Hence, the input states are simply squeezed vacuum states, which are much easier to prepare in a lab than many parallel single-photon states \cite{hangleiterComputationalAdvantageQuantum2023}. Much theoretical work has been done to generalize the original statements from Ref.~\cite{aaronsonComputationalComplexityLinear2013} about the computational complexity of sampling in the Fock basis to this Gaussian setting~\cite{lundBosonSamplingGaussian2014,rahimi-keshariWhatCanQuantum2015,hamiltonGaussianBosonSampling2017,kruseDetailedStudyGaussian2019a,grierComplexityBipartiteGaussian2022,chabaudResourcesBosonicQuantum2023b,deshpandeQuantumComputationalAdvantage2022a}. In due course, many labs have performed experiments claiming to show quantum advantage using Gaussian Boson Sampling \cite{zhongQuantumComputationalAdvantage2020a, zhongPhaseProgrammableGaussianBoson2021a}.

Broadly speaking, the hardness of sampling schemes in general, and therefore of both Fock state and Gaussian Boson Sampling, is based on certain statistical properties of the output probability distributions. Fock state Boson Sampling and Gaussian Boson Sampling have output probabilities defined by permanents and hafnians, respectively, which are combinatorial functions mapping matrices over a field to an element of that field. If one treats the input matrix as a weighted adjacency matrix, then the permanent and the hafnian count the number of perfect matchings in the bipartite and generalized weighted graph, respectively, defined by this adjacency matrix \cite{barvinokCombinatoricsComplexityPartition2016}. 
These functions are, in general, difficult to compute. The permanent is \#\textsf{P}-hard to compute exactly \cite{valiant_complexity_1979}, and this hardness extends to the hafnian because one can encode the permanent of a matrix as the hafnian of a matrix that is twice as big. Even further, Ref.~\cite{aaronsonComputationalComplexityLinear2013} extended this exact hardness to a proof that it is \textsf{GapP}-hard to approximate the modulus squared of the permanent up to inverse polynomial multiplicative error (which similarly extends to the hafnian). However, showing that it is hard to compute or approximate specific output probabilities is not, in and of itself, enough to demonstrate hardness of actually producing a sample from the Fock or Gaussian Boson Sampling distributions; many theoretical tools are needed to show that a difficulty in computing probabilities further implies a difficulty in sampling.

One such crucial tool is called anticoncentration. Anticoncentration is a property of the output distribution that says, roughly, that the outputs are not too clustered on individual probabilities, hence making it more difficult to adequately mimic this distribution in a sampling procedure, and it is commonly used as evidence for approximate average-case hardness of sampling \cite{hangleiterComputationalAdvantageQuantum2023}. Anticoncentration is usually proven by analyzing the {moments} of the outcome probability distribution. 
In a companion piece to this work, Ref.~\cite{ehrenbergTransitionAnticoncentrationGaussian2023}, we study anticoncentration in the non-collisional limit (where the outcome states are very likely to have at most a single photon in each mode). We develop a graph-theoretic technique to find a closed form for the first moment and a few simple analytical results about the second moment; most saliently, we show that the second moment admits a polynomial expansion in the number of initially squeezed modes, and we derive the leading order in this expansion. These simple results are sufficient to show that there is actually a transition in whether or not anticoncentration holds based on how many of the initial modes are squeezed; when few are squeezed, there is a lack of anticoncentration, but, in the opposite limit, a weak version of anticoncentration holds. 

However, the second moment itself deserves a more thorough treatment beyond the few analytic results needed to prove this transition in anticoncentration. For example, linear cross-entropy benchmarking (LXEB) is a tool that has been used to characterize the performance of sampling experiments, most notably in the random circuit sampling experiment of Ref.~\cite{aruteQuantumSupremacyUsing2019c}. It can be shown that the LXEB score that an error-free sampler would achieve when averaged over all possible random networks is precisely given by the second moment of the output probabilities normalized by the square of the first moment. Therefore, a better understanding of the second moment is crucial to achieving a better understanding this popular benchmarking scheme. 

To that end, we develop a classically efficient recursion relation that allows us to exactly calculate the second moment up to any desired Fock sector $n$, which is the main technical contribution of this work. The recursion relation follows from the graph-theoretic approach we introduce in Ref.~\cite{ehrenbergTransitionAnticoncentrationGaussian2023}, which we generalize and expand upon here. This approach reduces the algebraic evaluation of the hafnian to simply counting the number of connected components of a certain class of graphs. We then carefully study how higher-order graphs reduce to lower-order ones under certain operations, and the effect that this has on the number of connected components, in order to recursively solve for the second moment. Not only does this allow us to make statements about the average LXEB score for an error-free sampler, but it also allows us to pin down more precisely \emph{where} the aforementioned transition in anticoncentration occurs. If $k$ is the number of initially squeezed modes, we provide strong evidence that this transition occurs at $k = \Theta(n^{2})$.

The rest of the paper proceeds as follows. In \cref{sec:setup}, we provide some background information, set up the system and problem of interest, and briefly summarize our main results. In \cref{sec:review}, we review our results from Ref.~\cite{ehrenbergTransitionAnticoncentrationGaussian2023}; specifically, in \cref{sec:first_moment}, we review results about the first moment, and in \cref{sec:second_moment}, we discuss how to calculate the second moment. This latter section sets up the discussion of the recursion in \cref{sec:recursion} (though most of the technical details are addressed in \cref{app:recursion_complexity,app:recursion}). \cref{sec:numerical_evaluation} discusses the actual exact numerical evaluation of the recursion. Complementing this, \cref{sec:scaling} discusses some preliminary analytical results and scaling properties of the second moment. Finally, in \cref{sec:anticoncentration}, we apply these results to give evidence for the exact location of the transition in anticoncentration we derive in Ref.~\cite{ehrenbergTransitionAnticoncentrationGaussian2023}. 

\section{The output distribution of Gaussian boson sampling}\label{sec:setup}

In this section, we provide some necessary background information on Gaussian Boson Sampling and set up our system of interest. We also motivate the study of the moments of the output probabilities. Finally, we provide a brief summary of our main results. 

\subsection{Gaussian boson sampling}
We consider a paradigmatic Gaussian Boson Sampling system on $m$ modes~\cite{hamiltonGaussianBosonSampling2017,kruseDetailedStudyGaussian2019a}. These modes pass through a random sequence of beamsplitters and phase shifters that effect a linear optical (i.e.~photon-number-conserving Gaussian) unitary $U \in \mathrm{U}(m)$ and are then measured in the Fock basis (this non-Gaussian operation is necessary for classical hardness of sampling \cite{chabaudResourcesBosonicQuantum2023b}). We consider the typical case where the initial state on the first $k$ modes consists of single-mode squeezed states of equal squeezing parameter $r$, and the remaining $m-k$ modes are initialized to the vacuum state. 

Reference~\cite{hamiltonGaussianBosonSampling2017} calculates the outcome probability of the Fock measurement of such a system. Given a unitary $U$, the probability of obtaining an outcome $\vec{n} = (n_{1},n_{2},\dots,n_{m})\in\mathbb{N}_{0}^{m}$ with total photon count $2n = \sum_{i=1}^{m}n_{i}$ is given by
\begin{equation}\label{eqn:gbs_probability}
    P_{U}(\vec{n}) = \frac{\tanh^{2n} r}{\cosh^{k} r}\abs{\haf(U^{\top}_{1_{k},\vec{n}}U_{1_{k},\vec{n}})}^{2}.
\end{equation}
$U_{1_{k},\vec{n}}$ is the $k \times 2n$ submatrix of $U$ corresponding to its first $k$ rows and its columns determined by the nonzero elements of $\vec{n}$ (appropriately repeated $n_{i}$ times). $\haf$ refers to the hafnian, which, for a $2n\times2n$ symmetric matrix $A$, is
\begin{equation}\label{eqn:Hafnian}
    \haf(A) = \frac{1}{n! 2^{n}}\sum_{\sigma \in S_{2n}}\prod_{j = 1}^{n}A_{\sigma(2j-1),\sigma(2j)},
\end{equation}
with $S_{2n}$ the permutation group on $2n$ elements. We specify that the dimensions of $A$ are even because the hafnian of an odd matrix vanishes; it also vanishes if the input matrix is not symmetric. In our setting, this aligns with the physical fact that single-mode squeezed vacuum states are supported only on even Fock states. The hafnian generalizes the permanent (whose computational complexity controls the hardness of Fock state Boson Sampling) because one can prove that \cite{hamiltonGaussianBosonSampling2017}
\begin{equation}
    \mathrm{Per}(A) = \haf\bkt*{
    \begin{pmatrix}
        0 & A \\
        A^{\top} & 0
    \end{pmatrix}
    }.
\end{equation}
Hence, computing the hafnian is at least as hard as computing the permanent. 

We work in the regime where the measured output states are, with high probability, photon-collision-free, which means that the output vector $\vec{n}$ has $n_{i} \in \{0,1\}$. That is, $U_{1_{k},\vec{n}}$ has no repeated columns. It suffices for $\E[2n] = k \sinh^{2}r = o(\sqrt{m})$ for photon-collision-freeness to hold with high probability. When $n = o(\sqrt{m})$, Ref.~\cite{deshpandeQuantumComputationalAdvantage2022a} provides strong numerical and theoretical evidence that the distribution of submatrices $U_{1_{k},\vec{n}}$ is well-captured by a generalization of the circular orthogonal ensemble (COE):

\begin{conjecture}[Hiding~\cite{deshpandeQuantumComputationalAdvantage2022a}]\label{con:gaussian}
For any $k$ such that $1 \leq k \leq m$ and $2n=o(\sqrt{m})$, the distribution of the symmetric product $U^{\top}_{1_{k},\vec{n}}U_{1_{k},\vec{n}}$ of submatrices of a Haar-random $U \in \mathrm{U}(m)$ closely approximates in total variation distance the distribution of the symmetric product $X^{\top}X$ of a complex Gaussian matrix $X~\sim ~\mathcal{N}(0,1/ m)_{c}^{k \times 2n}$ with mean $0$ and variance $1/m$.
\end{conjecture}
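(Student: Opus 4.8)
The plan is to treat the symmetric product $M := U^{\top}_{1_{k},\vec{n}}U_{1_{k},\vec{n}}$ by first trying to reduce the claim to a statement about the bare $k\times 2n$ submatrix $U_{1_{k},\vec{n}}$, and only resorting to a direct analysis of $M$ when that reduction becomes too lossy. The key structural fact is that $A\mapsto A^{\top}A$ is a fixed, randomness-independent map, so the data-processing inequality for total variation distance gives
\begin{equation}
\lVert \mathcal{L}(M)-\mathcal{L}(X^{\top}X)\rVert_{\mathrm{TV}}\;\le\;\lVert \mathcal{L}(U_{1_{k},\vec{n}})-\mathcal{L}(X)\rVert_{\mathrm{TV}},
\end{equation}
where $\mathcal{L}(\cdot)$ denotes the law of a random matrix. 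Thus it suffices to Gaussianize the submatrix itself whenever the right-hand side is small. Note that the entries of $M$ are the unconjugated sums $M_{ab}=\sum_{i=1}^{k}U_{ia}U_{ib}$, matching the symmetric (rather than Hermitian) structure that the hafnian requires.

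In the regime of small $k$ I would invoke the quantitative form of Jiang's theorem on how large a corner of a Haar-random unitary can be simultaneously approximated by independent complex Gaussians: after rescaling by $\sqrt{m}$, the law of $U_{1_{k},\vec{n}}$ converges in total variation to $\mathcal{N}(0,1)_{c}^{k\times 2n}$ as long as the total number of entries satisfies $k\cdot 2n=o(m)$. Since $2n=o(\sqrt{m})$ by hypothesis, this already certifies the conjecture for every $k=o(m/2n)$, which comfortably includes all $k$ up to order $\sqrt{m}$ and beyond; combined with the display above, it settles the entire small-$k$ window.

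The main obstacle is the complementary regime, $k$ growing like a constant fraction of $m$ up to $k=m$. Here the submatrix has far more than $m$ entries, its rows are strongly correlated through the global orthonormality of the columns of $U$, and the data-processing bound is vacuous. I would therefore abandon the reduction to the submatrix and analyze $M$ directly through the Weingarten calculus: compute the joint moments $\E\prod_{a\le b}M_{ab}^{\,p_{ab}}\overline{M_{ab}}^{\,q_{ab}}$ exactly, expand the Weingarten function as a series in $1/m$, and show that the leading term reproduces the Gaussian (Wick) moments of $X^{\top}X$ while every correction carries a factor that vanishes as $2n/\sqrt{m}\to 0$. The reason this is plausible even for macroscopic $k$ is that $M$ lives in a space of dimension only $2n\times 2n$, so the number of moments one must control grows polynomially in $n$ rather than in $m$, and the averaging over $k$ rows acts like a central-limit mechanism that only helps as $k$ increases.

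The final and most delicate step is to upgrade agreement of moments into a bound in total variation distance, since matching moments alone yields only convergence in distribution. I would attempt this either by exhibiting smooth, strictly positive densities for both $\mathcal{L}(M)$ and $\mathcal{L}(X^{\top}X)$ and bounding their difference through derivative estimates that are uniform in the growing dimension $2n$, or by constructing an explicit coupling---Gaussianizing the first $k$ rows conditioned on their symmetric Gram matrix---whose failure probability is $o(1)$. I expect this to be the crux of the argument: because the size of the compared object itself scales with $n$, every moment and regularity estimate must be uniform in $n$, and it is exactly this interplay between the matrix dimension $2n$ and the mode number $m$ that obstructs a routine appeal to off-the-shelf Gaussianization results. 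This is why the statement is recorded as a conjecture backed by numerics rather than as a theorem.
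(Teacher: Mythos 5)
The statement you are addressing is recorded in the paper as a \emph{conjecture}, and the paper contains no proof of it. What the paper does offer is (i) a citation to numerical and theoretical evidence, (ii) a heuristic --- a small enough submatrix does not ``notice'' the unitarity constraints, and forming the symmetric product washes out residual correlations --- and (iii) an argument for extending the stated range of $k$ from $n\le k\le m$ down to $1\le k\le m$, namely that if $n\times n$ submatrices of a Haar-random unitary are approximately Gaussian then a fortiori so are smaller ones, and the symmetric product is a fixed function of the submatrix. Your first two paragraphs are essentially a sharpened version of exactly that justification: the data-processing inequality for total variation under the deterministic map $A\mapsto A^{\top}A$ is correct, and combining it with a quantitative submatrix-Gaussianization theorem (Jiang-type, or the Aaronson--Arkhipov bound the paper actually leans on) does certify the conjecture on the window $k\cdot 2n=o(m)$. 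To that extent your proposal reproduces, and slightly formalizes, the paper's own reasoning for the small-$k$ extension.

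But the proposal is not a proof, and you acknowledge as much; the genuine gap sits precisely where you place it, and it is worth saying why it is a gap rather than a technicality. First, for $k=\Theta(m)$ the data-processing reduction is not merely lossy but provably useless: by the sharpness of Jiang-type theorems, once $k\cdot 2n$ is of order $m$ or larger the law of the bare submatrix $U_{1_{k},\vec{n}}$ is at total variation distance bounded away from zero from the i.i.d.\ Gaussian law, so any argument must engage with $M=U^{\top}_{1_{k},\vec{n}}U_{1_{k},\vec{n}}$ directly. Second, the Weingarten program you sketch can at best deliver convergence of mixed moments, and since $M$ lives in a space of dimension growing with $n$, the passage from moment agreement to a total-variation bound --- uniformly in $n$ --- is not a deformation of standard arguments but is itself the open problem; neither of your two suggested routes (density comparison with derivative estimates uniform in dimension, or an explicit coupling conditioned on the Gram matrix) is carried out, and no candidate estimate is stated. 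Read your submission, therefore, as a reasonable research plan whose first half matches the paper's (non-proof) justification and whose second half correctly names, without closing, the difficulty that is the reason the statement remains a conjecture.
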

We note that, in Ref.~\cite{deshpandeQuantumComputationalAdvantage2022a}, this conjecture is only formulated for the case $n \leq k \leq m$. However, here we allow $k$ to reach $1$. The reasoning is that the evidence for Conjecture~\ref{con:gaussian} in the regime $k = n$ is based on a proof from Ref.~\cite{aaronsonComputationalComplexityLinear2013} showing that $n \times n$ submatrices of Haar-random unitaries are approximately Gaussian. Clearly the proof must still hold in the case $k < n$ (if $n \times n$ submatrices are approximately Gaussian, then so too are smaller submatrices), 
meaning we can safely extend the conjecture to all $k \leq m$.

Roughly speaking, the intuition behind the conjecture and the original proof of the $k = n$ regime in Ref.~\cite{aaronsonComputationalComplexityLinear2013} is that, if one looks at a small enough submatrix of a unitary, this submatrix no longer ``notices'' the unitary constraints. Multiplying this small submatrix by its transpose washes out the remaining correlations between elements of the unitary.  Hence, the product of the submatrices is approximately the same as a product of i.i.d.~Gaussian matrices. Observe also that working in the non-collisional regime, $n \in o(\sqrt{m})$, is crucial for this argument to hold; an output state with more than one photon in a given mode leads to a repeated column/row in the respective submatrix, which, of course, destroys the independence of these elements. In what follows, we work under the assumption that Conjecture~\ref{con:gaussian} holds. We are therefore interested in the statistical properties of $X^{\top}X$ when the elements of $X$ are i.i.d.~Gaussian.

\subsection{Moments of the Gaussian Boson Sampling distribution and their significance}
In order to understand the statistical properties of the outcome probabilities of Gaussian Boson Sampling, we must study not just the distribution over individual matrix elements of $X^{\top}X$, but how they interact with one another through the hafnian. Under Conjecture~\ref{con:gaussian} and \cref{eqn:gbs_probability}, the outcome probabilities of Gaussian Boson Sampling are given by (up to a prefactor that is mostly irrelevant for our purposes)
\begin{equation}\label{eqn:moment_def}
    M_t(k,n) \coloneqq \E_{X \sim \mathcal G^{k \times 2n}}[|\haf(X^\top X)|^{2t}],
\end{equation}
where we use $\mathcal{G}^{k\times 2n}$ as shorthand for $\mathcal{N}(0,1)_{c}^{k\times 2n}$ (we consider unit variance for computational simplicity; rescaling $X$ by $1/\sqrt m$ leads to another overall prefactor that can be dealt with independently). Specifically, we are most interested in the first and second moments, $t = 1$ and $t=2$, respectively. We motivate this interest in two ways: the study of anticoncentration and linear cross entropy benchmarking in Gaussian Boson Sampling.

We first recall the framework for anticoncentration established in Ref.~\cite{ehrenbergTransitionAnticoncentrationGaussian2023}. There, the key definition is $p_{2}$, the inverse average collision probability in the output, which, under the hiding conjecture (Conjecture~\ref{con:gaussian}), is approximately given by the ratio of the square of the first moment to the second moment:
\begin{align}\label{eqn:anticoncentration_conversion}
     p_2(\mathrm U(m)) = \frac{\E_{U \in \mathrm U(m)}[P_U(\vec n)]^2}{\E_{U \in \mathrm U(m)}[P_U(\vec n)^2]} \approx \frac{M_1(k,n)^2}{M_2(k,n)} \eqqcolon m_2(k,n).
\end{align} 
We refer to $m_{2}(k,n)$ as the inverse normalized second moment. 
Reference~\cite{ehrenbergTransitionAnticoncentrationGaussian2023} uses $p_{2}$ to define three different classes of anticoncentration:
\begin{itemize}
     \item[(A)] 
We say that $P_U, U \in \mathrm U(m),$ \emph{anticoncentrates} if $p_2 = \Omega(1)$; 

\item[(WA)] We say that $P_U$ \emph{anticoncentrates weakly} if $p_2 = \Omega(1/n^a)$ for some $a = O(1)$;

\item[(NA)] We say that $P_{U}$ \emph{does not anticoncentrate} if $p_2 = O(1/n^{a})$ for any constant $a > 0$. 
 \end{itemize} 
Reference~\cite{ehrenbergTransitionAnticoncentrationGaussian2023} (especially Section~S$5$ in the Supplementary Material) contextualizes these definitions in relation to the approximate average-case hardness necessary for formal hardness of Gaussian Boson Sampling.

We note also that, of course, it is important how precise this approximation in \cref{eqn:anticoncentration_conversion} really is. That is, exactly how close in total variation distance the exact and approximate distributions are is important to formalizing the complexity theoretic implications of our work. In particular, if the distribution $U_{1_{k},\vec{n}}^{\top}U_{1_{k},\vec{n}}$ is not close enough in total variation distance to the distribution $X^{\top}X$, then it is not possible to transfer statements about, say, anticoncentration between the two distributions. We address this subtlety in the Supplemental Material of the companion work Ref.~\cite{ehrenbergTransitionAnticoncentrationGaussian2023}, but, in short, we can formalize and sharpen Conjecture~\ref{con:gaussian} such that statements made about anticoncentration of the approximate distribution via $m_{2}$ imply anticoncentration of the exact distribution via $p_{2}$ as well. 

Beyond understanding anticoncentration, calculations of $M_{1}(k,n)$ and $M_{2}(k,n)$ also allow one to study linear cross-entropy benchmarking in Gaussian Boson Sampling. Recall that linear cross-entropy benchmarking is a method by which one can compare the outputs of a potentially noisy Gaussian Boson Sampling experiment with the output of a perfect, error-free experiment. Cross-entropy benchmarking was introduced in the context of random circuit sampling in Refs.~\cite{neill_blueprint_2018-1,boixo_characterizing_2018} and later linearized in Ref.~\cite{aruteQuantumSupremacyUsing2019c}. We review this linearized form now, translating from the random circuit sampling language to that of bosonic sampling. 

Let $\{\vec{n}\}$ be the possible output photon strings sampled in some Gaussian Boson Sampling experiment that are produced with respective experimental probabilities $\tilde{P}_{U}(\vec{n})$. Let $P_{U}(\vec{n})$ be the ideal probabilities for these outputs; that is, these are the probabilities for an output $\vec{n}$ given by \cref{eqn:gbs_probability}. The linear cross-entropy score $F_{\textrm{XEB}}$ for such an experiment is
\begin{equation}
     F_{\textrm{XEB}} = |\Omega_{2n}| \sum_{\vec{n}\in\Omega_{2n}}  P_{U}(\vec{n})\tilde{P}_{U}(\vec{n})-1,
\end{equation}
where $\Omega_{2n}$ is the non-collisional sample space with $2n$ output photons in $m$ modes. If the noisy outputs are correct, i.e.~the experiment is error-free, then $\tilde{P}(\vec{n}_{i}) = P(\vec{n})$. The ideal cross-entropy score, then, is
\begin{equation}
    F^{\textrm{ideal}}_{\textrm{XEB}} = |\Omega_{2n}|\sum_{\vec{n}\in\Omega_{2n}}P_{U}(\vec{n})^{2} - 1.
\end{equation}
The expected value of the ideal cross-entropy over all possible unitaries is, therefore, 
\begin{equation}
    \mathbb{E}_{U\in U(m)}[F^{\textrm{ideal}}_{\textrm{XEB}}] = |\Omega_{2n}|\sum_{\vec{n}\in\Omega_{2n}}\mathbb{E}_{U\in U(m)}[P_{U}(\vec{n})^{2}] - 1.
\end{equation}
Assuming that one operates in the hiding regime, then two facts are true: first, $|\Omega_{2n}| \sim M_{1}(k,n)$; second, $\mathbb{E}_{U\in U(m)}[P_{U}(\vec{n})^{2}]$ is independent of $\vec{n}$ (see Ref.~\cite{ehrenbergTransitionAnticoncentrationGaussian2023} for more details). Therefore, 
\begin{equation}
    \mathbb{E}_{U\in U(m)}[F^{\textrm{ideal}}_{\textrm{XEB}}] = \frac{M_{2}(k,n)}{M_{1}^{2}(k,n)} - 1 = m_{2}(k,n)^{-1}-1.
\end{equation}
Thus, anticoncentration and the expected ideal linear cross-entropy benchmarking score both depend on this inverse average collision probability. Therefore, a precise calculation of the second moment beyond asymptotics is valuable to a more fine-grained understanding of both anticoncentration and cross-entropy benchmarking.

\subsection{Summary of Results}
We now come to a brief summary of our main results. 

In Ref.~\cite{ehrenbergTransitionAnticoncentrationGaussian2023}, we develop a graph-theoretic formalism that allows us to derive various analytic properties of the first and second moments, $M_{1}(k,n)$ and $M_{2}(k,n)$. We use this formalism to find a closed form expression for $M_{1}(k,n)$ and to show that $M_{2}(k,n)$ admits a polynomial expansion in $k$; we also calculate the leading order of this expansion. This allows us to show the transition in anticoncentration. We review these results in more depth in \cref{sec:review}. 

In this work, we significantly expand upon this graph-theoretic formalism and derive an efficiently evaluable recursion relation that allows us to numerically exactly calculate all coefficients of the polynomial expansion of the second moment. We then apply this algorithm and calculate these expansions up to photon sector $2n=80$. In the photon-non-collisional regime, where $n \in o(\sqrt{m})$, this corresponds to approximately $6400$ modes, which is well beyond the current state-of-the-art experiments. Therefore, the technique that we develop in this work yields results that can help characterize the output distribution of any near-term Gaussian Boson Sampling experiment. The recursion is developed in \cref{sec:recursion}, with details about its efficiency and construction deferred to Appendices~\ref{app:recursion_complexity} and~\ref{app:recursion}, respectively. 

We then discuss some simple analytic results about the scaling of the second moment in \cref{sec:numerical_evaluation}. We follow this with substantial numerical investigation of the results of the recursion up to $2n = 80$ in \cref{sec:anticoncentration}. In particular, we are able to give strong evidence that the transition in anticoncentration occurs at $k = \Theta(n^{2})$. We accomplish this with numerical plots of $m_{2}(k,n)$, the quantity that controls anticoncentration, when $k$ scales polynomially with $n$. We also provide a brief analytic argument that this transition occurs somewhere between $k = \Omega(n)$ and $k = O(n^{2})$. 

This result, along with the fact that we operate in the conjectured hiding regime where $2n = o(\sqrt{m})$ and $k \leq m$, implies concrete advice for experimental demonstrations of quantum advantage via Gaussian Boson Sampling. Namely, one should squeeze all $m$ modes with squeezing parameter $\sinh^{2}r = o(m^{-1/2})$. 

\section{Graph-theoretical analysis of Gaussian Boson Sampling moments}\label{sec:review}

In this section, we lay out the graph-theoretic framework for analyzing the moments of Gaussian Boson Sampling output probabilities. This is a review of the same framework we develop in Ref.~\cite{ehrenbergTransitionAnticoncentrationGaussian2023}. We first briefly recall the derivation of the closed form of the first moment $M_{1}(k,n)$, and we follow this with a discussion of how an extension of this framework also allows us to analyze the second moment $M_{2}(k,n)$.

\subsection{First Moment}\label{sec:first_moment}
In this section, we discuss the first moment of the output probabilities, which is, up to some multiplicative factors, $\underset{X\sim \mathcal{G}^{k \times 2n}}{\E}\bkt*{\abs{\haf(X^{\top}X)}^{2}}$.
We calculate and analyze this moment in Ref.~\cite{ehrenbergTransitionAnticoncentrationGaussian2023}, but we review the key elements of that discussion because they are a useful point of reference for the calculation of the second moment.

Using the definition of the hafnian and properties of the expectation value of complex Gaussians, we reduce the first moment to a sum over Kronecker $\delta$s:
\begin{equation}\label{eqn:first_moment_delta}
 M_1(k,n) =  \frac{(2n)!}{(2^{n}n!)^{2}}\sum_{\tau \in S_{2n}}\sum_{\{o_{i}\}_{i=1}^{n}}^{k}\prod_{j=1}^{n}\delta_{o_{\ceil*{\frac{\tau(2j-1)}{2}}}o_{\ceil*{\frac{\tau(2j)}{2}}}}. 
\end{equation}
We ascribe a graph-theoretic interpretation to this equation; see \cref{fig:first_moment_graph} for an example. Each permutation $\tau$ instantiates a graph $G_{\tau}$ on $2n$ vertices labeled $O_{1}$ to $O_{2n}$ with edges defined by two perfect matchings: one fixed black set of edges, and one set of red edges determined by $\tau$. More specifically, each index $o_{j}$ in the sum splits into two vertices $O_{\ell}$ and $O_{\ell'}$ such that $\ceil{\tau(\ell)/2} = j = \ceil{\tau(\ell')/2}$ (that is, $o_{\lceil\tau(\ell)/2\rceil}$ maps to a vertex $O_{\ell}$). One perfect matching consists of black edges between $O_{2j-1}$ and $O_{2j}$ for all $j \in [n] := \{1, 2, \dots, n\}$; these edges enforce that $o_{\lceil\tau(2j-1)/2\rceil}$ and $o_{\lceil\tau(2j)/2\rceil}$ are linked by a Kronecker $\delta$. The other perfect matching has red edges between $O_{\ell}$ and $O_{\ell'}$ if $\ceil{\tau(\ell)/2}=\ceil{\tau(\ell')/2}$; these edges ensure that there is an edge between the $\ell, \ell'$ mapped to the same value under $\tau$ and the ceiling function, meaning the vertices arose from the same lower-case-$o$ index. 
\begin{figure}[ht!]
    \centering
    \includegraphics[width=\linewidth]{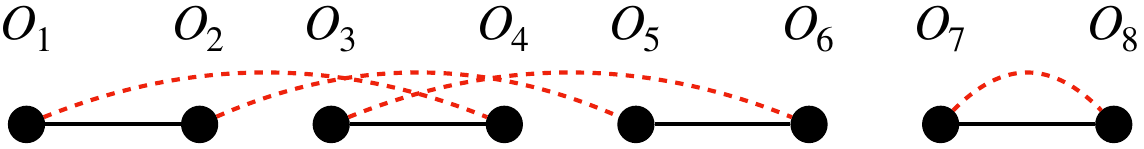}
    \caption{Graph $G \in \mathbb{G}_{n}^{1}$. One of $2^{n}n!$ permutations that induces this graph is $\tau = \bigl(\begin{smallmatrix}
    1 & 2 & 3 & 4 & 5 & 6 & 7 & 8 \\
    1 & 3 & 5 & 2 & 4 & 6 & 8 & 7
  \end{smallmatrix}\bigr).$ This graph has two connected components, therefore contributing $k^{2}$ to the first moment.}
    \label{fig:first_moment_graph}
\end{figure}

This definition of $G_{\tau}$ ensures that the number of  connected components of $G_{\tau}$, $C(G_{\tau})$, is equivalent to the number of unconstrained indices in the interior sum in \cref{eqn:first_moment_delta}, and, hence, the number of factors of $k$ that $\tau$ contributes overall. Therefore, 
\begin{equation}\label{eqn:first_moment_G1}
     M_1(k,n) = \frac{(2n)!}{\paren*{2^{n}n!}^{2}} \sum_{\tau\in S_{2n}}k^{C(G_{\tau})}.
\end{equation}
We simplify this expression using a degeneracy whereby $2^{n}n!$ different $\tau$ all induce the same final graph; the factor of $n!$ corresponds to choosing which tuple $(2j-1, 2j)$ corresponds to which index $\ceil{\tau(\ell)/2}=\ceil{\tau(\ell')/2}$, and the factor of $2^{n}$ comes from ordering within each tuple. Therefore, we study only these final sets of graphs, which we label $\mathbb{G}_{n}^{1}$ ($1$ refers to the first moment, and $n$ indexes the order). We study the connected components of graphs in $\mathbb{G}_{n}^{1}$ by writing down a recursion relation in $n$ and $k$ that, when solved, yields the first theorem of Ref.~\cite{ehrenbergTransitionAnticoncentrationGaussian2023}:
\begin{theorem}[Ref.~\cite{ehrenbergTransitionAnticoncentrationGaussian2023}]\label{thm:first_moment}
The sum over graphs in $\mathbb{G}^{1}_{n}$ satisfies 
\begin{equation}\label{eqn:g_first_moment}
   \sum_{G\in\mathbb{G}^{1}_{n}}k^{C(G)} = k(k+2)\dots (k+2n-2),
\end{equation}
and hence $M_1(k,n) = (2n-1)!!(k+2n-2)!!/(k-2)!!$.
\end{theorem}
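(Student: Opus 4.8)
The plan is to prove the product formula \cref{eqn:g_first_moment} by setting up and solving a recursion in $n$, and then to read off $M_1(k,n)$ from \cref{eqn:first_moment_G1} by simplifying double factorials. The starting observation is that each $G \in \mathbb{G}_n^1$ is the union of the fixed black perfect matching with a variable red perfect matching on the $2n$ vertices $O_1,\dots,O_{2n}$. Since every vertex then carries exactly one black and one red incident edge, $G$ is a disjoint union of cycles that alternate black and red edges, all necessarily of even length. Thus $C(G)$ is simply the number of such alternating cycles, and $\mathbb{G}_n^1$ is in bijection with the $(2n-1)!!$ choices of red matching.

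Writing $f_n(k) \coloneqq \sum_{G \in \mathbb{G}_n^1} k^{C(G)}$, I would derive the recursion by tracking the fate of the last black edge $\{O_{2n-1},O_{2n}\}$. There are two cases. If $O_{2n}$ is red-matched to its own black partner $O_{2n-1}$, this pair forms an isolated $2$-cycle; deleting it yields an arbitrary graph in $\mathbb{G}_{n-1}^1$ with exactly one fewer component, so these configurations contribute $k\, f_{n-1}(k)$. Otherwise $O_{2n-1}$ and $O_{2n}$ lie inside a longer alternating cycle, and I would \emph{smooth} the black edge: delete $O_{2n-1},O_{2n}$ and join their two red neighbors by a single new red edge. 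This lands in $\mathbb{G}_{n-1}^1$ and leaves the total cycle count unchanged, so these configurations contribute (number of preimages) $\cdot f_{n-1}(k)$.

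The crux is counting the preimages of the smoothing map, i.e.\ reconstructing all $G \in \mathbb{G}_n^1$ that smooth to a given $G' \in \mathbb{G}_{n-1}^1$. The inverse operation inserts the new black pair into one of the $n-1$ red edges of $G'$: a chosen red edge $\{O_a,O_b\}$ is split into $\{O_a,O_{2n-1}\},\{O_b,O_{2n}\}$ or into the opposite orientation $\{O_a,O_{2n}\},\{O_b,O_{2n-1}\}$, giving exactly $2(n-1)$ distinct preimages, each with the same component count as $G'$. Combining the two cases yields the linear recursion
\begin{equation}
  f_n(k) = \bigl(k + 2(n-1)\bigr)\, f_{n-1}(k),
\end{equation}
with base case $f_1(k) = k$ (two vertices, a single alternating $2$-cycle). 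Iterating gives $f_n(k) = k(k+2)\cdots(k+2n-2)$, which is \cref{eqn:g_first_moment}. I expect the careful verification that smoothing preserves the cycle structure, and that the preimage count is exactly $2(n-1)$ with no double counting across orientations or across distinct cycles, to be the main obstacle; everything else is bookkeeping.

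Finally, to obtain $M_1(k,n)$ I would substitute into \cref{eqn:first_moment_G1}, using that the $2^n n!$ degeneracy collapses the sum over $\tau \in S_{2n}$ into $2^n n!\, f_n(k)$, so that $M_1(k,n) = \frac{(2n)!}{2^n n!}\, f_n(k)$. Recognizing $\frac{(2n)!}{2^n n!} = (2n-1)!!$ and rewriting the product $k(k+2)\cdots(k+2n-2) = (k+2n-2)!!/(k-2)!!$ as a ratio of double factorials then yields the claimed closed form $M_1(k,n) = (2n-1)!!\,(k+2n-2)!!/(k-2)!!$.
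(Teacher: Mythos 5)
Your proposal is correct, and it follows essentially the same route as the paper: a recursion in $n$ obtained by collapsing one black-edge pair of vertices, splitting into the case where that pair forms its own connected component (contributing a factor of $k$) and the case where it sits inside a longer alternating cycle (contributing the $2(n-1)$ preimages of the smoothing map), yielding $f_n(k)=(k+2n-2)f_{n-1}(k)$. The only cosmetic difference is that you remove the last black edge rather than the first, and the final bookkeeping with the $2^n n!$ degeneracy and the double-factorial identity matches the paper's.
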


To summarize: \cref{eqn:first_moment_delta} gives an expression for the first moment of the outcomes of Gaussian Boson Sampling probabilities in terms of sums of products of Kronecker $\delta$s. We then reinterpret this as counting the number of connected components of a certain type of graph with two perfect matchings. We solve this counting problem by developing and evaluating a recursion relation. We use the same overall technique to calculate the second moment, as we explain in the next section. 

\subsection{Second moment}
\label{sec:second_moment}
We now move on to analyzing the second moment of the output probabilities. Using similar techniques as described for the first moment, in Ref.~\cite{ehrenbergTransitionAnticoncentrationGaussian2023} we derive an expression for the second moment that is equivalent to \cref{eqn:first_moment_delta}:
\begin{widetext}
\begin{multline}\label{eqn:second_moment_delta}
      M_2(k,n) \coloneqq \underset{X\sim \mathcal{G}^{k \times 2n}}{\E}\bkt*{\abs{\haf(X^{\top}X)}^{4}} = \paren*{\frac{1}{2^{n}n!}}^{4} (2n)! \sum_{\tau,\alpha,\beta \in S_{2n}}\sum_{\{\ell_{i},o_{i},p_{i}\}_{i=1}^{n}=1}^{k}\Biggr[\prod_{j=1}^{n} \\
    \Biggr(\delta_{o_{\ceil*{\frac{\tau(2j-1)}{2}}}o_{\ceil*{\frac{\tau(2j)}{2}}}}\delta_{p_{\ceil*{\frac{\alpha(2j-1)}{2}}}q_{\ceil*{\frac{\beta(2j-1)}{2}}}}\delta_{p_{\ceil*{\frac{\alpha(2j)}{2}}}q_{\ceil*{\frac{\beta(2j)}{2}}}} + \delta_{o_{\ceil*{\frac{\tau(2j-1)}{2}}}q_{\ceil*{\frac{\beta(2j)}{2}}}}\delta_{p_{\ceil*{\frac{\alpha(2j-1)}{2}}}q_{\ceil*{\frac{\beta(2j-1)}{2}}}}\delta_{p_{\ceil*{\frac{\alpha(2j)}{2}}}o_{\ceil*{\frac{\tau(2j)}{2}}}}
    +\\
   \delta_{q_{\ceil*{\frac{\beta(2j-1)}{2}}}o_{\ceil*{\frac{\tau(2j)}{2}}}}\delta_{p_{\ceil*{\frac{\alpha(2j-1)}{2}}}o_{\ceil*{\frac{\tau(2j-1)}{2}}}}\delta_{p_{\ceil*{\frac{\alpha(2j)}{2}}}q_{\ceil*{\frac{\beta(2j)}{2}}}} + \delta_{q_{\ceil*{\frac{\beta(2j-1)}{2}}}q_{\ceil*{\frac{\beta(2j)}{2}}}}\delta_{p_{\ceil*{\frac{\alpha(2j-1)}{2}}}o_{\ceil*{\frac{\tau(2j-1)}{2}}}}\delta_{p_{\ceil*{\frac{\alpha(2j)}{2}}}o_{\ceil*{\frac{\tau(2j)}{2}}}}\Biggr)
    \Biggr].
\end{multline}.
\end{widetext}
The main differences between \cref{eqn:second_moment_delta} and \cref{eqn:first_moment_delta} are threefold: 
\begin{enumerate}
    \item We sum over three permutations (instead of a single one) labeled $\tau, \alpha, \beta$;
    \item There are now $3n$ indices to sum over,  $\{o_{i}, q_{i}, p_{i}\}_{i=1}^{n}$, instead of just the $n$ given by $\{o_{i}\}_{i=1}^{n}$;
    \item Each factor is a sum of four possible terms instead of just one.
\end{enumerate}
However, this expression still possesses a natural graph-theoretic interpretation, as we now review. See \cref{fig:second_moment_graph} for an example graph as a guide to the following discussion.  
\begin{figure}
    \centering
    \includegraphics[width=\linewidth]{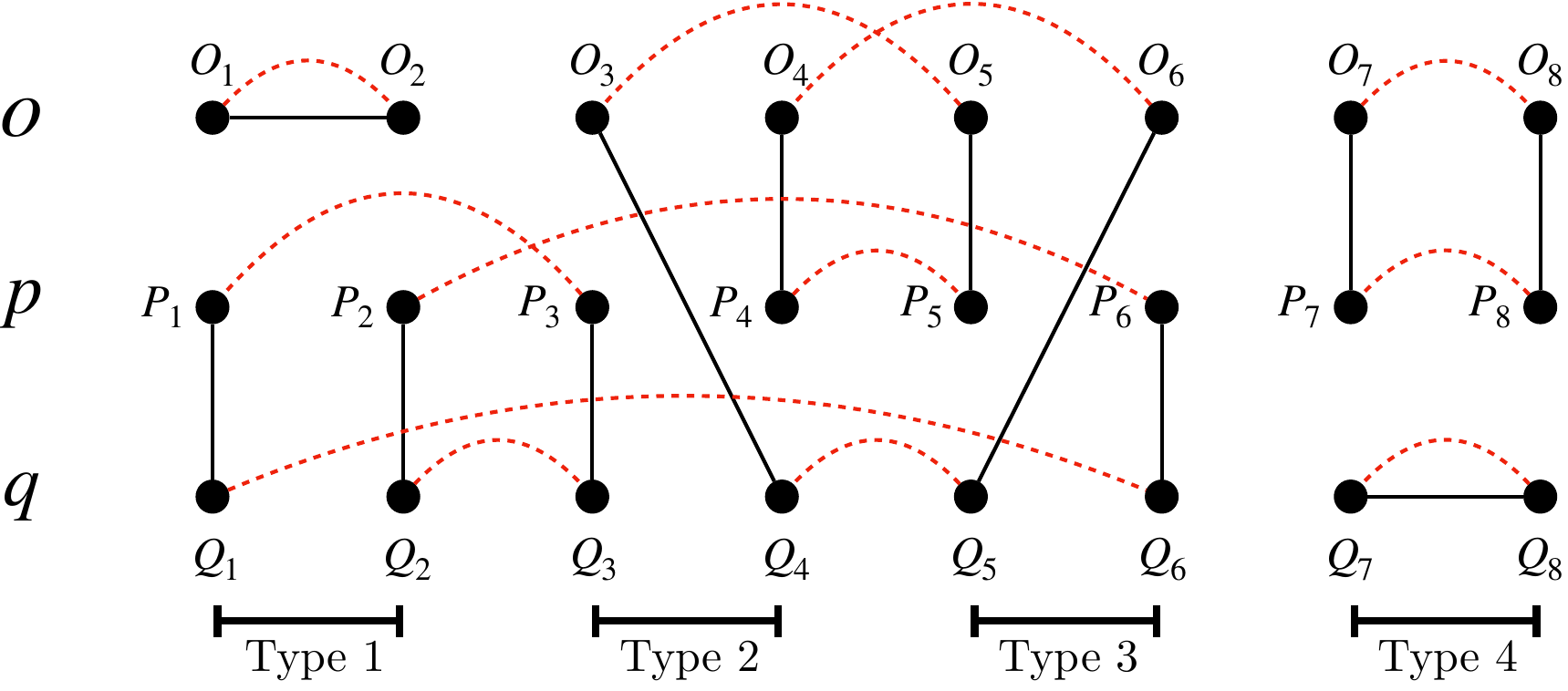}
    \caption{Example graph on $n = 4$ used in the calculation of the second moment. Each of the four possible sets of black edges are shown. An example of three permutations that would induce this graph is: $\tau = \bigl(\begin{smallmatrix}
    1 & 2 & 3 & 4 & 5 & 6 & 7 & 8 \\
    1 & 2 & 4 & 5 & 3 & 6 & 8 & 7
  \end{smallmatrix}\bigr)$, $\alpha = \bigl(\begin{smallmatrix}
    1 & 2 & 3 & 4 & 5 & 6 & 7 & 8 \\
    8 & 6 & 7 & 3 & 4 & 5 & 1 & 2
  \end{smallmatrix}\bigr)$, and $\beta = \bigl(\begin{smallmatrix}
    1 & 2 & 3 & 4 & 5 & 6 & 7 & 8 \\
    8 & 5 & 6 & 2 & 1 & 7 & 3 & 4
  \end{smallmatrix}\bigr)$. This graph has $5$ connected components, so it contributes $k^{5}$ to the second moment.}
    \label{fig:second_moment_graph}
\end{figure}

Each index in $\{o_{i}, q_{i}, p_{i}\}_{i=1}^{n}$ is again split into two graph vertices $\{O_{i}, Q_{i}, P_{i}\}_{i=1}^{2n}$ that are placed into $2n$ columns and three rows labeled $o$, $p$, and $q$, respectively. 
As for the first moment, we define two perfect matchings on these vertices given by black and red edges.
The black edges are between vertices whose labels are linked under the Kronecker $\delta$s, and the red edges connect graph vertices that came from the same original summation index.

More specifically, consider fixing a set of three permutations $\tau, \alpha, \beta$.
There is a red edge between $O_{\ell}$ and $O_{\ell'}$ if $\ceil{\tau(\ell)/2}=\ceil{\tau(\ell')/2}$.
An analogous statement holds for $P$ and $Q$ vertices, though one uses permutations $\alpha$ and $\beta$, respectively, instead of $\tau$. 
Note that this implies that red edges are always contained within a single row. 
Now, the black edges are slightly more complicated. There is only a single Kronecker $\delta$ term in each factor in the product \cref{eqn:first_moment_delta}, meaning there is only a single set of black edges for the graphs in $\mathbb{G}_{n}^{1}$. However, because the second moment as expressed in \cref{eqn:second_moment_delta} contains factors with four Kronecker $\delta$ terms, each value of $j \in [n]$ can lead to one of four different patterns of black edges on columns $2j-1$ and $2j$. 
We refer to these patterns of black edges on a single pair of adjacent columns as type-1, type-2, type-3, and type-4; see \cref{fig:second_moment_graph} for an example graph that has one of each type.
The Kronecker $\delta$ terms and their corresponding black edges, listed in order from type-1 to type-4, are given by 
\begin{widetext}
\begin{align}
   \delta_{o_{\ceil*{\frac{\tau(2j-1)}{2}}}o_{\ceil*{\frac{\tau(2j)}{2}}}}\delta_{p_{\ceil*{\frac{\alpha(2j-1)}{2}}}q_{\ceil*{\frac{\beta(2j-1)}{2}}}}\delta_{p_{\ceil*{\frac{\alpha(2j)}{2}}}q_{\ceil*{\frac{\beta(2j)}{2}}}} & \to \{(O_{2j-1}, O_{2j}), (P_{2j-1},Q_{2j-1}),(P_{2j},Q_{2j})\}, \\
    \delta_{o_{\ceil*{\frac{\tau(2j-1)}{2}}}q_{\ceil*{\frac{\beta(2j)}{2}}}}\delta_{p_{\ceil*{\frac{\alpha(2j-1)}{2}}}q_{\ceil*{\frac{\beta(2j-1)}{2}}}}\delta_{p_{\ceil*{\frac{\alpha(2j)}{2}}}o_{\ceil*{\frac{\tau(2j)}{2}}}} & \to \{(O_{2j-1}, Q_{2j}), (P_{2j-1},Q_{2j-1}),(O_{2j},P_{2j})\}, \\
     \delta_{q_{\ceil*{\frac{\beta(2j-1)}{2}}}o_{\ceil*{\frac{\tau(2j)}{2}}}}\delta_{p_{\ceil*{\frac{\alpha(2j-1)}{2}}}o_{\ceil*{\frac{\tau(2j-1)}{2}}}}\delta_{p_{\ceil*{\frac{\alpha(2j)}{2}}}q_{\ceil*{\frac{\beta(2j)}{2}}}} &\to  \{(O_{2j}, Q_{2j-1}), (P_{2j-1},O_{2j-1}),(P_{2j},Q_{2j})\}, \\
     \delta_{q_{\ceil*{\frac{\beta(2j-1)}{2}}}q_{\ceil*{\frac{\beta(2j)}{2}}}}\delta_{p_{\ceil*{\frac{\alpha(2j-1)}{2}}}o_{\ceil*{\frac{\tau(2j-1)}{2}}}}\delta_{p_{\ceil*{\frac{\alpha(2j)}{2}}}o_{\ceil*{\frac{\tau(2j)}{2}}}} &\to \{(O_{2j-1}, P_{2j-1}), (O_{2j},P_{2j}),(Q_{2j-1},Q_{2j})\}.
\end{align}
\end{widetext}

Because there are four patterns of black edges per pair of adjacent columns, and $n$ such pairs, there are $4^{n}$ possible arrangements of black edges on the entire graph. We label these arrangements by an integer $z \in [4^{n}]$, and we label a graph as $G_{\tau,\alpha,\beta}(z)$. 

Analogously to the first moment, we can rewrite the sum over products of Kronecker $\delta$s in \cref{eqn:second_moment_delta} as a sum over these graphs, where $G_{\tau,\alpha,\beta}(z)$ contributes a factor of $k$ raised to its number of connected components. Therefore, \cref{eqn:second_moment_delta} becomes
\begin{equation}\label{eqn:second_moment_intermediate}
    M_{2}(k,n) = \frac{(2n)!}{\paren*{2^{n}n!}^{4}}\sum_{\underset{z\in[4^{n}]}{\tau,\alpha,\beta \in S_{2n}}}k^{C(G_{\tau,\alpha,\beta}(z))}.
\end{equation}
There is again a degeneracy where many permutations all lead to the same set of red edges in a given row, and, hence, the same graph. Specifically, this degeneracy is again $2^{n}n!$, but for each copy of $S_{2n}$. We can therefore again ignore the permutations and look only at the underlying graphs. For any given $z$, we define $\mathbb{G}_{n}^{2}(z)$ to be the graphs on $6n$ vertices with two perfect matchings: the $z$th set of black edges and red edges that pair vertices in the same row. We then define $\mathbb{G}_{n}^{2} = \cup_{z\in[4^{n}]}\mathbb{G}_{n}^{2}(z)$. Thus, accounting for the described degeneracy and these definitions, we get
\begin{equation}\label{eqn:second_moment_graph}
    M_{2}(k,n) = (2n-1)!!\sum_{G\in\mathbb{G}_{n}^{2}}k^{C(G)}.
\end{equation}

This implies the following theorem. 
\begin{theorem}[Ref.~\cite{ehrenbergTransitionAnticoncentrationGaussian2023}]\label{thm:second_moment}
The second moment $M_2(k,n)$ is a degree-$2n$ polynomial in $k$ and can be written as $ M_2(k,n) = (2n-1)!! \sum_{i = 1}^{2n} c_{i}k^{i}$, where $c_{i}$ is the number of graphs $G\in \mathbb{G}^{2}_{n}$ that have $i$ connected components. 
\end{theorem}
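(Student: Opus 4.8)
The plan is to separate the two halves of the statement. The representation $M_2(k,n) = (2n-1)!!\sum_i c_i k^i$ with $c_i = \#\{G \in \mathbb{G}_n^2 : C(G)=i\}$ is almost immediate: starting from \cref{eqn:second_moment_graph}, I would simply reorganize $\sum_{G\in\mathbb{G}_n^2} k^{C(G)}$ by collecting all graphs sharing a common value of $C(G)$. Each graph contributes exactly one monomial $k^{C(G)}$, so grouping yields $\sum_i c_i k^i$ with $c_i$ the stated count. The real content is therefore the claim that the degree is \emph{exactly} $2n$, i.e.\ that $1 \le C(G) \le 2n$ for every $G\in\mathbb{G}_n^2$, with the upper bound attained.

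The starting observation is that every $G\in\mathbb{G}_n^2$ is a disjoint union of cycles: each of the $6n$ vertices carries exactly one red edge (its matching partner within its own row) and exactly one black edge (dictated by whichever of the four types governs its column pair), so every vertex has degree two and $C(G)$ is the number of these cycles. I would then isolate the key structural feature of the $P$ row by inspecting the four black-edge patterns: in every type, each $P$ vertex is black-matched to an $O$ or a $Q$ vertex and \emph{never} to another $P$ vertex.

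To turn this into $C(G)\le 2n$, I would first contract all $3n$ red edges, producing $3n$ super-vertices ($n$ each of $O$-, $P$-, and $Q$-type), each of degree two, with $C(G)$ unchanged since contracting an edge neither merges nor splits components. Because no black edge joins two $P$ vertices, both black edges of each of the $n$ $P$-super-vertices land on $O$/$Q$ super-vertices, so I may suppress these $n$ degree-two vertices (replacing the two-edge path through each $P$-super by a single edge between its two $O$/$Q$ neighbors), which again preserves the component count and leaves every surviving vertex of degree two. What remains is a graph on the $2n$ $O$/$Q$ super-vertices in which every vertex has degree two; hence the number of cycles---and therefore $C(G)$---is at most $2n$. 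The lower bound $C(G)\ge 1$ is trivial.

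Finally, to establish that the degree is exactly $2n$ (that is, $c_{2n}\ge 1$), I would exhibit one graph attaining $2n$ components: take all $n$ column pairs to be type-1 and let all three red matchings pair column $2j-1$ with column $2j$ (the configuration induced by $\tau=\alpha=\beta=\mathrm{id}$). Then the $o$ row splits into $n$ length-two cycles on $\{O_{2j-1},O_{2j}\}$, while the type-1 edges $(P_{2j-1},Q_{2j-1})$ and $(P_{2j},Q_{2j})$ together with the red pairings close into $n$ length-four cycles on $\{P_{2j-1},Q_{2j-1},Q_{2j},P_{2j}\}$, for a total of $2n$ components. I expect the upper bound to be the main obstacle: the polynomial reorganization and the explicit extremal example are routine, but correctly identifying that $P$-row vertices never black-connect to one another, and packaging the contraction-and-suppression so that it provably preserves $C(G)$ while reducing to exactly $2n$ degree-two vertices, is the step that must be argued with care.
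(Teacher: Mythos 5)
Your proposal is correct. The first half---reorganizing $\sum_{G\in\mathbb{G}_n^2}k^{C(G)}$ from \cref{eqn:second_moment_graph} by the value of $C(G)$---is exactly how the paper obtains the stated form; the paper presents the theorem as an immediate consequence of that equation and defers the details (in particular the degree bound) to the companion reference. What you add is a self-contained proof that $C(G)\le 2n$, and your argument is sound: red and black edges each form perfect matchings on the $6n$ vertices, every black-edge pattern of types 1--4 matches each $P$ vertex to an $O$ or $Q$ vertex, so after contracting the red matching every component of the resulting 2-regular multigraph contains at least one of the $2n$ $O$/$Q$ super-vertices, whence at most $2n$ components. (The suppression step is not strictly needed---once each component owns at least one $O$/$Q$ super-vertex you are done---but it does no harm, provided you allow the self-loops and doubled edges it can create, which you implicitly do.) The paper's own route to sharpness is Lemma~\ref{lemma:k}(ii) and \cref{app:individual_coefficients}, which characterize \emph{all} graphs attaining $2n$ components (only type-1/type-4 black edges, horizontal black edges doubled by red edges) and count them as $(2n)!!$; your single explicit example with $\tau=\alpha=\beta=\mathrm{id}$ and all type-1 columns is a special case of that family and suffices for the weaker claim $c_{2n}\ge 1$ needed here.
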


Our goal, then, is to determine these coefficients $c_{i}$. 
It is possible to directly compute $c_{2n}$ and $c_{2n-1}$, that is, the number of graphs $G \in \mathbb{G}_{n}^{2}$ with $2n$ or $2n-1$ connected components, respectively (see Appendix~\ref{app:individual_coefficients}).
However, these calculations do not easily generalize to the other $c_{i}$.
Therefore, we take a different approach, which is to derive a recursion relation that is similar in spirit to the one we use to compute the first moment.

\section{Recursion for the second moment}\label{sec:recursion}

We now move on to the recursion relation that builds the $c_{i}$ for larger $n$ from those of smaller $n$. It is useful to refer to \cref{fig:recursion_fail} for the following discussion. We are interested in the connected components of the graphs in $\mathbb{G}_{n}^{2}$, and the number of connected components does not change if one takes a graph and then ``collapses'' vertices that are connected via an edge into a single larger vertex. The graphs that we have defined for the second moment are composed of $2n$ columns of $3$ vertices each. Therefore, if one performs this collapsing operation on all of the vertices in, say, the first two columns, this converts a graph with $2n$ columns into one with $2n-2$ columns. Let us refer to these first two columns as $\mathbb{C}_{1,2}$; that is, $\mathbb{C}_{1,2} = \{O_{1}, O_{2}, P_{1}, P_{2}, Q_{1}, Q_{2}\}$. Two facts follow from the approach we have just described: (1) there are only a finite number of ways that the two columns can connect into the rest of the graph, (2) if one ``integrates out'' $\mathbb{C}_{1,2}$ by collapsing all of the vertices, one can write the number of connected components of the original graph as the sum of the remaining connected components plus the number of connected components contained entirely within $\mathbb{C}_{1,2}$. 
This is a generalization of the approach used to prove \cref{thm:first_moment}. 

However, this recursion is substantially more complicated than the one we use to calculate the first moment, as illustrated in  \cref{fig:recursion_fail}. In particular, we must generalize the types of graphs that we consider in order to build a recursion that ``closes'' on itself, that is, to build a recursion that consistently produces valid graphs. Consider the graphs that we have described so far in the context of this ``integration'' procedure whereby sets of vertices are collapsed onto one another. As stated, this procedure can induce a graph with red edges that cross between rows, which is not allowed in our current formulation. In  \cref{fig:recursion_fail}, the first figure shows a graph in $\mathbb{G}_{4}^{2}$ where $\mathbb{C}_{1,2}$ is integrated out, as denoted by the hashing, and the second figure depicts the consequence of this integration. Consider the path $P_{3}\mbox{---} P_{1}\mbox{---}Q_{1}\mbox{---}Q_{6}$ that passes through the first column. Collapsing the vertices $P_{1}$ and $Q_{1}$ into $P_{3}$ and $Q_{3}$, respectively, does not change the number of connected components, but it induces an edge $P_{3}\mbox{---}Q_{6}$ that is heretofore unallowed because it crosses between rows $2$ and $3$. Therefore, the newly induced graph is not an element of $\mathbb{G}_{3}^{2}$, hence why we must generalize what kinds of graphs we consider. 
\begin{figure}
    \centering
    \includegraphics[width=\linewidth]{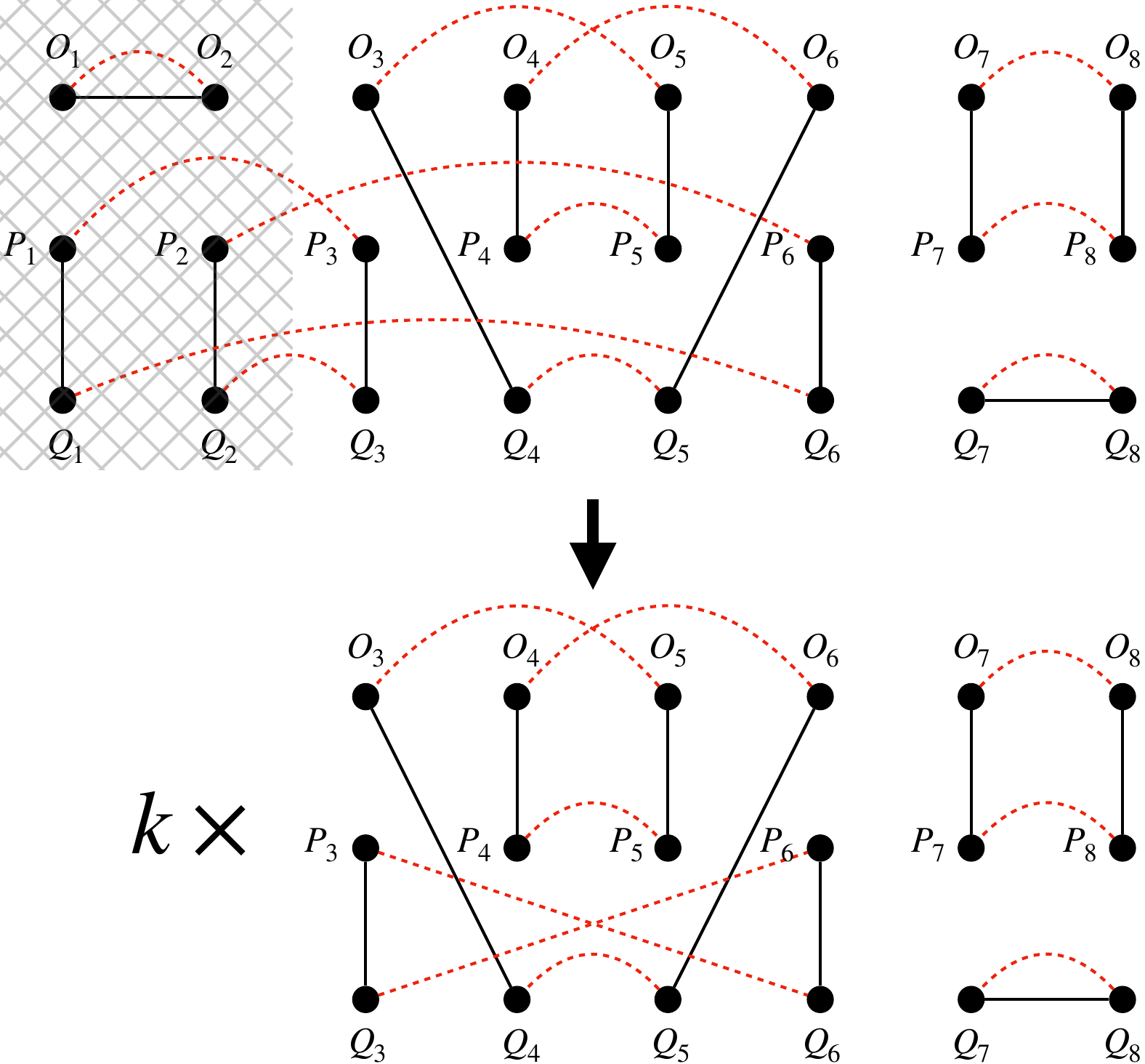}
    \caption{Example showing why the simple graphs with red edges that do not cross between rows are not sufficient to develop a recursion. Trying to ``integrate out'' or collapse the edges that connect the six vertices in the leftmost two columns (represented here with a crosshatch pattern over those vertices) induces a multiplicative factor of $k$ due to the connected component $O_{1}\mbox{---}O_{2}$ as well as red edges $P_{3}\mbox{---}Q_{6}$ and $P_{6}\mbox{---}Q_{3}$. Such red edges are not allowed for graphs in $\mathbb{G}_{n}^{2}$, so we must expand the set of graphs we consider in the recursion.}
    \label{fig:recursion_fail}
\end{figure}

To that end, we define a simple generalization of our graphs, where we allow \emph{all} possible perfect matchings of red edges across the $6n$ vertices. That is, we no longer restrict red edges to connect only vertices of the same letter (i.e., in the same row); we now allow the red edges to cross between two different rows. However, we still demand that each vertex still possess exactly one red edge. 

Let $a_{12}, a_{13}, a_{23}$ be the number of edges that span between the first and second, first and third, and second and third rows, respectively. We can then define a set of graphs $\mathbb{G}_{n}^{2}(a_{12},a_{13},a_{23},z)$ on $6n$ vertices, where the $z$ again indexes the $4^{n}$ possible sets of black edges. We can again write $\mathbb{G}_{n}^{2}(a_{12},a_{13},a_{23}) = \cup_{z\in[4^{n}]}\mathbb{G}_{n}^{2}(a_{12},a_{13},a_{23},z)$. Finally, then, we have
\begin{equation}\label{eqn:g_general} 
    g(n,a_{12},a_{13},a_{23}) := \sum_{\lambda \in \mathbb{G}_{n}^{2}(a_{12},a_{13},a_{23})}k^{C(\lambda)}
\end{equation}
The second moment we desire is then, of course, proportional to $g(n,0,0,0)$. 

A few constraints on $a_{12}, a_{13}, a_{23}$ are apparent immediately:
\begin{itemize}
    \item $a_{12} + a_{13}$, $a_{12}+a_{23}$, and $a_{13}+a_{23}$ (that is, the number of edges coming out of the first, second, and third row, respectively) must be even;
    \item  $a_{12} + a_{13}, a_{12}+a_{23}, a_{13}+a_{23}$ must all be less than or equal to $2n$ (there cannot be more than $2n$ edges coming out of a row with only $2n$ vertices given that there is exactly one red edge incident on every vertex).
\end{itemize}
We also observe that, while we do not explicitly keep track of these edges, we can also define $a_{11}, a_{22}, a_{33}$ as the number of ``proper'' edges that map between vertices in the first, second, and third rows, respectively. These edges have a simple relationship to the ones we do keep track of that can be derived by simply counting how many vertices in a given row are left after subtracting those that are used in edges that cross between rows:
\begin{align}
    a_{11} &= \frac{2n - a_{12} - a_{13}}{2}, \\
    a_{22} &= \frac{2n - a_{12} - a_{23}}{2}, \\
    a_{33} &= \frac{2n - a_{13} - a_{23}}{2}. 
\end{align}
Because we have the constraints that $a_{12} + a_{13}, a_{12}+a_{23}, a_{13}+a_{23}$ must all be even, $a_{11}, a_{22}, a_{33}$ are all integral. Also, the fact that $a_{12} + a_{13}, a_{12}+a_{23}, a_{13}+a_{23}$ must all be less than or equal to $2n$ ensures that $a_{11}, a_{22}, a_{33}$ are all non-negative as well.

It is also useful to write down the total number of graphs of each type. There are $6n$ total vertices, and $2n$ in each row. Given a vector $\vec{a} = (a_{12},a_{13},a_{23})$, we need to choose $a_{12}$ vertices in row 1 and row 2 to link to one another, $a_{13}$ in rows 1 and 3 (with no overlap between the vertices chosen in the first row corresponding to $a_{12}$ vs.~$a_{13}$), and $a_{23}$ in rows 2 and 3 (again, no overlap with previously chosen vertices is allowed). Once these vertices are chosen, it also remains to choose how to connect them. Finally, one must pair off the remaining vertices in each row, then multiply by $4^{n}$ to account for the black edges. The result is
\begin{widetext}
\begin{multline}\label{eqn:graph_counts}
    |\mathbb{G}_{n}^{2}(a_{12},a_{13},a_{23})| =
    \binom{2n}{a_{12}}\binom{2n-a_{12}}{a_{13}}\binom{2n}{a_{12}}\binom{2n-a_{12}}{a_{23}}\binom{2n}{a_{13}}\binom{2n-a_{13}}{a_{23}}a_{12}!a_{13}!a_{23}!\\
    \times (2n-a_{12}-a_{13}-1)!!(2n-a_{12}-a_{23}-1)!!(2n-a_{13}-a_{23}-1)!! 4^{n}.
\end{multline}
\end{widetext}
This result is useful because, if one sets $k = 1$ in \cref{eqn:g_general}, then every graph is put on equal footing; that is, any number of connected components contributes equally to the sum. Therefore, given a polynomial expansion in $k$ for any $g(n,a_{12},a_{13},a_{23})$ (note that \cref{thm:second_moment} still holds for the generalized graphs, except the highest order term need not be $2n$ anymore---generically it can reach $3n$), \cref{eqn:graph_counts} gives the sum of the coefficients on the monomials. 

\begin{figure*}
    \centering
    \includegraphics[width=\linewidth]{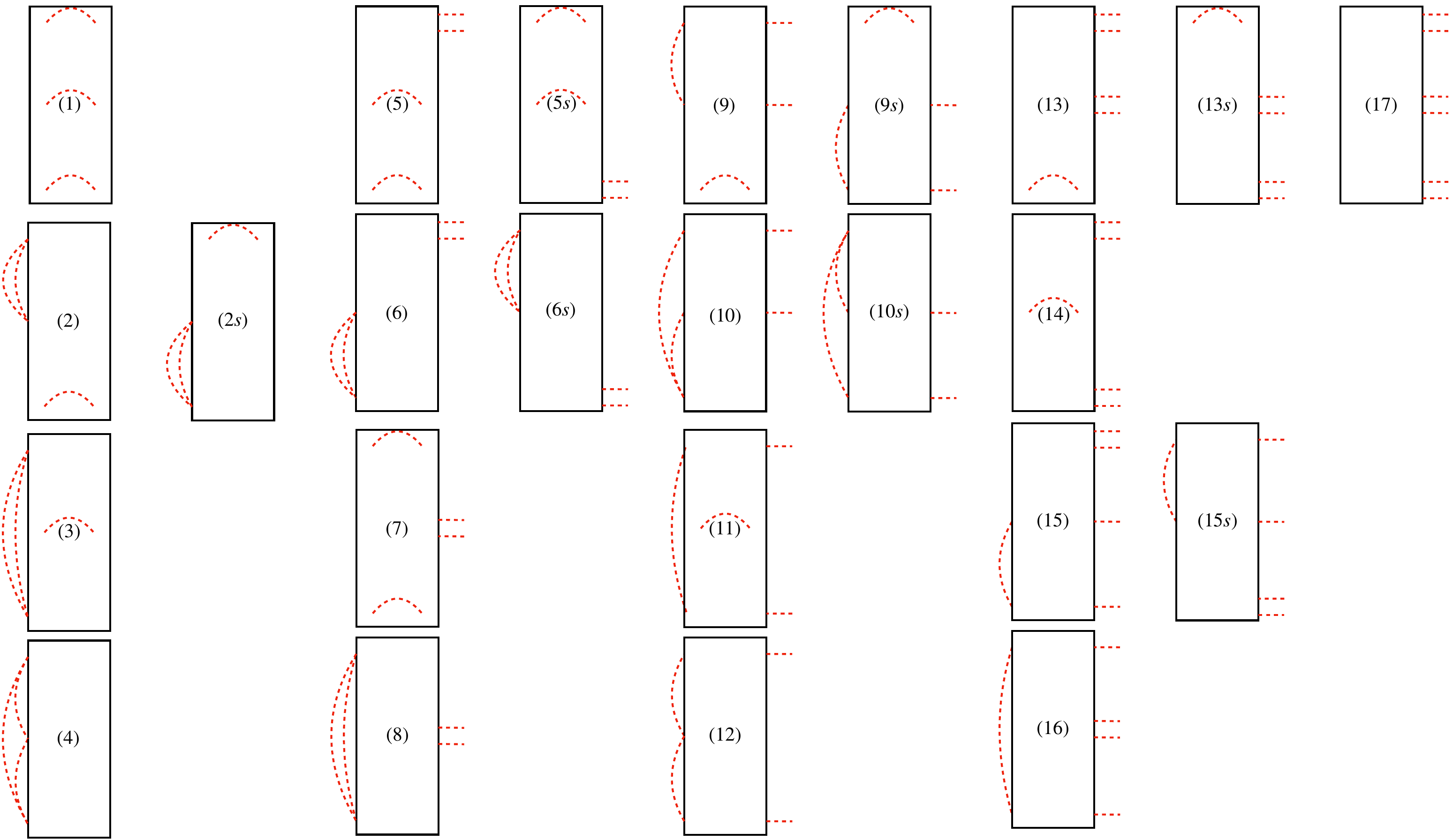}
    \caption{List of $17$ cases (up to symmetry) for how the first two columns in a graph of order $n$ can connect into the rest of the graph.}
    \label{fig:cases}
\end{figure*}

We now describe the recursion using the following equation
\begin{multline}\label{eqn:recursion}
    g(n,a_{12},a_{13},a_{23}) 
    =\\ \sum_{b_{12},b_{13},b_{23}}c(a_{12},a_{13},a_{23},b_{12},b_{13},b_{23})g(n-1,b_{12},b_{13},b_{23}).
\end{multline}
The goal is to determine the coefficients $c$ for all valid sets $a_{12},a_{13},a_{23}$ allowed by $n$. In order to do this, one must effectively determine the various ways integrating out two columns changes the possible red edge configurations.

Specifically, there are 17 ways (24 if one disambiguates symmetric cases) in which $\mathbb{C}_{1,2}$ can attach into a graph of order $n-1$ (that is, one with $2n-2$ columns). We classify these by the number of red edges that ``protrude'' from $\mathbb{C}_{1,2}$. We illustrate these cases in \cref{fig:cases} and now describe how to interpret these images.

Red edges that attach within a row inside the block are fixed, as there is only one possible edge that can connect two vertices in the same row. We depict the red edges that connect $\mathbb{C}_{1,2}$ to the rest of the graph as protruding from the same row on the right side of the block, as shown in \cref{fig:cases}. 
We depict red edges that go between different rows in $\mathbb{C}_{1,2}$ on the left of the box, again shown in \cref{fig:cases}. 
We do not draw the four possible sets of black edges within the block, but understanding their effect is crucial to the actual mechanics of the recursion. 

We must determine how each of these cases leads to a relationship between $\vec{a}$ and $\vec{b}$, as well as the coefficient $c$ in Eq.\ (\ref{eqn:recursion}), which is related to the number of possible graphs of order $n$ that, when integrated out, lead to \emph{the same} graph at order $n-1$. The coefficient out front is also affected by how many internal loops the given case has, as that of course leads to extra connected components that yield factors of $k$.  There are overall three different contributions to $c(\vec{a},\vec{b}) \coloneqq c(a_{12},a_{13},a_{23},b_{12},b_{13},b_{23})$:
\begin{itemize}
    \item Loop: This corresponds to the number of connected components in $\mathbb{C}_{1,2}$. This is the easiest contribution to determine;
    \item Vectorial: This corresponds to the relationship between $\vec{a}$ and $\vec{b}$ in \cref{eqn:recursion}, and, while somewhat simple in spirit, it often requires significant casework. In short, when integrating out $\mathbb{C}_{1,2}$, one loses contributions from internal edges that are lost by collapsing the vertices, but one gains edges of the types that are induced between the remaining vertices;
    \item Combinatorial: This corresponds to the combinatorial factors that are associated with how many ways a given case leads to the same graph at lower order. This depends both on the number of protruding edges and how the red and black edges interact via the vertices in $\mathbb{C}_{1,2}$.
\end{itemize}

With these ideas set forth, the evaluation of the recursion proceeds as follows. We first evaluate the base cases when $n = 1$. We then determine the loop, vectorial, and combinatorial contributions to each of the 17 cases depicted in \cref{fig:cases}, thus determining how that case contributes to the overall recursion. Finally, we evaluate the recursion numerically exactly, which is classically efficient (see \cref{sec:numerical_evaluation} and Appendix~\ref{app:recursion_complexity} for details). Note that, while it is, in principle, possible to write down analytically the contribution of each of the 17 cases depicted in \cref{fig:cases}, the terms are sufficiently numerous and complicated that we could not actually solve the recursion analytically; for more details, see \cref{app:recursion}, where the loop, vectorial, and combinatorial contributions are worked out for the cases.

\section{Analysis of the second moment}
In this section, we analyze the results derived from the numerically exact evaluation of the recursion described in the previous section. Specifically, we first discuss the code behind the recursion and provide some checks to gain confidence that code is accurate. We then derive some analytic results upper and lower bounding the second moment, which we then compare to the numerically exact data to understand how well they capture the scaling of the second moment. 

\subsection{Numerical Evaluation of the Recursion}\label{sec:numerical_evaluation}
Once the theoretical principles behind the recursion in \cref{eqn:recursion} are developed, we simply account for the contributions from each case and evaluate the recursion numerically exactly. We accomplish this using the Julia programming language \cite{Julia-2017} and find $g(n,0,0,0)$ from $ n = 1$ to $n = 40$ (which, recall, means up to photon sector $80$).

We now briefly describe our implementation of the exact numerical recursion; the code is available on GitHub \cite{github-lxeb}.
As a consequence of \cref{eqn:graph_counts}, the polynomial coefficients in $g(n, a_{12}, a_{13}, a_{23})$ grow at most factorially, so the number of bits needed to store the integers grows polynomially. Therefore, to ensure exact accuracy of all of the integer calculations, we use Julia's \texttt{BigInt} type, which allows us to achieve arbitrary-precision arithmetic \cite{Julia-2017}. Next, in order to avoid performing slow symbolic arithmetic operations, we represent polynomials in $k$ as \texttt{BigInt} arrays, where the $i^{\rm th}$ element of the array corresponds to the coefficient in front of the $k^i$ term in the polynomial. Multiplication and addition of polynomials in $k$ is then done at the array level.
We begin with $n=1$ and store the base case values of $g(1, a_{12}, a_{13}, a_{23})$ given in \cref{app:base-case}.
To compute the value of $g(n, a_{12}, a_{13}, a_{23})$, we iterate through the 17 cases described in \cref{app:recursion} and compute the various combinatorial factors and values of $b_{12}, b_{13}, b_{23}$ that show up in the sum in \cref{eqn:recursion}. We then recursively compute the values of $g(n-1, b_{12}, b_{13}, b_{23})$. The algorithm utilizes memoization every time any value of $g(n, a_{12}, a_{13}, a_{23})$ is computed so that the recursion rarely needs to go particularly deep.
In the end, in order to compute up to $g(40, 0, 0, 0)$, we compute $g(n, a_{12}, a_{13}, a_{23})$ for around 50\,000 combinations of arguments, resulting in almost 200 megabytes of (uncompressed) data. 

As mentioned, the evaluation of the recursion is classically efficient. In short, the number of allowed $\vec{a}$ (i.e.~those that satisfy the necessary bounds and parity constraints) is polynomially bounded, the size of the coefficients cannot be more than factorially large (meaning they can be stored with polynomial space), and the array-based multiplication and addition is classically tractable. More details are presented in \cref{app:recursion_complexity}.

We can check the computed values of $g(n,a_{12},a_{13},a_{23})$ derived via the recursion in a few ways. First, we note again that for any value of $g(n,a_{12},a_{13},a_{23})$, setting $k = 1$ (i.e., summing the coefficients in front of each monomial) yields the total number of graphs of this type, which is given in \cref{eqn:graph_counts}. Furthermore, \cref{lemma:k}(ii) (to be introduced below) gives the coefficient in front of the leading order term in $g(n,0,0,0)$. Our numerically exact computation of these numbers using the recursion matches these predicted values.

Second, for various $n$ and $k$, we numerically sample $10^{5}$ random $X \in \mathcal{G}^{k\times 2n}$, compute $\abs{\haf [X^{\top}X]}^{4}$ using the code provided by Ref.~\cite{gupt2019the-walrus:-a-l}, and average the results. This gives a numerical approximation to $(2n-1)!!g(n,0,0,0)$. We perform this calculation for $n, k \in \{1,2,\dots,9\}$. The result is shown in \cref{fig:testing_theory}, and we see good agreement between the approximate numerical calculations (data points and error bars) and the theoretical values predicted by the recursion (solid lines).

\begin{figure}
    \centering
    \newcommand{\plotdata}[2]{
    % first argument is n (1 through 9), second argument is color
  % analytic
  \addplot[color=#2, thick, opacity=0.5] table [x=k, y=analytic#1]{numerical_verification.dat};
  % numeric with error bars
  \addplot [color=#2, only marks, mark size=1.25pt, thick] 
  plot [error bars/.cd, y dir=both, y explicit]
  table [x=k, y=numeric#1, 
  y error=numericerror#1
  ] {numerical_verification.dat};
}

\begin{tikzpicture}
\begin{axis} [
    xlabel={$k$},
    ylabel={$\mathbb E_X \left\lvert\operatorname{Haf}(X^T X)\right\rvert^4$},
    ymode=log, 
    xtick=data,
    %
    % legend
    legend cell align={left},
    legend style={draw=none,fill=none},
    legend style={font=\scriptsize},
    legend style={at={(0, 1)},anchor=north west}
]

    \plotdata{1}{black}
    \plotdata{2}{red}
    \plotdata{3}{green!80!black}
    \plotdata{4}{purple}
    \plotdata{5}{brown}
    \plotdata{6}{yellow!80!black}
    \plotdata{7}{pink!80!black}
    \plotdata{8}{orange}
    \plotdata{9}{blue}

    \addlegendentry{Analytic}
    \addlegendentry{Numeric}

    \node at (9.5,1e-2) [anchor=south east,rectangle] {\scriptsize $n=1$};
    \node at (9.5,1e44) [anchor=north east,rectangle] {\scriptsize $n=9$};

    \draw[->,>=stealth](9.25, 1e3)--(9.25, 1e39);

\end{axis} 
\end{tikzpicture}
    \caption{Numerical test of recursion. The $x$-axis represents $k$, and the $y$ axis represents $\underset{X\sim \mathcal{G}^{k\times 2n}}{\E}[\abs{\haf(X^{\top}X)}^{4}]$. Solid lines, from $n = 1$ through $n = 9$ are the theoretical predictions derived from the recursion relation (see \cite{github-lxeb} for the code). Dots and bars represent the expected value and standard error, respectively, estimated by sampling $10^{5}$ random Gaussian matrices and computing the second moment using the code provided by Ref.~\cite{gupt2019the-walrus:-a-l}. Note that, for many points, the size of the error bar is smaller than its associated dot. Further, there is an asymmetry in the error bars due to the log nature of the plot. We see excellent alignment between theory and numerics for $n = 1$ through $n=5$. For larger $n$, the agreement is still good, but we seem to undersample the true value in many cases. We suspect that this is because the distribution of the second moment has a long tail, meaning we do not suspect that the given error bars are indicative of the true difference between the sampled and numerically exact data.  
    We believe that were we able to either take sufficiently more samples we would see stronger agreement between the sampled and true means, but this option is too computationally demanding given the size of the matrices involved and the exponential complexity of classically computing the hafnian \cite{bjorklund_faster_2019-1}.}
    \label{fig:testing_theory}
\end{figure}

\subsection{Scaling of the Second Moment}\label{sec:scaling}
While we have not found a closed form for the solution to the recursion, we are able to derive a few simple analytic results about the values of the coefficients of the polynomial expansion as well as the overall scaling of the second moment. The former are covered in Ref.~\cite{ehrenbergTransitionAnticoncentrationGaussian2023}, as they are crucial to demonstrating the transition in anticoncentration that is the central result of that work. The latter are new to this work. 

We recall Lemma~1 from Ref.~\cite{ehrenbergTransitionAnticoncentrationGaussian2023}. \begin{lemma}[Ref.~\cite{ehrenbergTransitionAnticoncentrationGaussian2023}]\label{lemma:k}
We have that 
\begin{enumerate}
    \item[i.] $M_2(1,n) = ((2n-1)!!)^4 4^n$;
    \item[ii.]  $ c_{2n} = (2n)!!$.
\end{enumerate}
\end{lemma}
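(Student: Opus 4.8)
For part (i), I would compute $M_2(1,n)$ directly. When $k=1$, $X=(x_1,\dots,x_{2n})$ is a single row of i.i.d.\ standard complex Gaussians, so $X^{\top}X$ is the rank-one matrix with entries $(X^{\top}X)_{ij}=x_i x_j$. Every summand in \cref{eqn:Hafnian} is then $\prod_{i=1}^{2n}x_i$ independent of $\sigma$, so $\haf(X^{\top}X)=\frac{(2n)!}{2^n n!}\prod_i x_i=(2n-1)!!\prod_i x_i$. Hence $\abs{\haf(X^{\top}X)}^4=((2n-1)!!)^4\prod_{i=1}^{2n}\abs{x_i}^4$, and by independence $M_2(1,n)=((2n-1)!!)^4\prod_i \E[\abs{x_i}^4]$. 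Since $\E[\abs{x}^4]=2$ for a unit-variance complex Gaussian, the product is $2^{2n}=4^n$, giving the claim. Equivalently, this is the one-line observation that setting $k=1$ in \cref{eqn:second_moment_graph} yields $M_2(1,n)=(2n-1)!!\,\abs{\mathbb{G}^2_n}=(2n-1)!!\cdot((2n-1)!!)^3 4^n$ via \cref{eqn:graph_counts}.

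For part (ii), by \cref{thm:second_moment} the coefficient $c_{2n}$ counts the graphs $G\in\mathbb{G}^2_n$ attaining the maximal value $2n$ of $C(G)$, so the plan is to prove the bound $C(G)\le 2n$, characterize the extremal graphs, and count them. First I would contract every red edge; since each vertex meets exactly one red and one black edge, this produces a $2$-regular multigraph $H$ on $3n$ super-vertices ($n$ per row) whose cycles are exactly the connected components of $G$. The key structural fact is that no $P$-vertex is joined to another $P$-vertex by a black edge (inspecting the four edge types, a $p$-row vertex always pairs with an $o$- or $q$-vertex), so no $p$-super-vertex carries a self-loop; self-loops of $H$ arise only from a type-1 column whose two $O$-vertices are red-matched, or a type-4 column whose two $Q$-vertices are red-matched.

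Let $s$ be the number of self-loops of $H$. Each is its own cycle, and the remaining $3n-s$ super-vertices lie on cycles of length $\ge 2$, so $C(G)=s+(\text{other cycles})\le s+(3n-s)/2=(3n+s)/2$. Since each self-loop consumes a distinct type-1 or type-4 column and there are at most $n$ columns, $s\le n$ and therefore $C(G)\le 2n$. Equality forces $s=n$ and forces the remaining $2n$ super-vertices to pair into $n$ double edges (length-$2$ cycles). Unwinding this: every column must be type 1 or type 4, every within-row black edge must be red-matched (producing the $n$ self-loops), and each $p$-super-vertex must send its two black edges to a common super-vertex. The last condition forbids a red edge of the $p$-row from joining an index in a type-1 column to an index in a type-4 column, so the $p$-row matching must respect the partition of positions into type-1 and type-4 columns; it then determines the residual $q$-row matching (on type-1 columns) and the residual $o$-row matching (on type-4 columns), while the self-loop conditions fix the rest of the $o$- and $q$-row matchings.

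This parametrizes the extremal graphs bijectively by a choice of which $n_1$ columns are type 1 (the rest type 4), together with a perfect matching of the $2n_1$ positions in the type-1 columns and a perfect matching of the $2(n-n_1)$ positions in the type-4 columns, everything else being determined. Summing, $c_{2n}=\sum_{n_1=0}^{n}\binom{n}{n_1}(2n_1-1)!!\,(2(n-n_1)-1)!!$, which I would evaluate using the exponential generating function $\sum_{m\ge 0}(2m-1)!!\,x^m/m!=(1-2x)^{-1/2}$: the binomial convolution is the coefficient extraction $n!\,[x^n](1-2x)^{-1}=2^n n!=(2n)!!$. The main obstacle is the equality analysis of the previous paragraph---verifying that the double-edge requirement translates \emph{exactly} into the within-type matching constraint, that it determines the $o$- and $q$-row matchings with no residual freedom, and that distinct parameter choices yield distinct graphs---since this is where the interplay between the black-edge types and the three red matchings must be tracked without over- or under-counting.
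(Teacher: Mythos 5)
Your proposal is correct and follows essentially the same route as the paper: part (i) is the same direct calculation (the paper also notes the equivalent graph count via \cref{eqn:graph_counts}), and part (ii) reproduces the combinatorial argument of \cref{app:individual_coefficients}, arriving at the identical sum $\sum_{p}\binom{n}{p}(2p-1)!!\,(2(n-p)-1)!!=(2n)!!$ over the number of type-1 columns. The only substantive difference is that you supply an explicit proof---via the contracted $2$-regular multigraph and the self-loop bound $s\le n$---of the upper bound $C(G)\le 2n$ and the extremal characterization that the paper merely asserts (deferring to the companion work's $k=2$ first-moment reduction), which is a welcome tightening rather than a different method.
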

The proof of part (i) consists of a direct calculation using \cref{eqn:second_moment_delta}; it also follows from the graph-theoretic framework by simply counting the number of possible graphs of type $\vec{a} = (0,0,0)$ (see \cref{eqn:graph_counts}). The proof of part (ii) follows from a reduction of the problem of counting connected components to a special case of the first moment using $k = 2$. We also reprove this result in a slightly different way in \cref{app:individual_coefficients}.

As a corollary of \cref{lemma:k}, we can derive upper and lower bounds on the second moment:
\begin{lemma}\label{lem:bounds} \cref{lemma:k} implies
    \begin{align}
        M_{2}(k,n)  &\leq (2n-1)!!^{4}4^{n}k^{2n}, \label{eqn:ub} \\
        M_{2}(k,n) &\geq (2n)!k^{2n}, \label{eqn:lb1}\\
        M_{2}(k,n) &\geq (2n-1)!!^{4}4^{n}\label{eqn:lb2}.
    \end{align}
\end{lemma}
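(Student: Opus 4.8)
The plan is to treat all three inequalities as immediate consequences of two structural facts already in hand. First, by \cref{thm:second_moment} we may write $M_2(k,n) = (2n-1)!!\sum_{i=1}^{2n} c_i k^i$ where every $c_i$ counts graphs in $\mathbb{G}_n^2$ and is therefore a nonnegative integer. Second, \cref{lemma:k} supplies two evaluations: the total at $k=1$ and the leading coefficient $c_{2n}$. Throughout I would restrict to the physically relevant regime $k \geq 1$, which is exactly what makes the powers $k^{i}$ monotone in $i$ and gives the constant and $k^{2n}$ expressions their meaning.

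For the upper bound \eqref{eqn:ub}, I would use that for $k \geq 1$ and each $i \leq 2n$ one has $k^i \leq k^{2n}$, so that
\[
M_2(k,n) = (2n-1)!! \sum_{i=1}^{2n} c_i k^i \leq (2n-1)!!\, k^{2n} \sum_{i=1}^{2n} c_i .
\]
The remaining sum $\sum_i c_i$ is just the total number of graphs in $\mathbb{G}_n^2$, which I would evaluate by reading off \cref{eqn:second_moment_graph} at $k=1$: part~(i) of \cref{lemma:k} gives $M_2(1,n) = (2n-1)!! \sum_i c_i = ((2n-1)!!)^4 4^n$, hence $\sum_i c_i = ((2n-1)!!)^3 4^n$. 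Substituting this back produces exactly $((2n-1)!!)^4 4^n k^{2n}$.

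The two lower bounds come instead from discarding nonnegative terms. For \eqref{eqn:lb1} I would keep only the top-degree term, giving $M_2(k,n) \geq (2n-1)!!\, c_{2n}\, k^{2n}$; part~(ii) of \cref{lemma:k} provides $c_{2n} = (2n)!!$, and the elementary identity $(2n-1)!!\,(2n)!! = (2n)!$ finishes it. For \eqref{eqn:lb2} I would use monotonicity in $k$: since all coefficients are nonnegative and $k^i \geq 1$ for $k \geq 1$, we have $M_2(k,n) \geq M_2(1,n)$, and the right-hand side is precisely $((2n-1)!!)^4 4^n$ again by part~(i).

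I do not expect a genuine obstacle here, since all the arithmetic content lives in \cref{lemma:k}, which is already established; the bounds are simply the consequences of bounding a nonnegative-coefficient polynomial from above by its top power and from below by either its leading term or its value at $k=1$. The only points needing care are bookkeeping: confirming nonnegativity of the $c_i$ so that terms may be freely bounded or dropped, restricting to $k \geq 1$ so the exponent comparison is valid, and recalling the double-factorial identity. It is worth noting that the resulting bounds are complementary, with \eqref{eqn:ub} and \eqref{eqn:lb1} jointly pinning the leading $k^{2n}$ behavior up to the ratio $((2n-1)!!)^3 4^n/(2n)!!$ between the extreme coefficients, while \eqref{eqn:lb2} records the $k$-independent floor set by the total graph count.
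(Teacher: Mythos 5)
Your proposal is correct and follows essentially the same route as the paper's proof: the upper bound from replacing every $k^{i}$ by $k^{2n}$ and using the total graph count $((2n-1)!!)^{3}4^{n}$ from \cref{lemma:k}(i), the first lower bound from retaining only the leading term $c_{2n}k^{2n}=(2n)!!\,k^{2n}$ via \cref{lemma:k}(ii), and the second from monotonicity in $k$ of the nonnegative-coefficient polynomial evaluated at $k=1$. Your explicit attention to $k\geq 1$ and the nonnegativity of the $c_{i}$ is just a more careful writing of what the paper leaves implicit.
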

\begin{proof}
    We first prove the upper bound. The leading term in $g(n,0,0,0)$ is of the form $k^{2n}$, and the total number of graphs with no red edges crossing between rows is $(2n-1)!!^{3}4^{n}$. Thus, the upper bound comes from saying that all graphs have $2n$ connected components. 

    We next prove the lower bounds. The first lower bound comes from considering only the leading order term in the polynomial expansion, which is given in \cref{lemma:k}(ii). Because each term in the expansion is non-negative, this is a valid lower bound. The second lower bound comes from observing that $g(n,0,0,0)$ is monotonically increasing with $k$, as there are no negative coefficients in the polynomial expansion. Therefore, we can also take a lower bound which is simply the value at $k = 1$, which we know counts the total number of possible graphs and follows from \cref{lemma:k}(i).
\end{proof}
Stirling's approximation tells us when each lower bound is most useful: 
\begin{align}
    (2n)!k^{2n} &\sim (nk)^{2n}\paren*{\frac{4}{e^{2}}}^{n}, \label{eqn:lb1_stirling} \\
    (2n-1)!!^{4}4^{n} &\sim n^{4n}\paren*{\frac{64}{e^{4}}}^{n}\label{eqn:lb2_stirling}.
\end{align}
For $k \in o(n)$, \cref{eqn:lb2_stirling} is larger, and when $k \in \omega(n)$, \cref{eqn:lb1_stirling} is instead larger. When $k\in \Theta(n)$, then both lower bounds have a leading dependence of $n^{4n}$, so which is better depends on the constant of proportionality.

Armed with our analytical results and the exact numerical data from the recursion, we can now investigate how the second moment scales with $k$ and $n$.
In \cref{fig:scaling}(a), we plot the logarithm of the upper and lower bounds, as well as the numerically exactly computed values for $(2n-1)!!g(n,0,0,0)$, for our largest available $n$, which is $n = 40$. We set $k = n^{a}$ with $a \in [0,4]$. We see that, except for when $k = n^{0}$ and the upper bound is exactly correct (as is the lower bound based on the number of graphs), the lower bound is a much better approximation. In fact, as expected, the lower bound based on the leading order appears to become a very good approximation as $k$ gets larger. 
\begin{figure*}
    \centering
    \includegraphics[width=\linewidth]{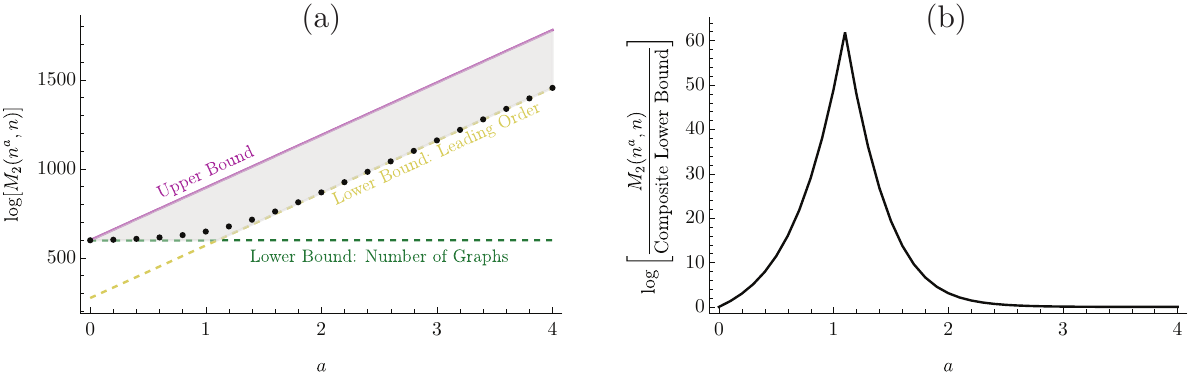}
    \caption{Plots showing scaling of the second moment compared to upper and lower bounds. For both plots, physically, $k$ should be an integer, but we here simply use the polynomial expansion of the second moment as a function of arbitrary real $k$. (a) Scaling of logarithm of the second moment and its upper and lower bounds for $n = 40$ and $k = n^{a}$ with $a \in [0,4]$. The green horizontal dashed line and the yellow slanted dashed line represent the lower bounds based on the number of graphs [Eq.\ (\ref{eqn:lb2})] and the leading order term [Eq.\ (\ref{eqn:lb1})], respectively. The maroon solid line represents the upper bound Eq.\ (\ref{eqn:ub}). The bound region is, therefore, highlighted in gray. Numerically exact data is given for $n = 40$ by the black dots \cite{github-lxeb}. Notice that the black dots representing the exact data stay within the gray region and, for most values of $a$, closely track the lower bound. (b) Difference between the logarithms of the exact data and the combined lower bound. The peak of the curve shows where the lower switches from the number of graphs to the leading order term. We see that, around $a = 2$, the lower bound based on the leading order becomes a good approximation. }
    \label{fig:scaling}
\end{figure*}

We should also point out that the logarithmic scaling of the y-axis of \cref{fig:scaling}(a) means that small differences between the exact values and the corresponding lower bound actually represent large multiplicative differences between the true values. For this reason, in \cref{fig:scaling}(b), we also plot the difference between the logarithms of the exact data and the composite lower bound defined by $\max\{$\cref{eqn:lb1},\,\cref{eqn:lb2}$\}$. This helps show how the exact data trends toward \cref{eqn:lb1} as $k$ grows. 

Relatedly, we can actually show analytically that \cref{eqn:lb1} cannot fully capture the scaling of the second moment when $k = O(n^{2})$. In \cref{app:individual_coefficients}, we discuss how to compute individual coefficients in the polynomial expansion of the second moment. There, we give a new proof that $c_{2n} = (2n)!!$, and we also prove for the first time that $c_{2n-1} = (2n)!!(3n-2)n$. Together, these two results mean
\begin{equation}
    \frac{c_{2n}k^{2n}}{c_{2n-1}k^{2n-1}} = \frac{k}{(3n-2)n} \sim \frac{k}{n^{2}}.
\end{equation}
Therefore, in order for the leading term $c_{2n}k^{2n}$ to asymptotically dominate $c_{2n-1}k^{2n-1}$, we require $k = \omega(n^{2})$. \emph{A fortiori}, for the leading term to dominate all other terms, and, therefore, for the leading-order lower bound to be a good approximation for the second moment, $k$ must be $\omega(n^{2})$. 

In summary, then, the lower bounds in \cref{eqn:lb1,eqn:lb2} typically track the true value of the second moment much better than the upper bound in \cref{eqn:ub}. When $k = \omega(n^{2})$, the first lower bound, \cref{eqn:lb1}, which is based on the leading order term, appears to be a very good approximation to second moment.

\section{Locating the Transition in Anticoncentration}\label{sec:anticoncentration}

We now move on to some of the concrete consequences of our work. The main result of Ref.~\cite{ehrenbergTransitionAnticoncentrationGaussian2023} is identifying a transition in anticoncentration in Gaussian Boson Sampling as a function of $k$, the number of initially squeezed modes. This result follows entirely from analytic results. Specifically, in Ref.~\cite{ehrenbergTransitionAnticoncentrationGaussian2023}, we show through direct computation that, when $k=1$, the probabilities do not anticoncentrate, and we use the leading order term to show that these probabilities weakly anticoncentrate in the limit that $k \to \infty$. Hence, we show the existence of a transition, but we do not isolate its exact location. We do conjecture that it occurs at $a = 2$, where $k$ scales with $n$ as $k=\Theta(n^{a})$, based on an allusion to Scattershot Boson Sampling \cite{lund_boson_2014}, which is another generalization of Fock state Boson Sampling; there the initial state is composed of two-mode squeezed states where one half of each state is measured and postselected on measurements with at most one photon. In short, one can roughly draw a connection between the presence of hiding in Scattershot Boson Sampling and the number of initially squeezed modes (this is detailed more thoroughly in Section S6 of the Supplementary Material in Ref.~\cite{ehrenbergTransitionAnticoncentrationGaussian2023}). 

The main contribution of this work is to show convincingly that the location of the transition is indeed at $k = \Theta(n^{2})$. We accomplish this through numerical arguments based on the exact data generated through the recursion for the second moment and a few more analytic results. We formalize this with the following conjecture:
\begin{conjecture}[Anticoncentration in Gaussian Boson Sampling]\label{con:anticoncentration}
    Let $2n = o(\sqrt{m})$ such that one operates in the (conjectured) hiding regime. Then Gaussian Boson Sampling does not anticoncentrate for $k = O(n^{2})$, but it weakly anticoncentrates with inverse normalized second moment, $m_{2}(k,n) := M_{1}^{2}(k,n)/M_{2}(k,n)$, scaling as $1/\sqrt{\pi n}$ for $k = \omega(n^{2})$.
\end{conjecture}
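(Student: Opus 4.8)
Because this is a conjecture rather than a theorem, the plan is to combine a rigorous asymptotic analysis of the two extreme regimes with the exact recursion data to argue that the transition sits precisely at $k=\Theta(n^2)$. I would split the claim into the weak-anticoncentration regime $k=\omega(n^2)$ and the non-anticoncentration regime $k=O(n^2)$, and in both treat $m_2=M_1^2/M_2$ through the closed form for $M_1$ from \cref{thm:first_moment} and the polynomial structure of $M_2$ from \cref{thm:second_moment}.

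For the regime $k=\omega(n^2)$, I would first expand $M_1=(2n-1)!!\,k(k+2)\cdots(k+2n-2)=(2n-1)!!\,k^n\prod_{j=1}^n(1+(2j-2)/k)$ and note that $\sum_{j=1}^n\log(1+(2j-2)/k)=n(n-1)/k+\cdots=o(1)$ exactly when $k=\omega(n^2)$, so that $M_1=(2n-1)!!\,k^n(1+o(1))$ and hence $M_1^2=((2n-1)!!)^2 k^{2n}(1+o(1))$. Next, assuming the leading term of $M_2$ dominates, \cref{lemma:k}(ii) gives $M_2\sim(2n-1)!!\,c_{2n}k^{2n}=(2n-1)!!\,(2n)!!\,k^{2n}=(2n)!\,k^{2n}$. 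Dividing, the $k^{2n}$ cancels and $m_2\to((2n-1)!!)^2/(2n)!=(2n-1)!!/(2n)!!=4^{-n}\binom{2n}{n}\sim1/\sqrt{\pi n}$ by the central-binomial (Wallis) asymptotic, which matches the conjectured value and establishes weak anticoncentration with exponent $a=1/2$.

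To pin the threshold I would use the two coefficients computed in \cref{app:individual_coefficients}, $c_{2n}=(2n)!!$ and $c_{2n-1}=(2n)!!(3n-2)n$: their ratio gives $c_{2n}k^{2n}/(c_{2n-1}k^{2n-1})=k/[(3n-2)n]\sim k/n^2$, so the leading term overtakes even the single next term only when $k=\omega(n^2)$, making $k=\omega(n^2)$ a necessary condition for the leading-term approximation — and hence for the $1/\sqrt{\pi n}$ behavior — to hold. For the complementary regime I would anchor at the exactly solvable endpoint $k=1$, where $M_1(1,n)=((2n-1)!!)^2$ and $M_2(1,n)=((2n-1)!!)^4 4^n$ from \cref{lemma:k}(i) give $m_2(1,n)=4^{-n}$, which is superpolynomially small and therefore violates even weak anticoncentration; the claim that this superpolynomial smallness persists throughout $k=O(n^2)$ I would then support with the exact recursion data plotted against the bounds of \cref{lem:bounds} in \cref{fig:scaling}.

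The main obstacle is that I cannot close the leading-term-dominance step analytically: the only uniform control available on the subleading coefficients is $\sum_i c_i=(2n-1)!!^3 4^n$ (the graph count underlying \cref{eqn:ub}), which is exponentially loose and would force $k$ superexponentially large rather than merely $\omega(n^2)$ to guarantee dominance. Controlling the full polynomial $\sum_i c_i k^i$, rather than just its top two coefficients, is exactly what resists a hand calculation, so the gap between the necessary condition $k=\omega(n^2)$ and genuine leading-term dominance, as well as the superpolynomial decay throughout $k=O(n^2)$, must be bridged by the numerically exact evaluation of the recursion up to $n=40$. This is precisely why the statement is framed as a conjecture supported by exact data rather than proved as a theorem.
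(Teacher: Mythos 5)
Your proposal is correct and follows essentially the same route as the paper's own evidence for this conjecture: the $k=\omega(n^2)$ asymptotic $m_2\sim 1/\sqrt{\pi n}$ via $M_1\sim(2n-1)!!\,k^n$ and leading-term dominance of $M_2$, the threshold identification from $c_{2n}k^{2n}/(c_{2n-1}k^{2n-1})\sim k/n^2$, the exponentially small anchor in the small-$k$ regime, and reliance on the exact recursion data up to $n=40$ to bridge the analytically inaccessible window. The only minor difference is that the paper's rigorous non-anticoncentration argument covers all $k=o(n)$ (using the lower bound $M_2\geq((2n-1)!!)^44^n$ together with $M_1\sim(2n-1)!!(2n)!!$), whereas you anchor only at the exactly solvable point $k=1$; your observation that $M_1\sim(2n-1)!!\,k^n$ holds precisely when $k=\omega(n^2)$ is a nice touch the paper does not state explicitly.
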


Our evidence for Conjecture~\ref{con:anticoncentration} is twofold and based on results regarding the anticoncentration of the approximate distribution (see the Supplemental Material of the companion piece Ref.~\cite{ehrenbergTransitionAnticoncentrationGaussian2023} for details on how to convert these statements to those about anticoncentration of the exact distribution):
\begin{enumerate}
    \item We provide a sequence of numerical plots of $\log[(m_{2}(k,n)\sqrt{\pi n})^{-1}]$ and its symmetric difference with respect to $n$ for various polynomial scalings of $k$ with $n$. The numerical plots of the function itself show an exponential scaling when $k = O(n^{2})$, but that the function becomes approximately constant when $k = \omega(n^{2})$. Similarly, the plots of the symmetric difference are positive in the $k=O(n^{2})$ regime, but approximately vanish when $k = \omega(n^{2})$.
    \item We show that, assuming the lower bound for $M_{2}(k,n)$ is a good approximation, weak anticoncentration holds for $k = \omega(n^{2})$. We also show that there is a lack of anticoncentration when $k = o(n)$. 
\end{enumerate}

We begin with the numerical evidence. In \cref{fig:transition_plot}, we set $k = n^{a}$ and plot $\log[(m_{2}(k,n)\sqrt{\pi n})^{-1}]$ for various values of $a$. We choose this quantity because, in the asymptotic limit of large $k$, $(m_{2}(k,n)\sqrt{\pi n})^{-1} \sim 1$, but when $k = 1$, it is exponentially big \cite{ehrenbergTransitionAnticoncentrationGaussian2023}. Therefore, we hope to use \cref{fig:transition_plot} to understand how this quantity interpolates between the exponential and polynomial behavior of $m_{2}(k,n)^{-1}$. In \cref{fig:transition_plot}(a), we plot $\log[(m_{2}(k,n)\sqrt{\pi n})^{-1}]$ for $a = 0.5$ to $a = 4.0$ with spacing $0.5$. We see that for $a \leq 2$, this quantity seems to linearly increase with $n$, meaning that $m_{2}(k,n)^{-1}$ is exponentially large in $n$. However, for $a > 2$, it trends to a small constant. Because $m_{2}(k,n) \sim 1/\sqrt{\pi n}$ is derived in the limit of asymptotically large $k$ using the leading order lower bound for the second moment in \cref{eqn:lb1}, this suggests that the use of this lower bound is a good approximation to the second moment when $a > 2$; this aligns well with \cref{fig:scaling}. Thus, we see that, when $a > 2$, the normalized second moment trends to its asymptotic-in-$k$ value of $\sqrt{\pi n}$. In \cref{fig:transition_plot}(b), we zoom in on the suspected transition point and plot the same quantity when $a \in \{1.95, 1.99, 2.00, 2.01, 2.05, 2.10, 2.15, 2.20\}$. We see similar behavior in this plot; namely, at approximately $a = 2$, the curves transition from growing in $n$ to decreasing toward $0$. To clarify this point even further, we also plot the symmetric difference of the above quantity as a function of $n$ (excluding the minimum and maximum values of $n$). Here, the symmetric difference of a function $f(n)$, which we refer to as $\Delta_{n}f(n)$, is defined as $(f(n+1)-f(n-1))/2$. \cref{fig:transition_plot}(c) and \cref{fig:transition_plot}(d) use the same values of $a$ as \cref{fig:transition_plot}(a) and \cref{fig:transition_plot}(b), respectively. We see that, up to some finite size effects, when $a > 2$ this symmetric difference trends to $0$, but it remains positive for $a \leq 2$. 
\begin{figure*}[ht!]
    \centering
    \includegraphics[width=\linewidth]{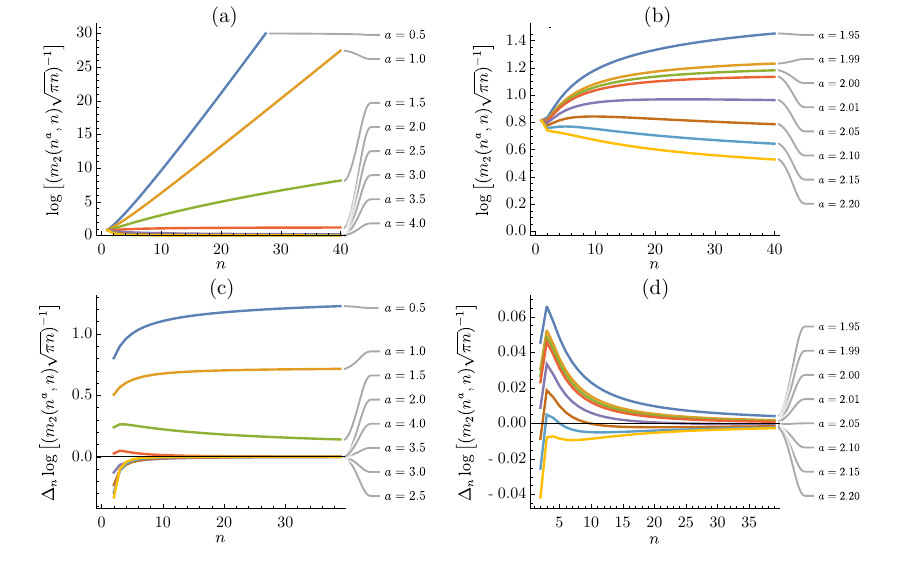}
    \caption{Plots of $\log[(m_{2}(k,n)\sqrt{\pi n})^{-1}]$ and its symmetric difference, notated as $\Delta_{n}$, as a function of $n$ for $k = n^{a}$. Recall that $m_{2}(k,n):=M_{1}(k,n)^{2}/M_{2}(k,n)$ and, for asymptotically large $k$, $m_{2}(k,n)\sim 1/\sqrt{\pi n}$ \cite{ehrenbergTransitionAnticoncentrationGaussian2023}. (a) $a \in [0.5, 4.0]$, equally spaced by $0.5$. (b) $a \in \{1.95, 1.99, 2.00, 2.01, 2.05, 2.10, 2.15, 2.20\}$ to show the regime around $a = 2$ more clearly. (c) The symmetric difference of $\log[(m_{2}(k,n)\sqrt{\pi n})^{-1}]$ with respect to $n$, again with $a \in [0.5, 4.0]$. (d) Zooming in on the symmetric difference when $a$ is around $2$, with the same values as plot (b). Note that each of the curves in plots (a) and (b) are composed of numerically exact data at $40$ points ($n \in \{1, \dots, 40\}$) that are smoothed for visualization. The same holds for plots (c) and (d), except there are only $38$ points ($n =1$ and $n=40$ are excluded because we compute the symmetric difference). Finally, while $k$ physically must be an integer, we do not enforce that for these plots; we instead just using the polynomial expansion of the moments to extend $k$ to arbitrary real numbers.}
    \label{fig:transition_plot}
\end{figure*}

We next plot in \cref{fig:derivative_transition_plot} the symmetric difference $\Delta_{n}\log[(m_{2}(k,n)\sqrt{\pi n})^{-1}]$ with respect to $n$ at $n = 39$ (the largest $n$ for which we can compute the symmetric difference) as a function of $a$. We see the symmetric difference vanish near $a = 2$, as would be expected if the transition occurs at $k = \Theta(n^{2})$. The inset of  \cref{fig:derivative_transition_plot} clarifies this by plotting the logarithm of this symmetric difference such that its vanishing instead becomes a divergence.

\begin{figure}[ht!]
    \centering
    \includegraphics[width=\linewidth]{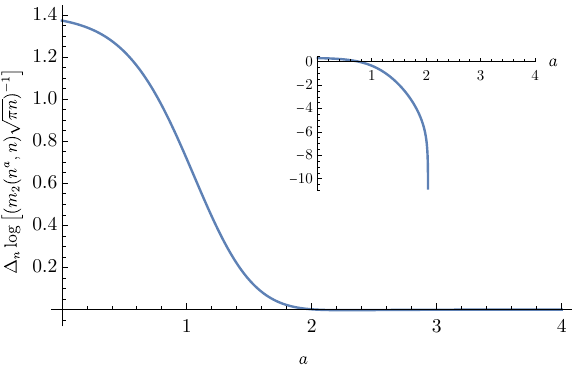}
    \caption{Symmetric difference $\Delta_{n}\log[(m_{2}(k,n)\sqrt{\pi n})^{-1}]$ evaluated at $n = 39$. Here, $k = n^{a}$, and $a$ represents the $x$-axis. Again, physically, $k$ must be an integer, but for this plot we are simply using the polynomial expansions of the moments where $k$ can be an arbitrary real number. This symmetric difference vanishes very close to $a = 2$, suggesting that, when $k = \Omega(n^{2})$, the quantity $m_{2}(k,n)\sqrt{\pi n}$ is a constant, meaning the normalized second moment appears to scale as $\sqrt{\pi n}$. The inset simply plots the $\log$ of the $y$-axis in the main plot (still with $a$ along the $x$-axis) in order to visualize more clearly the transition. The divergence occurs somewhere around $a=2.03$, but we suspect this difference is due solely to finite-size effects. Beyond this divergence, the symmetric difference is negative, meaning the logarithm is complex and, hence, unplotted.}
    \label{fig:derivative_transition_plot}
\end{figure}

For our second, more analytic argument, we show that if the lower bound is a good approximation to the second moment, then weak anticoncentration holds for $k = \omega(n^{2})$ and there is a lack of anticoncentration when $k = o(n)$.

First, consider the case $a < 1$. Note that $k = n^{a}$ is negligible to $n$ (asymptotically in $n$). Therefore, up to subleading order,
\begin{equation}
    \frac{(k+2n-2)!!}{(k-2)!!}\sim (2n)!!. 
\end{equation}
Using \cref{eqn:lb2}, which is a valid lower bound, we get
\begin{align}
    \frac{M_{2}(k,n)}{M_{1}(k,n)^{2}} &\gtrsim \frac{(2n-1)!!^{4}4^{n}}{((2n-1)!!(2n)!!)^{2}}\\
    &= 4^{n} \frac{(2n-1)!!^{2}}{(2n)!!^{2}} \\
    &\sim \frac{4^{n}}{\pi n},
\end{align}
which is exponentially big, demonstrating a lack of anticoncentration (accounting for the subleading contribution of $k$ does not change the conclusion). Here, we have used Stirling's approximation and
\begin{equation}
     \frac{(2n)!!}{(2n-1)!!} \sim \frac{\sqrt{2\pi n}(2n/e)^{n}}{\sqrt{2}(2n/e)^{n}} = \sqrt{\pi n}.
\end{equation}

We now examine the case where $k = n^{a}$ with $a > 2$. We use that, according to \cref{fig:scaling}, the lower bound $M_{2}(k,n) \geq (2n)!k^{2n}$ is actually an extremely good approximation to the second moment. Here, $k$ now dominates $n$, so
\begin{equation}
    \frac{(k+2n-2)!!}{(k-2)!!}\sim \sqrt{k^{2n}} = n^{an}. 
\end{equation}
Correspondingly, the normalized second moment scales as
\begin{align}
    \frac{M_{2}(k,n)}{M_{1}(k,n)^{2}}  &\sim \frac{(2n-1)!!(2n)!!k^{2n}}{(2n-1)!!^{2}k^{2n}}\\
    &= \frac{(2n)!!}{(2n-1)!!}\\
    &\sim \sqrt{\pi n}.
\end{align}
Therefore, when $k = \omega(n^{2})$, weak anticoncentration holds (again, the inclusion of any subleading terms does not change the conclusion). Note that this argument is similar to the argument used to demonstrate the existence of the transition in the first place, but it uses the fact that the second moment is already well approximated by the leading order lower bound at $k = \omega(n^{2})$ instead of just in the asymptotic limit of large $k$. Unfortunately, our current results are insufficient to more formally handle the regime $a \in [1,2]$ regime.

To recap, we have shown the following results. First, we have provided numerics in \cref{fig:transition_plot,fig:derivative_transition_plot} that suggest that $\sqrt{\pi n}$ is a good approximation to the normalized second moment when $k = \omega(n^{2})$. This is the value of the normalized second moment that is calculated when one uses the lower bound in \cref{eqn:lb1} that is based on the leading order term. Similarly, these plots numerically indicate that when $k = O(n^{2})$, the normalized second moment grows exponentially in $n$, meaning there is a lack of anticoncentration. Next, we have shown that, if the leading order is a good approximation to the second moment, which, according to \cref{fig:scaling} occurs when $k = \omega(n^{2})$, then the normalized second moment scales as $\sqrt{\pi n}$, meaning weak anticoncentration holds in that regime. We have also shown that for $k = O(n)$, there is a lack of anticoncentration. All together, the totality of the evidence presented here strongly suggests the veracity of Conjecture~\ref{con:anticoncentration} and that the transition between lack of anticoncentration and weak anticoncentration in the approximate output distribution occurs at $k = \Theta(n^{2})$.

\section{Conclusion}
\label{sec:conclusion}
In this work, we have studied the output distribution of the prototypical setup for Gaussian Boson Sampling in the hiding regime. Our main theoretical contribution is the development of a recursion relation that allows one to compute numerically exactly in polynomial time the second moment of these output probabilities for any photon Fock sector. We additionally detail separate ways to calculate individual coefficients of the polynomial expansion of the second moment. Together, these results provide strong evidence for our conjecture that the transition in anticoncentration, whose existence is proven in Ref.~\cite{ehrenbergTransitionAnticoncentrationGaussian2023}, occurs at $k = \Theta(n^{2})$. 

Ideally we would have been able to derive a closed-form expression for the polynomial description of the second moment akin to \cref{thm:first_moment}, as this might have allowed us to formally prove this conjecture, but we leave this important question to future work. It would also be nice to develop a better, more intuitive understanding for why this transition occurs. It appears to be related to the transition between collisional and collision-free outputs in Scattershot Boson Sampling, but the connection is not perfect, and further investigation seems worthwhile. 

Related to all of these points, the precise nature of the crossover at $k = \Theta(n^{2})$ is an interesting realm of future study. Specifically, we conjecture that weak anticoncentration holds for $k = \omega(n^{2})$ and there is a lack of anticoncentration when $k = O(n^{2})$, which of course places the transition at $k = \Theta(n^{2})$. But precisely how the normalized moment behaves as we tune $a$ through $a=2$ deserves special attention.

Our results open the door for answering other questions of interest. In particular, our results may make it possible to evaluate how well certain classical algorithms may sample from the output distribution or evaluate spoofing cross-entropy benchmarking in Gaussian Boson Sampling. Further exploration here is worthwhile. We also note that we have studied Gaussian Boson Sampling with no noise and number-resolving detectors. It would be interesting to see whether our techniques can be expanded to imperfect settings, such as when photons are partially distinguishable \cite{shi_effect_2022}, or when the measurement detectors only distinguish between the presence or absence of photons \cite{quesada_gaussian_2018}. 

Finally, the graph-theoretic approach that we have developed in this manuscript is surprisingly flexible, and it deserves continued treatment. In \cref{app:alternative}, we present another way to use the graphs in $\mathbb{G}_{2}^{n}$ in order to develop a recursion that can solve for the second moment. In short, this other approach observes that there are really only five types of black edges in our graphs: ones that stay in row 1, ones that stay in row 3, and ones that go between rows 1 and 2, rows 1 and 3, and rows 2 and 3. Because we are interested only in the number of connected components, and because we sum over \emph{all} perfect matchings defined by red edges in each row, we are free to drag the black edges around and order them in new, convenient ways. Therefore, looking at these graphs from the perspective of the total number of each type of black edge allows us to conceive of a different kind of recursion for the second moment. While we only sketch the idea behind this alternative recursion, we believe that it may be a promising new way of looking at the problem. In particular, this new approach allows us to find an, admittedly, somewhat complicated, expression for $c_{1}$ (that reproduces our expression for $c_{1}$ found via the original recursion up to $n = 40$). However, this new approach should not be viewed as a strict alternative to what we have derived in this manuscript, but a complementary approach that might yield new insights. We leave exploring it to future work. 

\vspace{1em}
\acknowledgments

We thank Changhun Oh, Bill Fefferman, Marcel Hinsche, Max Alekseyev, and Benjamin Banavige for helpful discussions. We thank Jacob Bringewatt for providing feedback on Appendix~A. A.E., J.T.I., and A.V.G.~were supported in part by the DoE ASCR Accelerated Research in Quantum Computing program (award No.~DE-SC0020312), DARPA SAVaNT ADVENT, AFOSR MURI, DoE ASCR Quantum Testbed Pathfinder program (awards No.~DE-SC0019040 and No.~DE-SC0024220), NSF QLCI (award No.~OMA-2120757), NSF STAQ program, and AFOSR. Support is also acknowledged from the U.S.~Department of Energy, Office of Science, National Quantum Information Science Research Centers, Quantum Systems Accelerator. 
JTI thanks the Joint Quantum Institute at the University of Maryland for support through a JQI fellowship. 
D.H. acknowledges funding from the US Department of Defense through a QuICS Hartree fellowship.
Specific product citations are for
the purpose of clarification only, and are not an endorsement by the authors or NIST.

\bibliography{LXEB.bib,doms_bib.bib}

\onecolumngrid
\cleardoublepage

\setcounter{section}{0}
\setcounter{figure}{0}
\renewcommand{\thefigure}{\thesection \arabic{figure}}
\counterwithin{figure}{section}
\setcounter{table}{0}
\renewcommand{\thetable}{\thesection \arabic{table}}
\counterwithin{table}{section}
\setcounter{theorem}{0}
\renewcommand{\thetheorem}{\thesection \arabic{theorem}}
\setcounter{lemma}{0}
\renewcommand{\thelemma}{\thesection \arabic{lemma}}
\setcounter{equation}{0}
\renewcommand{\theequation}{\thesection \arabic{equation}}

\clearpage
\begin{appendix}

In the appendices, we provide details and derivations that supplement the discussion in the main text.

\begin{itemize}
    \item \cref{app:recursion_complexity}: We discuss the classical complexity of evaluating the recursion and show that it is efficient (i.e., the time and space required scale polynomially) in the Fock sector $n$;
    \item \cref{app:recursion}: We provide the graph-theoretic details for how to derive the recursion;
    \item \cref{app:individual_coefficients}: We discuss how to compute individual coefficients of the polynomial expansion of the second moment. Specifically, we give one method to calculate the leading and first subleading terms in the polynomial expansion of the second moment;
    \item \cref{app:alternative}: We discuss an alternative method for developing a recursion to the solve for the second moment. We also apply this alternative picture to find an expression for the constant term in the polynomial expansion of the second moment. 
\end{itemize}

\section{Classical Complexity of Evaluating Recursion}\label{app:recursion_complexity}
In this appendix, we argue that the numerical evaluation of the recursion and, hence, the second moment, is classically efficient (that is, the runtime and space used are at most polynomial) in $n$, which corresponds to the Fock sector of interest in the output samples. 

We recall the setup of the recursion as we describe it in the main text. Specifically, we define
\begin{equation}
    g(n,a_{12},a_{13},a_{23}) := \sum_{\lambda \in \mathbb{G}_{n}^{2}(a_{12},a_{13},a_{23})}k^{C(\lambda)}. 
\end{equation}
$\mathbb{G}_{n}^{2}(a_{12},a_{13},a_{23})$ is the set of second-moment graphs of order $n$ with $a_{ij}$ red edges that cross between rows $i$ and $j$. $C(\lambda)$ is the number of connected components of $\lambda$. The second moment is given by $(2n-1)!!g(n,0,0,0)$. We then write down the recursion using these $g(n,a_{12},a_{13},a_{23})$ as
\begin{equation}
    g(n,a_{12},a_{13},a_{23}) = \sum_{b_{12},b_{13},b_{23}}c(a_{12},a_{13},a_{23},b_{12},b_{13},b_{23})g(n-1,b_{12},b_{13},b_{23}).
\end{equation}

We list the following constraints on $\vec{a}$, which is shorthand for $(a_{12}, a_{13}, a_{23})$. First, $a_{12} + a_{13}$, $a_{12}+a_{23}$, and $a_{13}+a_{23}$ (the edges that exit the first, second, and third rows respectively) must be even. Second, $a_{12} + a_{13}, a_{12}+a_{23}, a_{13}+a_{23}$ must all be less than or equal to $2n$, as there cannot be more than $2n$ edges coming out of a row with only $2n$ vertices given that there is exactly one red edge incident on every vertex. Finally, we also add here that, clearly, $a_{12}, a_{13}, a_{23}$ are non-negative. These constraints imply a finite number of valid vectors $\vec{a} = (a_{12},a_{13},a_{23})$ for a given order $n$, and any vector satisfying these constraints corresponds to a valid set of graphs and, therefore, a term $g(n,\vec{a})$ in the recursion. We provide an example of all possible $\vec{a}$ when $n = 4$ in \cref{tab:vectors}. 
\begin{table}[ht!]
    \def\arraystretch{1.2}
    \centering
    \begin{tabular}{c|c}
        $m$ & $\vec{a}$ \\
        \hline
        $0$ & $(0,0,0)$ \\
        $1$ & $\emptyset$ \\
        $2$ & $(2,0,0)$ \\
        $3$ & $(1,1,1)$ \\
        $4$ & $(2,2,0), (4,0,0)$ \\
        $5$ & $(3,1,1)$ \\
        $6$ & $(6,0,0), (4,2,0), (2,2,2)$ \\
        $7$ & $(5,1,1), (3,3,1)$ \\
        $8$ & $(8,0,0), (6,2,0), (4,4,0), (4,2,2)$ \\
        $9$ & $(7,1,1), (5,3,1), (3,3,3)$ \\
        $10$ & $(6,2,2), (4,2,2)$ \\
        $11$ & $(5,3,3)$ \\
        $12$ & $(4,4,4)$
    \end{tabular}
    \caption{All possible $\vec{a}$, up to permutations of the vector elements, for $2n=8$. Each entry satisfies the constraints that $a_{12} + a_{13}$, $a_{12}+a_{23},$ and $a_{13}+a_{23}$ are even and less than or equal to $2n$, $a_{12}, a_{13}$, and $a_{23}$ are non-negative, and $a_{12}+a_{13}+a_{23} = m$.}
    \label{tab:vectors}
\end{table}

Clearly, as $n$ grows, the number of possible $\vec{a}$ for which one must evaluate $g(n,\vec{a})$ also grows. However, we can bound this growth as being polynomial in $n$ using some arguments about partitions. Recall that a partition of a positive integer $m$ of size $s$ is a set (i.e., order does not matter) of $s$ positive integers whose sum is $m$. A weak partition of $m$ of size $s$ relaxes the positivity constraint of the set such that it contains $s$ non-negative elements ($m$ is still positive). 

Let $ m := a_{12} + a_{13} + a_{23}$. Then $m \leq 3n$, which follows from the fact that 
\begin{equation}
    2a_{12} + 2a_{13} + 2a_{23} = (a_{12}+a_{13}) + (a_{12}+a_{23}) + (a_{13}+a_{23}) \leq 6n.
\end{equation}
The conditions listed above on $\vec{a}$ imply that each $\vec{a}$ is a weak partition of size $3$ of $m \leq 3n$ that satisfies two further constraints: all $3$ elements of the set must have the same parity as $m$, and no element can be larger than $2n$. 

Now, the number of partitions of $m$ of size at most 3 is $\lfloor (m+3)^{2}/12 \rceil$ \cite{hardyFamousProblemsTheory2011} (note that $\lfloor M \rceil$ refers to the closest integer to $M$). Therefore, the number of partitions of $m$ of size exactly $3$, or $p_{3}(m)$, is bounded by this value, which implies that $\sum_{m=0}^{3n}p_{3}(m) = O(n^{3})$. In turn, the number of $\vec{a}$, up to permutations of the elements of $\vec{a}$, is bounded by $O(n^{3})$ (because they form an even more restricted class of weak permutations). We can overcount for these permutations with a simple constant multiplicative factor of $3!$ (this overcounts because, when numbers are repeated in the partition, there are fewer distinct permutations). Thus, we have a polynomial bound on the number of terms in our recursion at any Fock sector $n$ (note that we could tighten this bound a bit by accounting more precisely for the parity constraint on the elements $\vec{a}$, but, because we are interested only in classical efficiency, this polynomial bound that arises from considering only size-$3$ partitions is sufficient). 

To be sure that the recursion is efficiently computable, however, the actual values of the terms in the recursion must not grow too quickly. In particular, recall that each term $g(n,\vec{a})$ has a polynomial expansion in $k$ of order at most $3n$ (this is the largest number of connected components possible when each one must have at least $2$ vertices). The sum of the coefficients of $g(n,\vec{a})$ is the same as the number of graphs in $\mathbb{G}_{n}^{2}(a_{12},a_{13},a_{23})$, which we derived to be 
\begin{multline}
    |\mathbb{G}_{n}^{2}(a_{12},a_{13},a_{23})| =
    \binom{2n}{a_{12}}\binom{2n-a_{12}}{a_{13}}\binom{2n}{a_{12}}\binom{2n-a_{12}}{a_{23}}\binom{2n}{a_{13}}\binom{2n-a_{13}}{a_{23}}a_{12}!a_{13}!a_{23}!\\
    \times (2n-a_{12}-a_{13}-1)!!(2n-a_{12}-a_{23}-1)!!(2n-a_{13}-a_{23}-1)!! 4^{n}.
\end{multline}
This is, at most, factorially big in $n$, which means that the number of bits needed to store these numbers, and, hence, $g(n,\vec{a})$ is polynomial in $n$.

Therefore, we have a polynomial bound on the number of terms in the recursion, as well as on the space needed to represent each of these terms. Finally, because the actual recursion consists only of polynomial numbers of multiplication and addition, which can each be accomplished in time polynomial in the size of the inputs, the actual computation is efficient.

\section{Building the Recursion}\label{app:recursion}
We now describe precisely how to derive and evaluate the recursion relation \cref{eqn:recursion}, which we copy again here for convenience:
\begin{equation}\label{eqn:app_recursion}
    g(n,a_{12},a_{13},a_{23}) = \sum_{b_{12},b_{13},b_{23}}c(a_{12},a_{13},a_{23},b_{12},b_{13},b_{23})g(n-1,b_{12},b_{13},b_{23}).
\end{equation}
We note that we implement the full recursion \cite{github-lxeb} in both the Julia programming language \cite{Julia-2017} and Mathematica \cite{Mathematica}.
Recall that $g(n,a_{12},a_{13},a_{23})$ is a polynomial in $k$ where the coefficient in front of $k^{i}$ is the number of graphs of type $\vec{a} = (a_{12},a_{13},a_{23})$ that have $i$ connected components. Again, a graph of type $\vec{a}$ has $a_{ij}$ edges that go between rows $i,j$. 

\begin{figure}
    \centering
    \includegraphics[width=\linewidth]{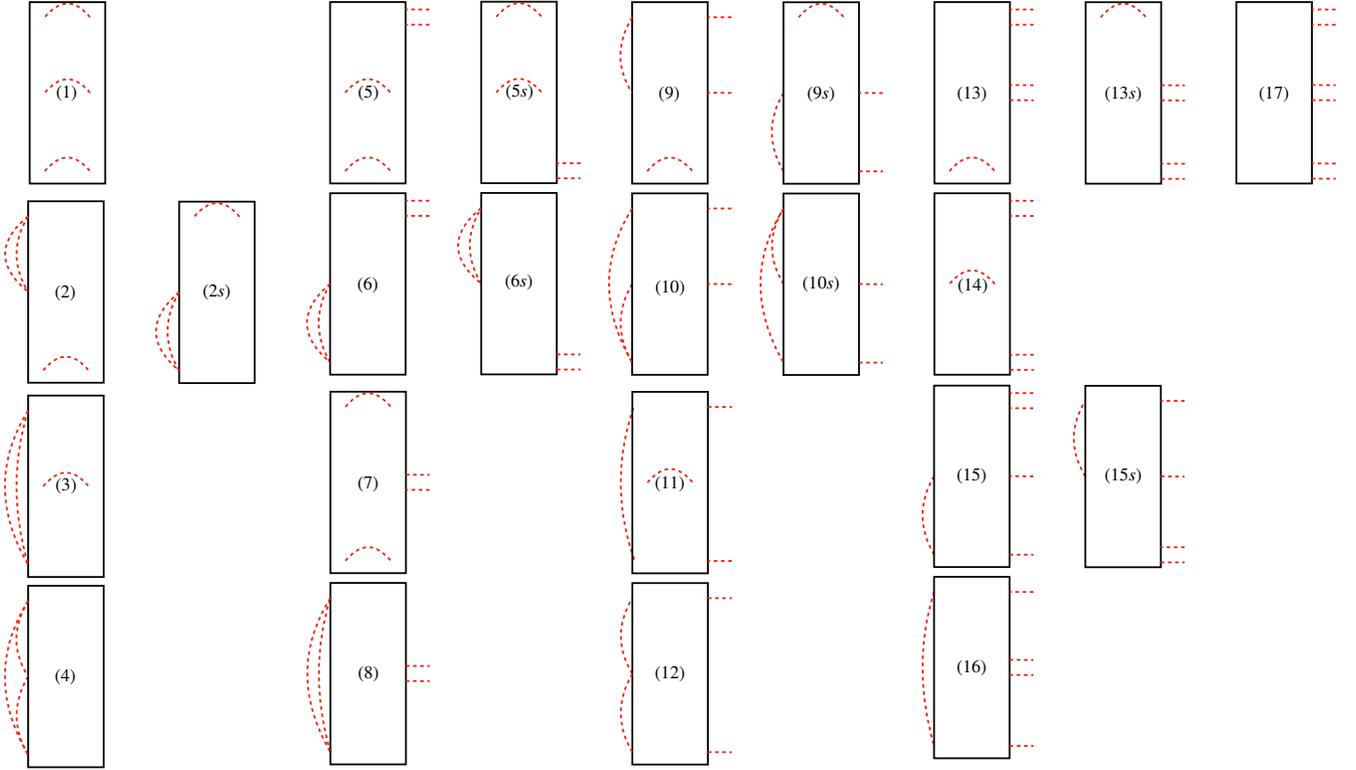}
    \caption{Copy of \cref{fig:cases}. List of $17$ cases (up to symmetry) for how the first two columns in a graph of order $n$ can connect into the rest of the graph. }
    \label{fig:app_cases}
\end{figure}

We first describe the base case, i.e.~ $g(1,a_{12},a_{13},a_{23})$ for all valid vectors $\vec{a} = (a_{12},a_{13},a_{23})$. We then describe how to handle each of the possible $17$ cases that contribute to the recursion that are depicted in \cref{fig:cases}, which is copied again here for convenience. 

The way that we handle each case is as follows. We consider all graphs of order $n$ such that the leftmost two columns, which, recall, we refer to as $\mathbb{C}_{1,2}$, have red edges that correspond to that case. We then ``integrate out'' these edges to determine how to write the contribution of that case at order $n$ in terms of the terms at order $n-1$. When we say integrate out, we mean that we collapse any path that goes through $\mathbb{C}_{1,2}$ into a new edge that remains entirely in the graph of order $n-1$ by collapsing together vertices connected by these paths. In doing this, we must account for three main contributions: (1) how many loops are contained solely within $\mathbb{C}_{1,2}$---each of these loops, of course, leads to a factor of $k$ multiplied by the contribution at order $n-1$; (2) what edges are erased when integrating out the case, as well as what edges are created after collapsing the paths into new edges---this tells us what $\vec{b}$ at lower order contribute to $\vec{a}$ at a higher order; (3) a combinatorial factor accounting for the fact that integrating out $\mathbb{C}_{1,2}$ in multiple graphs at order $n$ could lead to the same graph at order $n-1$, meaning we may need to multiply the contributions at order $n-1$ by something to get the correct final answer. The former loop calculation is usually quite simple, but the latter vectorial and combinatorial calculations require more significant casework. 
 
In the abstract, this is quite complicated, but we explain it more thoroughly through detailed examples as we proceed. We group our analysis of these cases into four categories corresponding to the number of edges, i.e.~$0$, $2$, $4$, or $6$, that protrude from the cases: $(1)$--$(4)$, $(5)$--$(12)$, $(13)$--$(16)$, and $(17)$, respectively. However, as mentioned, we begin with the base cases, to which we turn now. 

\subsection{Base Cases for Recursion}
\label{app:base-case}
Here we calculate the base cases for the recursion; that is, we determine all valid $\vec{a}$ when $n = 1$, construct all graphs with each $\vec{a}$, and count their connected components. Recall that the vector $\vec{a}$ must satisfy non-negativity, pairwise sums being even, and pairwise sums being at most $2n$; should any one of these conditions not be met, then $g(n,\vec{a}) = g(n,a_{12},a_{13},a_{23}) = 0$. For $n = 1$, there are $5$ possible options for $\vec{a}$: $(0,0,0)$, $(2,0,0)$, $(0,2,0)$, $(0,0,2)$, $(1,1,1)$. It remains then to construct the graphs and count their connected components. This is tedious, but the diagrams are shown in \cref{fig:base_1-3,fig:base_4}, and the final results are
\begin{align}
    g(1,0,0,0) &= 2k^{2}+2k, \\
    g(1,2,0,0) &= k^{3} + 3k^{2} + 4k, \\ 
    g(1,0,0,2) &= k^{3} + 3k^{2} + 4k, \\
    g(1,0,2,0) &= 2k^{2} + 6k, \\
    g(1,1,1,1) &= 2k^{3} + 14k^{2} + 16k.
\end{align}
This completes the base cases, and we now move on to the recursion.

\begin{figure}
    \centering
    \includegraphics[width = \linewidth]{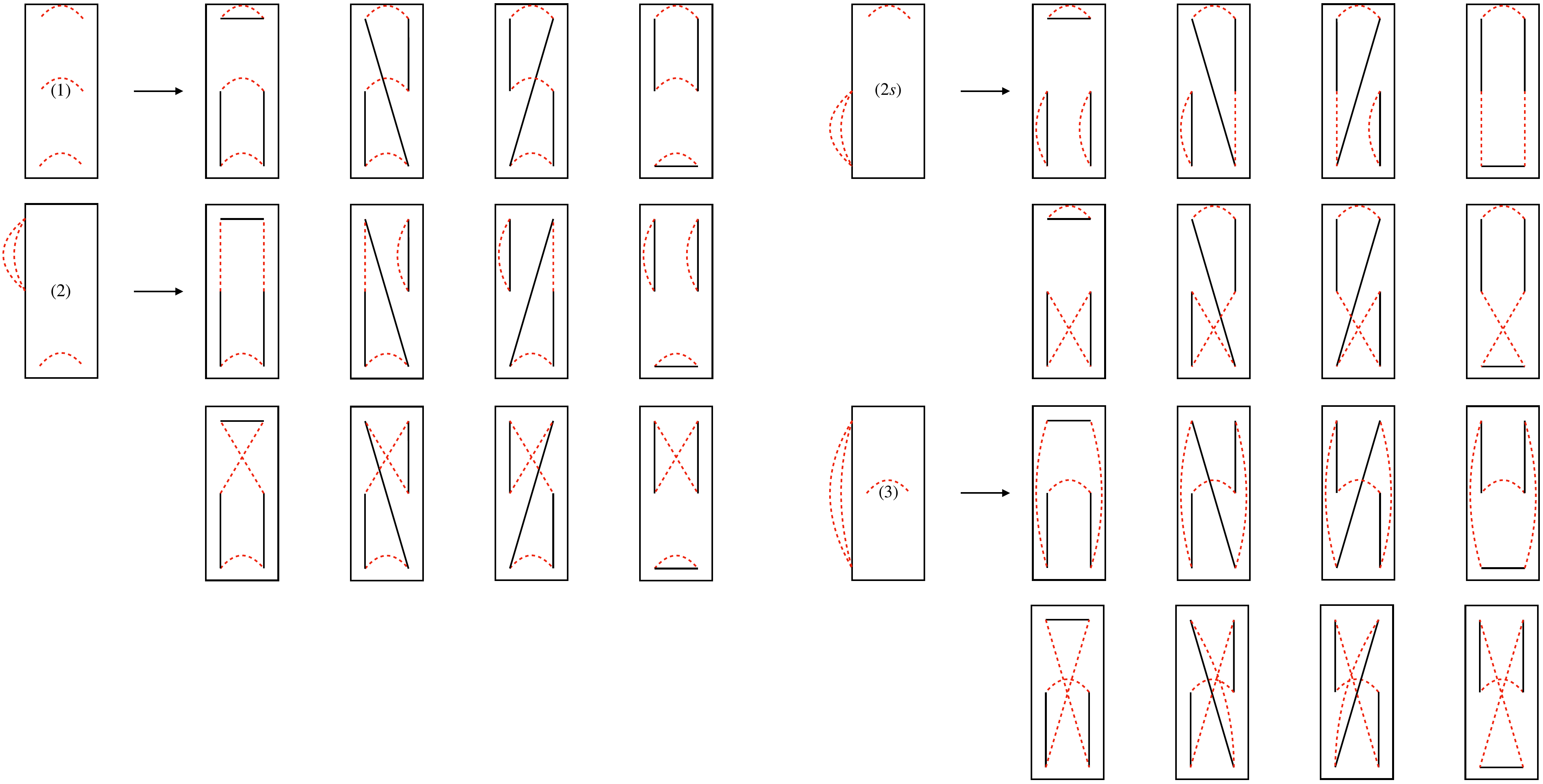}
    \caption{Base cases corresponding to $(1)$, $(2)$, $(2s)$, and $(3)$. Counting the connected components of the graphs in each case yields contributions of $2k^{2}+2k$, $k^{3}+3k^{2}+4k$, $k^{3}+3k^{2}+4k$, and $2k^{2}+6k$, respectively.}
    \label{fig:base_1-3}
\end{figure}

\begin{figure}
    \centering
    \includegraphics[width=\linewidth]{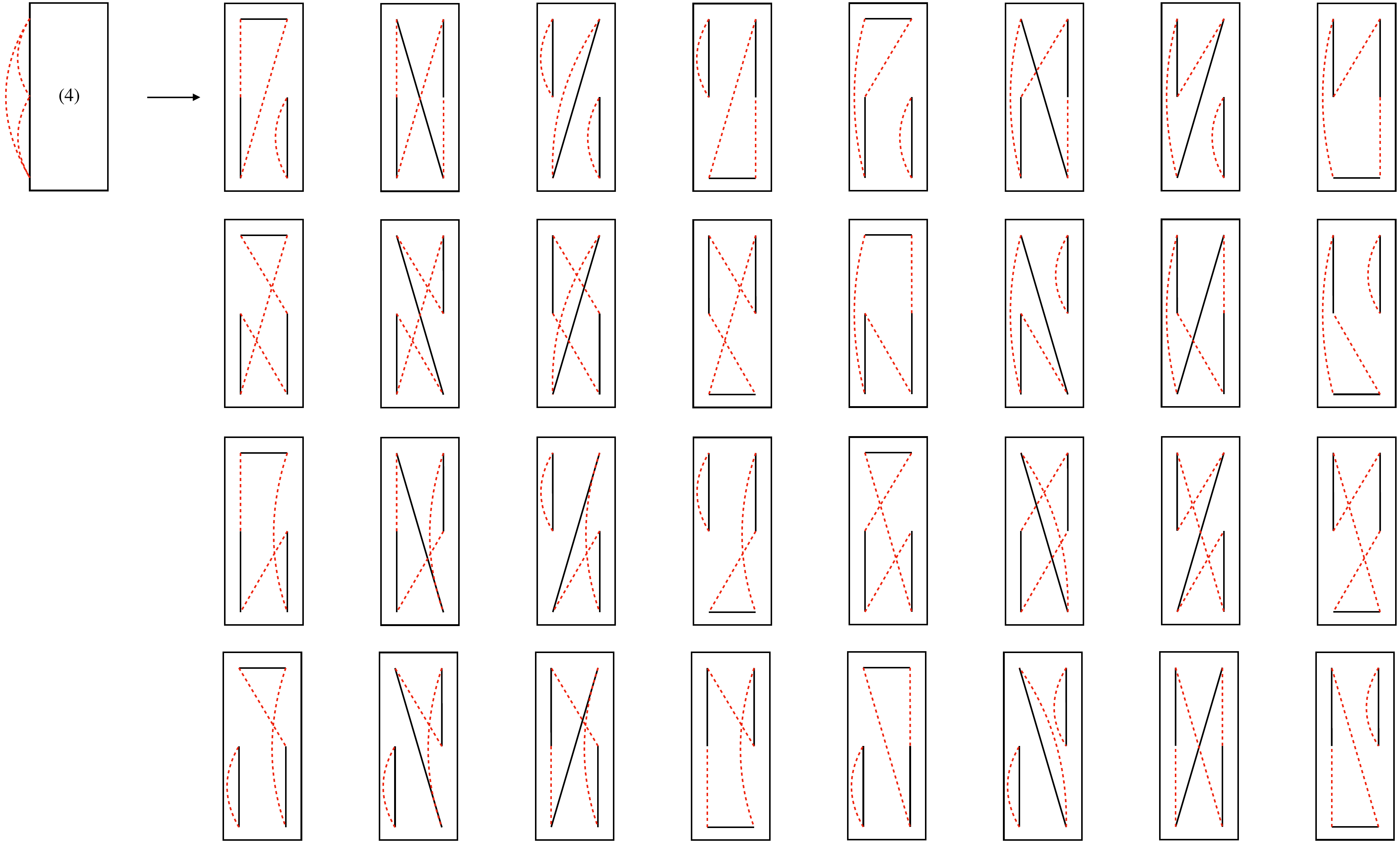}
    \caption{Base case corresponding to $(4)$. Counting the connected components of the graphs in each case yields $2k^{3}+14k^{2}+16k$.}
    \label{fig:base_4}
\end{figure}

\subsection{Cases \texorpdfstring{$(1)$--$(4)$}{(1)-(4)}}
We now handle cases $(1)$--$(4)$. There are no protruding edges, meaning many of the contributions are easy to derive because these cases are ``independent'' of from the lower order graph consisting of the final $n-1$ pairs columns. Therefore, when we integrate out $\mathbb{C}_{1,2}$, none of the paths affect the graph at lower order, meaning it is much simpler to calculate their contribution.

In fact, it is simple to see that the evaluation of the loops mimics exactly the calculation of the base cases:
\begin{align}
    \text{Loop } (1) & \to 2k^{2}+2k,\\
    \text{Loop }  (2) & \to k^{3}+3k^{2}+4k, \\
    \text{Loop }  (2s) & \to k^{3}+3k^{2}+4k, \\
    \text{Loop }  (3) & \to 2k^{2} + 6k, \\
    \text{Loop }  (4) & \to 2k^{3}+14k^{2}+16k. \\
\end{align}

Next, examining the diagrams for each case, one can derive simple relationships between $\vec{a}$ and $\vec{b}$ that yield a nontrivial contribution in \cref{eqn:app_recursion}:
\begin{align}
    \text{Vector } (1) & \to (b_{12}, b_{13}, b_{23}) = (a_{12}, a_{13}, a_{23}), \\
    \text{Vector } (2) & \to (b_{12}, b_{13}, b_{23}) = (a_{12}-2, a_{13}, a_{23}), \\
    \text{Vector } (2s) & \to (b_{12}, b_{13}, b_{23}) = (a_{12}, a_{13}, a_{23}-2), \\
    \text{Vector } (3) & \to (b_{12}, b_{13}, b_{23}) = (a_{12}, a_{13}-2, a_{23}), \\
    \text{Vector } (4) & \to (b_{12}, b_{13}, b_{23}) = (a_{12}-1, a_{13}-1, a_{23}-1). \\
\end{align}
These can be understood by looking at the diagram for each case and observing what kind of edges are eliminated when collapsing all of the paths that pass through the vertices in $\mathbb{C}_{1,2}$.

Finally, there are no combinatorial contributions because there are no protruding edges that have to be connected to the existing graph. That is, any graph that comes from integrating out one of these cases arises uniquely. 

Therefore, we can easily combine everything to get the contributions to the recursion from each of these cases: 
\begin{align}
    g(n,a_{12},a_{13},a_{23})_{\text{case}(1)} &= (2k^{2}+2k)g(n-1,a_{12},a_{13},a_{23}), \\
    g(n,a_{12},a_{13},a_{23})_{\text{case}(2)} &= (k^{3}+3k^{2}+4k)g(n-1,a_{12}-2,a_{13},a_{23}),\\
    g(n,a_{12},a_{13},a_{23})_{\text{case}(2s)} &= (k^{3}+3k^{2}+4k)g(n-1,a_{12},a_{13},a_{23}-2),\\
    g(n,a_{12},a_{13},a_{23})_{\text{case}(3)} &= (2k^{2}+6k)g(n-1,a_{12},a_{13}-2,a_{23}),\\
    g(n,a_{12},a_{13},a_{23})_{\text{case}(4)} &= (2k^{3}+14k^{2}+16k)g(n-1,a_{12}-1,a_{13}-1,a_{23}-1).
\end{align}
Note that we have introduced a notation $g(n,a_{12},a_{13},a_{23})_{\text{case}(i)}$, which simply refers to the contribution to $g(n,a_{12},a_{13},a_{23})$ from graphs where the vertices in $\mathbb{C}_{1,2}$ and their corresponding red edges fall into case $(i)$. That is, $g(n, \bm a) = \sum_{i \in \text{cases}} g(n, \bm a)_{\text{case}(i)}$.

\subsection{Cases \texorpdfstring{$(5)$--$(12)$}{(5)-(12)}}
We now tackle cases $(5)$--$(12)$, which have two edges that protrude and attach to the rest of the graph. Because of these two protruding edges, we have to carefully derive all three of the loop, vectorial, and combinatorial contributions. 

We start with the vectorial contributions, as understanding them allows us to more easily explain and derive the loop and combinatorial contributions. We start by carefully walking through case $(5)$, which contains two edges protruding from the first row. We take an existing graph of order $n$ where $\mathbb{C}_{1,2}$ and the respective red edges match case $(5)$. We then count how the numbers of edges of each type change after collapsing all of the paths that pass through the vertices in $\mathbb{C}_{1,2}$ into edges that lie within the other $2(n-1)$ columns.  

Now, it is crucial to observe the following extremely important fact for \emph{all cases} $(5)$--$(12)$: the two protruding edges are always part of the same path that goes through $\mathbb{C}_{1,2}$, regardless of which of the four types of black edges are present between the vertices in $\mathbb{C}_{1,2}$. Therefore, when $\mathbb{C}_{1,2}$ is integrated out in graphs that match these cases, the edge that is created in the lower order graph is simply given by the two rows upon which those protruding edges are incident. That is, if the protruding edges connected to rows $i$ and $j$, then, after integrating, an edge of type $ij$ is created. 

Now, there are, of course, $6$ types of edges that can be created by collapsing a path: $11$, $22$, $33$, $12$, $13$, and $23$. However, it is somewhat convenient to actually describe 9 possible edges, $11$, $22$, $33$, $12$, $13$, $23$, $21$, $31$, and $32$. The last three are equivalent to $12$, $13$, and $23$ edges, respectively, but we order the edges in this way to account for the two possible ways that the protruding edges can connect into the graph (that is, \emph{which} edge connects to row $i$ or $j$, for example). Note that this separation is extraneous for certain cases, i.e.~those with two edges protruding from the \emph{same} row, but it is useful when considering cases with edges protruding from different rows. 

To determine the vector contribution for a graph of order $n$ with $a_{12}$, $a_{13}$, and $a_{23}$ edges, we consider what edges $b_{12}$, $b_{13}$, and $b_{23}$ on the graph of order $n-1$ remain after integrating out $\mathbb{C}_{1,2}$. Case $(5)$ has two protruding edges coming from the first row, and then additional red edges of type $22$ and $33$. These $22$ and $33$ edges do not change the $12$, $13$, or $23$ edge counts. Therefore, the only changes come from the collapse of the path associated with the two protruding edges from row $1$. 

Let us say that these two protruding edges are originally incident on rows $2$ and $3$. In this example, this means that when integrating out $\mathbb{C}_{1,2}$, we lose one edge of type $12$ and one of type $13$, but we \emph{create} one of type $23$. Therefore, we must have that $b_{12} = a_{12} - 1$, $b_{13} = a_{13}-1$, and $b_{23} = a_{23}+1$. Or, if we define $\Delta_{ij} := b_{ij}-a_{ij}$, then $(\Delta_{12}, \Delta_{13}, \Delta_{23}) = (-1,-1,+1)$. We then consider all possible vertices that these two protruding edges could have been connected to in the remainder of the graph, and that defines all possible $g(n-1, b_{12}, b_{13}, b_{23})$ that can contribute to $g(n,a_{12},a_{13},a_{23})_{\text{case}(5)}$.

Now, we must also consider some combinatorial factors $\mathcal{C}$. The combinatorial factors are really just a shorthand for determining how many times a contribution $g(n-1,b_{12},b_{13},b_{23})$ shows up when integrating out a given case, here case $(5)$, from all the relevant graphs of order $n$. This is because different graphs at order $n$, when appropriately collapsed, lead to the same graph at order $n-1$. The combinatorial factor, then, is just a way of encoding this information. 

Say that we are again considering an example where the original protruding edges attach to vertices in rows $2$ and $3$. Then an edge of type $23$ is created. But if we look from the perspective of the lower order graph, \emph{any} of the $23$ edges could have been the one that was generated---that is, for some graph of order $n$ with case $(5)$ integrated out, a different $23$ edge that is present is the one generated. Therefore, when we sum up all the contribution from integrating out case $(5)$ over all relevant graphs of order $n$, we get a factor of $b_{23}$. Note also that, as we derived above, $b_{23} = a_{23}+1$. Also note that, were we looking at protruding edges attached to the same row, we would get an additional factor of $2$ due to the ambiguity of which edge attaches to which endpoint.

Finally, we consider the loop contribution. The calculation for case $(5)$ is a relatively straightforward diagrammatic proof, which is detailed in \cref{fig:case_5_loop}. In short, we draw all possible diagrams consistent with case $(5)$ and count up the loops that are induced. There are only four cases, as the red edges are essentially fixed and there are four possible sets of black edges. The result is a factor $2k+2$. That is, there are two sets of black edges that lead to an internal loop, leading to an extra factor of $k$, and there are two sets of black edges where the protruding edges snake through all vertices in $\mathbb{C}_{1,2}$ such that collapsing them just leads to a graph of order $n-1$ without any extra loop factors.
\begin{figure}[ht!]
    \centering
    \includegraphics[width=\linewidth]{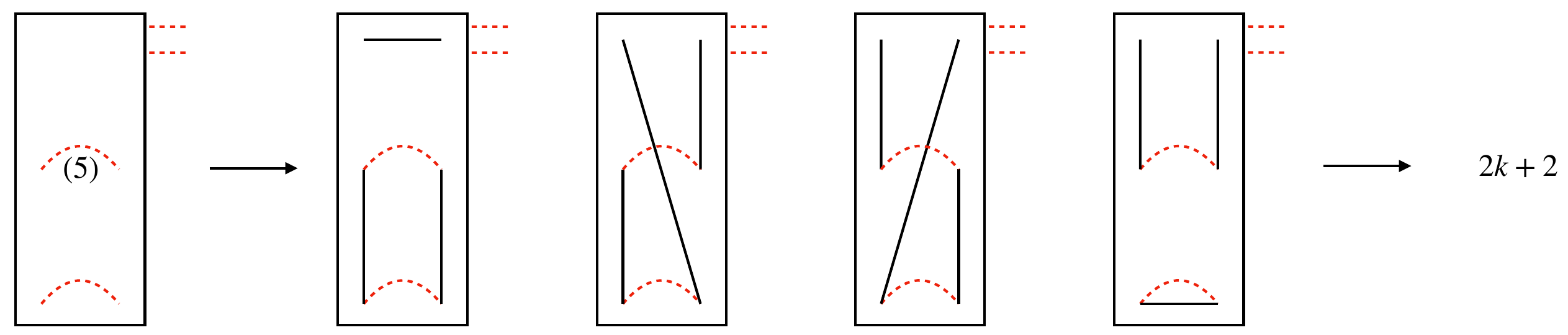}
    \caption{Loop contribution for case $(5)$.}
    \label{fig:case_5_loop}
\end{figure}

So, putting all of the information together, we have that a full contribution from case $(5)$ is
\begin{equation}\label{eqn:case_5_recursion}
\begin{split}
    g(n,a_{12},a_{13},a_{23})_{\textrm{case}(5)} &= (2k+2)[(2b_{11}+2b_{12}+2b_{13})g(n-1,a_{12},a_{13},a_{23})\\
    &\quad + 2b_{22}g(n-1,a_{12}-2,a_{13},a_{23}) \\
    &\quad + 2b_{23}g(n-1,a_{12}-1,a_{13}-1,a_{23}+1) \\
    &\quad +2b_{33}g(n-1,a_{12},a_{13}-2,a_{23})]
    \\
    &=(2k+2)[ (2(n-1)+a_{12}+a_{13})g(n-1,a_{12},a_{13},a_{23}) \\
    &\quad + (2(n-1)-(a_{12}-2)-a_{23})g(n-1,a_{12}-2,a_{13},a_{23}) \\
    &\quad + 2(a_{23}+1)g(n-1,a_{12}-1,a_{13}-1,a_{23}+1) \\
    &\quad + (2(n-1)-(a_{13}-2)-a_{23})g(n-1,a_{12},a_{13}-2,a_{23})].
\end{split}
\end{equation}
This includes the loop, combinatorial, and vectorial factors. We also note that, should any of the combinatorial factors actually be negative, they should be set to 0, as that indicates that the graph that is constructed at lower order when integrating out the given case does not really exist (this is also handled by the vector input to $g$ being negative---that is, one of the edge counts $b_{12},b_{13},b_{23}$ is negative). One can get the contribution from case $(5s)$ by simply mapping $1\leftrightarrow3$. 

We list the combinatorial and vectorial contributions for cases $(5)$--$(8)$ in \cref{tab:cases_5-8} and cases $(9)$--$(12)$ in \cref{tab:cases_9-12} (the main difference in the latter cases is that there is no longer a symmetry between red edges attaching to vertices $ij$ and $ji$ because, by convention, we attach the top protruding edge to the vertex in row $i$ and the bottom protruding edge to the vertex in row $j$, which gives us different types of new edges, generically). The first column of these tables gives what kind of edge is created at order $n-1$. The second column tells us the combinatorial factor. The next four multicolumns give the vector information for each of the cases. Note that we do not give the symmetric cases, as they can be obtained by simply mapping $1\leftrightarrow3$.  

\begin{table}[ht!]
\centering
\def\arraystretch{1.2}
\begin{tabular}{cc|ccc|ccc|ccc|ccc|}
            &           & \multicolumn{3}{c|}{\includegraphics[height=0.2\textwidth]{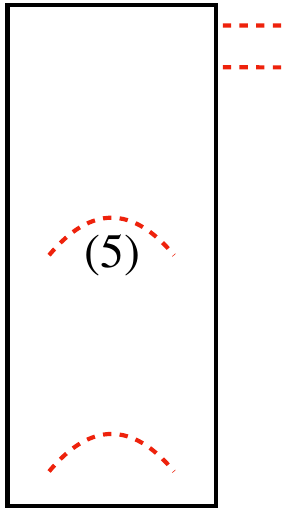}} & \multicolumn{3}{c|}{\includegraphics[height=0.2\textwidth]{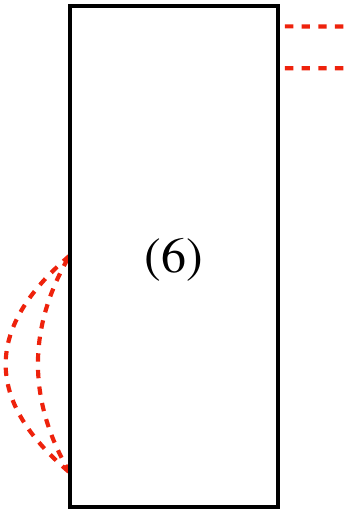}} & \multicolumn{3}{c|}{\includegraphics[height=0.2\textwidth]{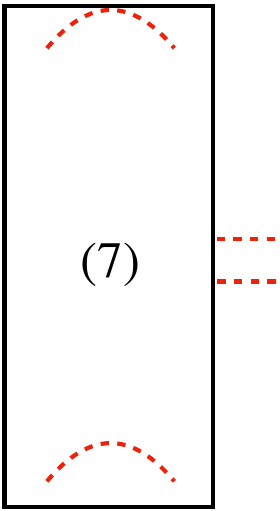}} & \multicolumn{3}{c|}{\includegraphics[height=0.2\textwidth]{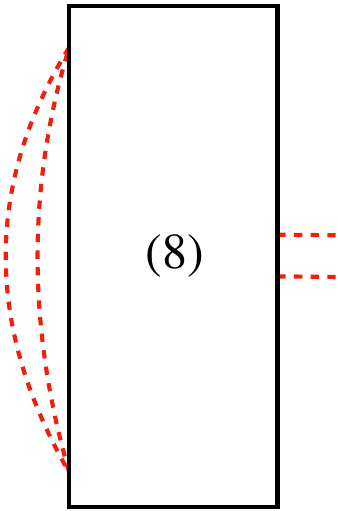}} \\ \cline{3-14} 
Protruding Endpoints & $\mathcal{C}$       & $\Delta_{12}$  & $\Delta_{13}$ & $\Delta_{23}$ & $\Delta_{12}$  & $\Delta_{13}$ & $\Delta_{23}$ & $\Delta_{12}$  & $\Delta_{13}$ & $\Delta_{23}$ & $\Delta_{12}$  & $\Delta_{13}$ & $\Delta_{23}$ \\ \cline{3-14} 
11          & $2b_{11}$ & 0         & 0        & 0        & 0         & 0        & -2       & -2        & 0        & 0        & -2        & -2       & 0        \\
12          & $b_{12}$  & 0         & 0        & 0        & 0         & 0        & -2       & 0         & 0        & 0        & 0         & -2       & 0        \\
13          & $b_{13}$  & 0         & 0        & 0        & 0         & 0        & -2       & -1        & +1       & -1       & -1        & -1       & -1       \\
21          & $b_{12}$  & 0         & 0        & 0        & 0         & 0        & -2       & 0         & 0        & 0        & 0         & -2       & 0        \\
22          & $2b_{22}$ & -2        & 0        & 0        & -2        & 0        & -2       & 0         & 0        & 0        & 0         & -2       & 0        \\
23          & $b_{23}$  & -1        & -1       & +1       & -1        & -1       & -1       & 0         & 0        & 0        & 0         & -2       & 0        \\
31          & $b_{13}$  & 0         & 0        & 0        & 0         & 0        & -2       & -1        & +1       & -1       & -1        & -1       & -1       \\
32          & $b_{23}$  & -1        & -1       & +1       & -1        & -1       & -1       & 0         & 0        & 0        & 0         & -2       & 0        \\
33          & $2b_{33}$ & 0         & -2       & 0        & 0         & -2       & -2       & 0         & 0        & -2        & 0         & -2       & -2      
\end{tabular}
\caption{Information for vectorial and combinatorial contributions to cases $(5)$--$(8)$. Observe that there is a symmetry when the endpoints of the protruding edges are $ij$ and $ji$. Also observe that, when the endpoints are the same, i.e.~$ii$, there is an extra factor of $2$ in the combinatorial term because of the ambiguity between how the protruding edges originally attach.}
\label{tab:cases_5-8} 
\end{table}
\begin{table}[ht!]
\def\arraystretch{1.05}
\centering
\begin{tabular}{cc|ccc|ccc|ccc|ccc|}
            &           & \multicolumn{3}{c|}{\includegraphics[height=0.2\textwidth]{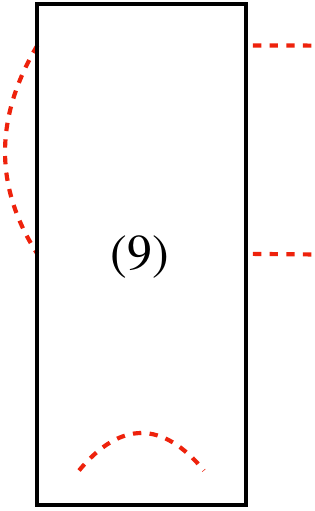}} & \multicolumn{3}{c|}{\includegraphics[height=0.2\textwidth]{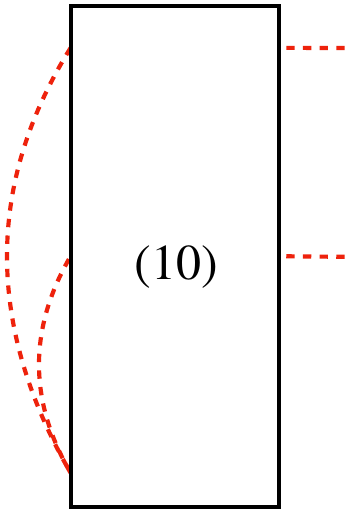}} & \multicolumn{3}{c|}{\includegraphics[height=0.2\textwidth]{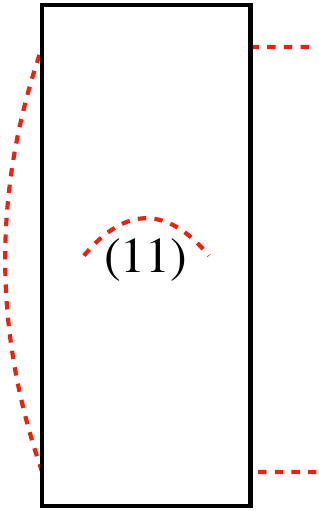}} & \multicolumn{3}{c|}{\includegraphics[height=0.2\textwidth]{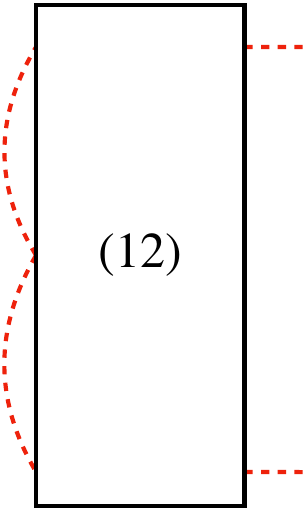}} \\ \cline{3-14} 
Protruding Endpoints & $\mathcal{C}$       & $\Delta_{12}$  & $\Delta_{13}$ & $\Delta_{23}$ & $\Delta_{12}$  & $\Delta_{13}$ & $\Delta_{23}$ & $\Delta_{12}$  & $\Delta_{13}$ & $\Delta_{23}$ & $\Delta_{12}$  & $\Delta_{13}$ & $\Delta_{23}$ \\ \cline{3-14} 
11          & $2b_{11}$ & -2        & 0        & 0        & -1        & -1       & -1       & 0         & -2       & 0        & -1        & -1       & -1       \\
12          & $b_{12}$  & 0         & 0        & 0        & +1        & -1       & -1       & +1        & -1       & -1       & 0         & 0        & -2        \\
13          & $b_{13}$  & -1        & +1       & -1       & 0         & 0        & -2       & 0         & 0        & 0        & -1        & +1       & -1        \\
21          & $b_{12}$  & -2        & 0        & 0        & -1        & -1       & -1       & 0         & -2       & 0        & -1        & -1       & -1       \\
22          & $2b_{22}$ & -2        & 0        & 0        & -1        & -1       & -1       & -1        & -1       & -1       & -2        & 0        & -2        \\
23          & $b_{23}$  & -2        & -0       & 0       & -1        & -1       & -1       & -1        & -1       & +1       & -2        & 0        & 0        \\
31          & $b_{13}$  & -2        & 0        & 0        & -1        & -1       & -1       & 0         & -2       & 0        & -1        & -1       & -1       \\
32          & $b_{23}$  & -1        & -1       & +1       & 0         & -2       & 0        & 0         & -2       & 0        & -1        & -1       & -1        \\
33          & $2b_{33}$ & -1        & -1       & -1       & 0         & -2       & -2       & 0         & -2       & 0        & -1        & -1       & -1      
\end{tabular}
\caption{Information for vectorial and combinatorial contributions to cases $(9)$--$(12)$. Observe that there is no longer a symmetry between $ij$ and $ji$, but the $ii$ cases still have an extra factor of $2$ in the combinatorial term because the ambiguity between how the protruding edges originally attach still exists.}
\label{tab:cases_9-12} 
\end{table}

We also provide the loop contributions for cases $(5)$--$(12)$ in \cref{tab:loops_(5)-(12)}. These are derived in an analogous way to the diagrammatic approach in \cref{fig:case_5_loop}, but there are many more graphs to consider. 
\begin{table}[ht!]
    \def\arraystretch{1.05}
    \centering
    \begin{tabular}{c|c}
         Case & Loop contribution \\
         \hline
         $(5)$ & $2k + 2$  \\
         $(5s)$ & $2k + 2$  \\
         $(6)$ & $k^{2}+3k+4$ \\
         $(6s)$ & $k^{2}+3k+4$ \\
         $(7)$ & $2k + 2$  \\
         $(8)$ & $2k + 6$ \\
         $(9)$ & $2k^{2} + 6k + 8$  \\
         $(9s)$ & $2k^{2} + 6k + 8$  \\
         $(10)$ & $2k^{2}+14k+16$ \\
         $(10s)$ & $2k^{2}+14k+16$ \\
         $(11)$ & $4k + 12$  \\
         $(12)$ & $2k^{2}+14k+16$ \\
    \end{tabular}
    \caption{Loop contributions for each of the cases $(5)$--$(12)$. Notice that symmetric versions of cases have the same loop contribution; only their vectorial and combinatorial contributions are different.}
    \label{tab:loops_(5)-(12)}
\end{table}
Therefore, using all of this information, we can derive an equivalent version of \cref{eqn:case_5_recursion} for each case up to $(12)$ (including the symmetric ones), accounting for all of their contributions. 

\subsection{Cases \texorpdfstring{$(13)$--$(16)$}{(13)-(16)}}
We now move on to more complicated cases that have four protruding edges. 
The vectorial contribution is more difficult to calculate, as we must account for $3^{4} = 81$ possibilities for how the protruding edges attach to the lower order graph. Furthermore, there is more interaction between the vectorial, combinatorial, and loop terms. This did not occur in the previous sets of cases because the protruding edges were always part of the same path through the black edges attached to the vertices in $\mathbb{C}_{1,2}$.
However, one must now keep track of which protruding edges connect to one another through the vertices in $\mathbb{C}_{1,2}$. 

For example, we look at the possibilities for case $(13)$, shown in \cref{fig:case_13}. 
\begin{figure}[ht!]
    \centering
    \includegraphics[width=\linewidth]{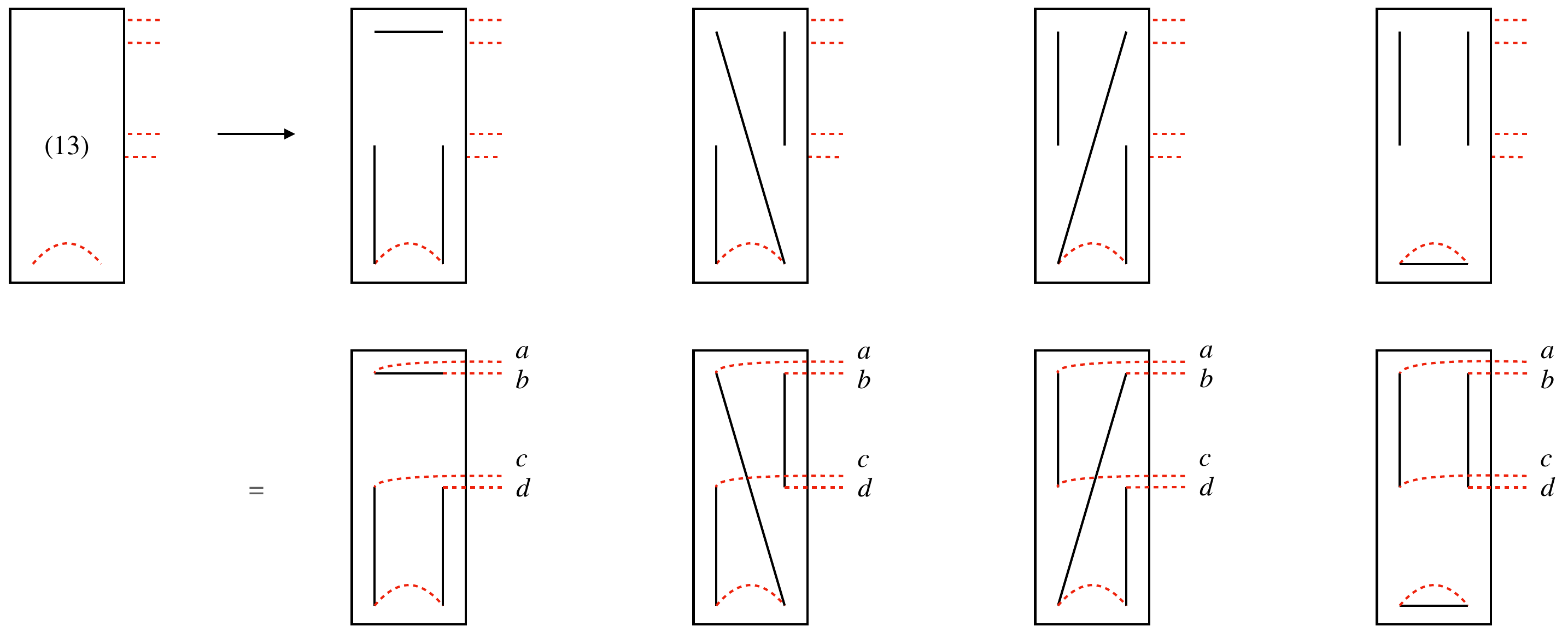}
    \caption{Evaluation of case $(13)$. By convention, we take the top left vertex to row $a$, the top right vertex to row $b$, the middle left vertex to row $c$, and the middle right vertex to row $d$, where $a,b,c,d\in\{1,2,3\}$. The types of edges that are created after integrating out the two leftmost columns are determined by the type of the black edges.}
    \label{fig:case_13}
\end{figure}
By convention, we take the top left vertex to row $a$, the top right vertex to row $b$, the middle left vertex to row $c$, and the middle right vertex to row $d$, where $a,b,c,d\in\{1,2,3\}$. We see that, when the black edges attached to the vertices in $\mathbb{C}_{1,2}$ are type-1, then the red edges that protrude from the top row are connected to one another, which means that one generates an edge of type $ab$ when collapsing this path. However, if the black edges associated with $\mathbb{C}_{1,2}$ are type-2, then it is instead $ac$ and $bd$ that are connected. In total, one of the possible types of black edges connect edges $ab$ and $cd$, and three connect $ac$ and $bd$. In the case where $ab$ and $cd$ are connected, this means that we generate edges of type $ab$ and $cd$ but we lose edges of type $1a, 1b, 2c, 2d$. When $ac$ and $bd$ are connected, we of course gain edges of type $ac$ and $bd$, but we still lose edges of type $1a, 1b, 2c, 2d$. We use these observations to build up the vectorial contribution of the graph by summing over all 81 possibilities of $a,b,c,d\in\{1,2,3\}$. This is tedious to do by hand, but simple numerically. 

We need also account for the loop and combinatorial factors that associate to each of these vectorial contributions. Luckily, we do not need to consider 81 cases parameterized by $a,b,c,d$, but we must consider each of the subcases defined by the four possible sets of black edges in connecting the vertices in $\mathbb{C}_{1,2}$. Loop-wise, we simply need to count how many loops are induced. Working from the left to right in \cref{fig:case_13}, we get $0,0,0,1$ loops, respectively, leading to factors of $1,1,1,k$, respectively. The combinatorial factor is given by
\begin{equation}\label{eqn:combinatorial_13-16}
 2^{\delta_{ab}}2^{\delta_{cd}}\bkt*{(\delta_{ac}\delta_{bd} + \delta_{ad}\delta_{bc}-\delta_{abcd})2\binom{b_{ab}}{2} + (1-(\delta_{ac}\delta_{bd} + \delta_{ad}\delta_{bc}-\delta_{abcd}))b_{ab}b_{cd}}
\end{equation}
in the case where edges $ab$ and $cd$ are connected. If instead $ac$ and $bd$ are connected, we replace each instance of $ab$ and $cd$ with $ac$ and $bd$, respectively. We then again account for all 81 cases and attach each combinatorial factor and loop factor to its associated vectorial term. 

To understand \cref{eqn:combinatorial_13-16}, consider the following, where we assume we are dealing with type-1 black edges so that we are creating edges $ab$ and $cd$. We get a factor of $2$ when $a$ and $b$ are the same because they correspond to protruding edges coming from the same row, meaning there is a choice of which edge to connect where. The same holds for $c$ and $d$. If all four edges connect to the same row, i.e.~$a=b=c=d$, then one might naively think we need to add an extra factor of $6$ (to get to a total of $4!$ possible connections), but this is incorrect, as $ab$ and $cd$ are always paired given their connection through case $(13)$ with black edges of type-1. Now, if $a=c$ and $b=d$ or $a=d$ and $b=c$, then the two edges $ab$ and $cd$ are the same type, meaning we are creating two edges of the same type in the graph of order $n-1$. There are therefore $\binom{b_{ab}}{2}$ choices of which edges these are in the lower order graph, but we also need an extra factor of 2 to decide which one the groups of protruding edges each maps to. If $ab$ and $cd$ correspond to different types of edges, then we just get a factor of $b_{ab}b_{cd}$, as we simply need to account for which of these edges are generated through the integration process. 

Therefore, we see that cases $(13)$--$(16)$ raise substantially more complications in their evaluation. In particular, the type of black edges leads to far more interaction between the loop, vectorial, and combinatorial contributions that must be carefully combined in code to achieve the correct recursion. While we have only described case $(13)$ in detail, cases $(14)$--$(16)$ follow in the exact same manner, though there are more graphs to consider in the cases where two rows have only one protruding edge. 

\subsection{Case \texorpdfstring{$(17)$}{(17)}}
Case $(17)$ raises the same issues, though there are only four graphs to consider. However, we have $243 = 3^{6}$ possible options for how the protruding edges may connect to the graph at lower order (this is true in general, but not all of these are possible when $n$ is small). See \cref{fig:case_17}. 
\begin{figure}[ht!]
    \centering
    \includegraphics[width=\linewidth]{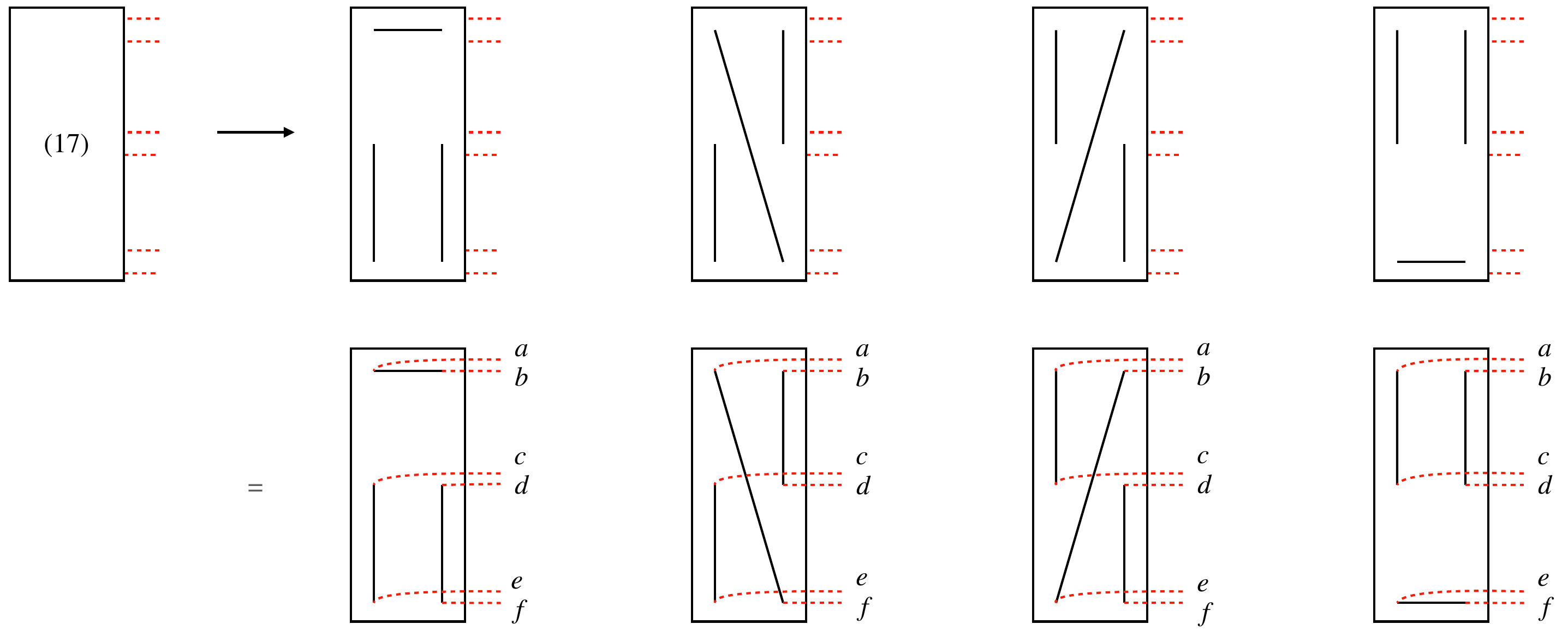}
    \caption{Evaluation of case $(17)$. We repeat the convention for cases $(13)$--$(16)$ by taking the top left vertex to row $a$, the top right vertex to row $b$, the middle left vertex to row $c$, and the middle right vertex to row $d$, but we now also take the bottom left to $e$ and the bottom right to $f$, where $a,b,c,d,e,f\in\{1,2,3\}$. The types of edges that are created are still determined by the type of the black edges.}
    \label{fig:case_17}
\end{figure}
We repeat the convention for cases $(13)$--$(16)$ by taking the top left vertex to row $a$, the top right vertex to row $b$, the middle left vertex to row $c$, and the middle right vertex to row $d$, but we now also take the bottom left to $e$ and the bottom right to $f$, where $a,b,c,d,e,f\in\{1,2,3\}$. Now, for type-1 black edges, we create $ab$, $ce$, and $df$; for type-2, it is $af$, $bd$, and $ce$; for type-3 it is $ac$, $be$, and $df$; and for type-4 it is $ac$, $bd$, and $ef$. We always lose edges of type $1a, 1b, 2c, 2d, 3e, 3f$ regardless of the type of the black edges. Furthermore, the loop contribution is always a factor of 1, as there are no internal loops to case $(17)$. 

The combinatorial factor, however, is quite complicated. Assume for now that we are working with type-1 black edges such that $ab$, $ce$, and $df$ are linked. The combinatorial factor is 
\begin{align}
    &(2^{\delta_{ab}})^{3}3!\binom{b_{ab}}{3} & \times 1\bkt*{\{a,b\} = \{c,e\} = \{d,f\}} \label{eqn:17_comb_1}\\
    +&2^{\delta_{ab}}2^{\delta_{ce}}\times2\binom{b_{ab}}{2}\times2^{\delta_{df}}b_{df} & \times 1\bkt*{\{a,b\} = \{c,e\} \neq \{d,f\}} \label{eqn:17_comb_2}\\
    +&2^{\delta_{ab}}2^{\delta_{df}}\times2\binom{b_{ab}}{2}\times2^{\delta_{ce}}b_{ce} & \times 1\bkt*{\{a,b\} = \{d,f\} \neq \{c,e\}} \label{eqn:17_comb_3}\\
    +&2^{\delta_{ce}}2^{\delta_{df}}\times2\binom{b_{ce}}{2}\times2^{\delta_{ab}}b_{ab} & \times 1\bkt*{\{a,b\} \neq \{c,e\} = \{d,f\}} \label{eqn:17_comb_4}\\
    +&2^{\delta_{ab}}2^{\delta_{ce}}2^{\delta_{df}}b_{ab}b_{ce}b_{df} &\times 1\bkt*{\{a,b\} \neq \{c,e\} \neq \{d,f\}}\label{eqn:17_comb_5}.
\end{align}
Here, $1[\text{A}]$ is an indicator function that is $1$ if statement A is true and $0$ if it is false. For example, $1[\{a,b\} = \{c,e\}, \{d,f\}]$ is $1$ if $\{a,b\}$, $\{c,e\}$, and $\{d,f\}$ are all equal as sets (that is, order does not matter). The middle three lines [\cref{eqn:17_comb_2,eqn:17_comb_3,eqn:17_comb_4}] are just repetitions of the combinatorial factors for cases $(13)-(16)$, but accounting for which sets of four edges may be sent to the same row. The last line [\cref{eqn:17_comb_5}] is simple and accounts for the case where all of the edge types $ab, ce, df$ are different. The first line [\cref{eqn:17_comb_1}] requires a bit of explanation. In the case where $a \neq b$, we simply have to choose three edges of type $ab$ where the order matters (they each could have been created by integrating out different graphs at a higher order). In the case where $a = b$, this is still the case, but now we need a factor of $2$ for each edge, as we can flip which vertices are connected where.

Again, it is hard to account for all of these elements by hand, but it is simple numerically. With this final case sorted out, we simply combine contributions of all of the cases $g(n,\vec{a})_{\textrm{case}(i)}$ to find $g(n,\vec{a})$.

\section{Computing Individual Coefficients}\label{app:individual_coefficients}
In this appendix, we discuss the various methods by which one can compute individual coefficients in the polynomial expansion of the second moment. Recall that, per \cref{thm:second_moment}, the second moment may be expanded as
\begin{equation}
    M_{2}(k,n) = (2n-1)!!\sum_{i=1}^{2n} c_{i}k^{i}.
\end{equation}
Ideally, one would simply be able to find a closed functional form for the right-hand side of this equation (as was possible for the equivalent definition of the first moment). But, unfortunately, such a result currently eludes us. Therefore, the best we can do is find individual coefficients. We now discuss methods of calculating $c_{2n}$ and $c_{2n-1}$.

\subsection{Leading Order Coefficient \texorpdfstring{$c_{2n}$}{c2n}}\label{sec:c2n}
We begin with the leading order coefficient $c_{2n}$. Recall that \cref{lemma:k}(ii) gives that $c_{2n} = (2n)!!$. The proof of this lemma is contained in the companion text Ref.~\cite{ehrenbergTransitionAnticoncentrationGaussian2023}, and we briefly describe that proof. However, we also provide a second technique for understanding the result that is useful to understanding the proof of the first sub-leading order term $c_{2n-1}$. 

Recall that, in order for a graph in $\mathbb{G}_{n}^{2} = \mathbb{G}_{n}^{2}(0,0,0)$ to have $2n$ connected components, it must possess only type-1 and type-4 black edges. The two vertices connected by each horizontal black edge must also be connected by a red edge to form a 2-vertex connected component. The remaining vertical edges from the type-1 sets of black edges are then paired off (i.e., connected via horizontal red edges) into 4-vertex connected components, and the same holds for black vertical edges from type-4 sets. This leads to $2n$ total connected components. The original proof that the total number of graphs satisfying these constraints is $(2n)!!$ proceeds by reducing these graphs to ones in $\mathbb{G}_{n}^{1}$ and then counting them (with a weight given by the number of connected components). This is evaluated by using the equation for the first moment in \cref{thm:first_moment}. See the companion piece Ref.~\cite{ehrenbergTransitionAnticoncentrationGaussian2023} for more explicit details. 

Another way to compute this coefficient is by making a combinatorial argument. As discussed, $c_{2n}$ contains contributions only from graphs that possess solely type-1 and type-4 sets of black edges. Again, in order to create the maximal number of connected components, the horizontal black edges must also be connected by red edges to create a size-2 connected component. The remaining type-1 vertical black edges are paired off, and the type-4 vertical black edges are similarly paired off. So, for a graph of order $n$, say that there are $p$ sets of type-1 black edges and, therefore, $n-p$ sets of type-4 black edges. There are $\binom{n}{p}$ sets of black edges with this type distribution. There are then $(2p-1)!!$ ways to pair off the $2p$ vertical type-1 black edges, and $(2n-2p-1)!!$ ways to pair off the $2n-2p$ vertical type-4 black edges. Therefore, summing over $p \in \{0, 1, \dots, n\}$, we get that
\begin{equation}
    c_{2n} = \sum_{p=0}^{n} \binom{n}{p}(2p-1)!!(2n-2p-1)!!.
\end{equation}
We can massage the right-hand side a bit using the fact that $(2x-1)!! = (2x)!/(2x)!! = (2x)!/(2^{x}x!)$. Expanding out the binomial coefficient and converting all terms to single factorials yields
\begin{equation}
    c_{2n} = \frac{n!}{2^{n}}\sum_{p=0}^{n} \binom{2p}{p}\binom{2n-2p}{n-p}.
\end{equation}
The summation evaluates to $4^{n}$ using the convolution of the Taylor series for $(1-4x)^{-1/2}$ \cite{stackexchange_binomial_summation}. Therefore, 
\begin{equation}
    c_{2n} = 2^{n}n! = (2n)!!,
\end{equation}
which, of course, matches the known result. 

\subsection{First Subleading Coefficient \texorpdfstring{$c_{2n-1}$}{c2n-1}}\label{sec:c2n-1}
We now generalize the above combinatorial version of the $c_{2n}$ calculation to $c_{2n-1}$. It is slightly more complicated, as there is a bit of casework to consider, but the general idea is the same. In particular, the key idea is that because $2n$ is the maximal number of connected components, finding a graph with $2n-1$ connected components comes down to counting the ways that one can create a ``deficit'' of exactly one connected component from the maximal number. There are nine ways to accomplish this. 

First, consider starting with graphs with a maximal number of connected components, meaning, as per \cref{sec:c2n}, they have only type-1 and type-4 black edges. The connected components have either 2 vertices (red and black edge between 2 vertices in the same row) or 4 (two vertical black edges of the same type that are paired off via red edges). We refer to these as type-$x$ 2-vertex and 4-vertex connected components, respectively (where $x$ is either 1 or 4).  One can convert these graphs with maximal connected components into graphs with a deficit of a single connected component in the following ways, all of which involve merging two connected components into a single one:
\begin{enumerate}[(1):]
    \item merge two type-1 2-vertex connected components;
    \item merge two type-4 2-vertex connected components;
    \item merge one type-1 2-vertex connected component with one type-4 4-vertex connected component;
    \item merge one type-4 2-vertex connected component with one type-1 4-vertex connected component;
    \item merge two type-1 4-vertex connected components;
    \item merge two type-4 4-vertex connected components;
    \item merge one type-1 4-vertex connected component with one type-4 4-vertex connected component.
\end{enumerate}
These options are visualized (up to the symmetry of exchanging the roles of type-1 and type-4 edges) in \cref{fig:c2n-1_1-7}.
\begin{figure}[ht!]
    \centering
    \includegraphics[width=\linewidth]{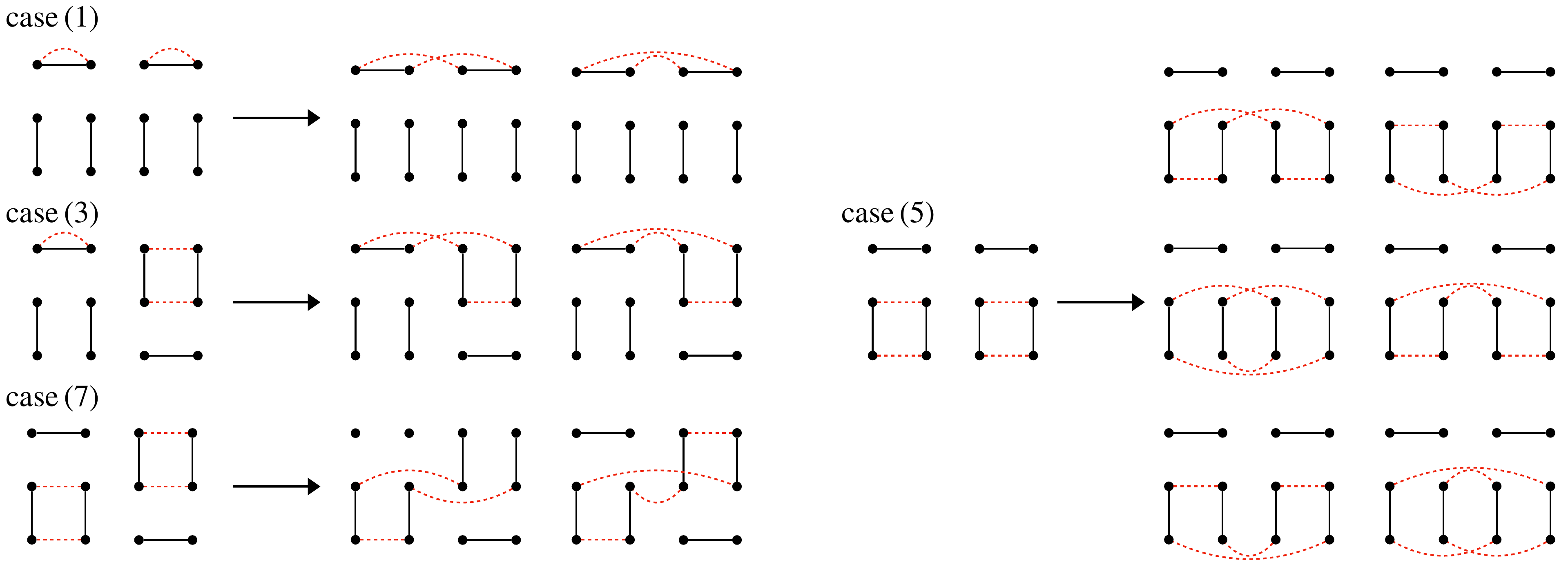}
    \caption{Possible ways of merging type-1 and type-4 vertices to create a deficit of a single connected component. Here, we only show cases $(1)$, $(3)$, $(5)$, and $(7)$, as $(2)$, $(4)$, and $(6)$ are symmetric with $(1)$, $(3)$, and $(5)$ with type-1 and type-4 edges switched.}
    \label{fig:c2n-1_1-7}
\end{figure}

Next, we must also consider cases with type-2 and type-3 black edges. There are two options here: either the graph can have exactly one set of type-2 or type-3 edges, or it can have exactly two sets (it does not matter whether it is two type-2 sets of edges, two type-3 sets of edges, or one of each). The rest of the sets of black edges must all be of type 1 or type 4. Then, creating a deficit can be done in the following ways:
\begin{enumerate}[(1):]
  \setcounter{enumi}{7}
    \item connect one type-2 or type-3 edge (the edge connecting the top row to the bottom row) to one type-1 vertical edge and one type-4 vertical edge to make a 6-vertex loop;
    \item connect two type-2 or type-3 edges (again, the top-to-bottom edges) to form a 4-vertex connected component.
\end{enumerate}
These are visualized in \cref{fig:c2n-1_8-9}.
\begin{figure}[ht!]
    \centering
    \includegraphics[width=0.75\linewidth]{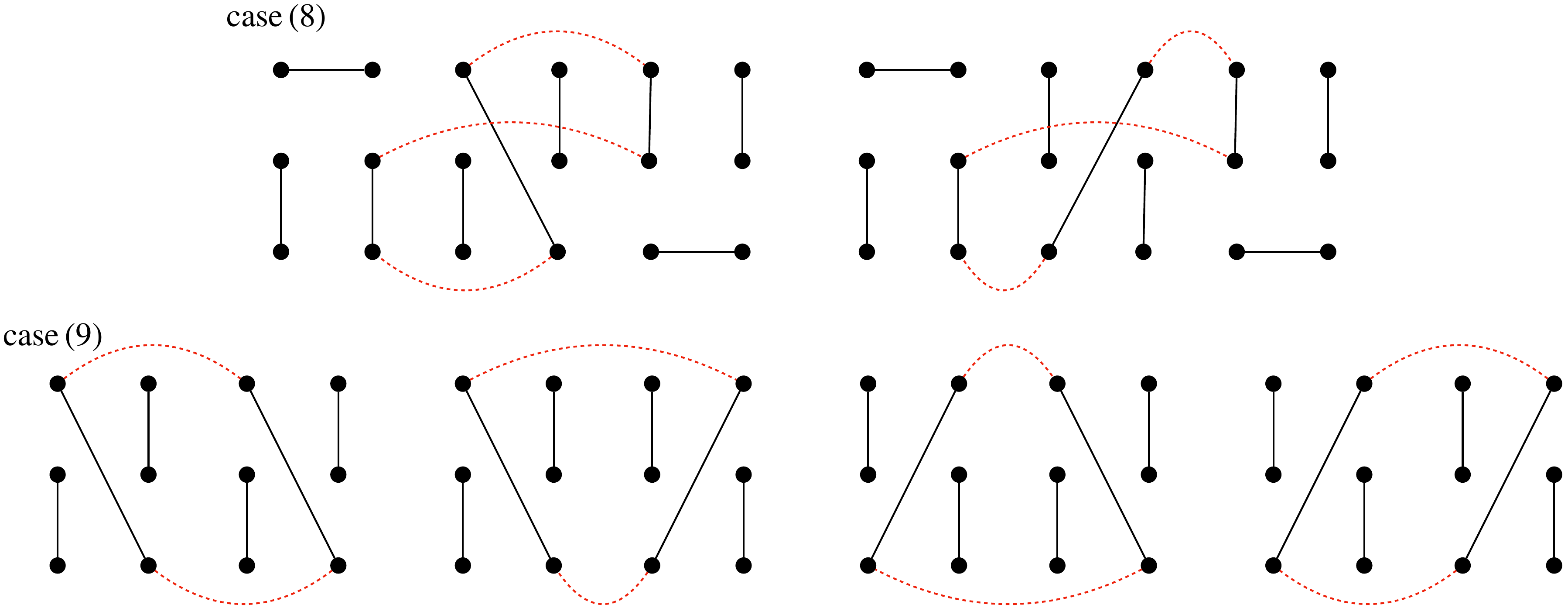}
    \caption{Possible ways of creating a deficit of a single connected component while using type-2 and/or type-3 edges.}
    \label{fig:c2n-1_8-9}
\end{figure}
The rest of horizontal black edges must be connected with red edges to form $2$-vertex connected components, and the remaining vertical edges must be appropriately paired off in order to ensure $2n-2$ other connected components are formed.

The end result of accounting for all of these cases is a (double) sum that computes $c_{2n-1}$:
\begin{equation}\label{eqn:c2n-1}
\begin{split}
    c_{2n-1} &= \sum_{p=0}^{n}\binom{n}{p} \times \bigg[ \underbrace{2\binom{p}{2}(2p-1)!!(2(n-p)-1)!!}_{(1)}+\underbrace{2\binom{n-p}{2}(2p-1)!!(2(n-p)-1)!!}_{(2)} \\
    &\phantom{\sum_{p=0}^{n}}+\underbrace{2\binom{p}{1}\binom{2(n-p)}{2}(2p-1)!!(2(n-p-1)-1)!!}_{(3)} + \underbrace{2\binom{n-p}{1}\binom{2p}{2}(2(p-1)-1)!!(2(n-p)-1)!!}_{(4)} \\
    &\phantom{\sum_{p=0}^{n}}+\underbrace{6\binom{2p}{4}(2(p-2)-1)!!(2(n-p)-1)!!}_{(5)} + \underbrace{6\binom{2(n-p)}{4}(2p-1)!!(2(n-p-2)-1)!!}_{(6)} \\
    &\phantom{\sum_{p=0}^{n}}+\underbrace{2\binom{2p}{2}\binom{2(n-p)}{2}(2(p-1)-1)!!(2(n-p-1)-1)!!}_{(7)} \bigg]\\
    &+ \sum_{p=0}^{n-1}\underbrace{2\binom{n}{1}\binom{n-1}{p}(2p+1)(2(n-p-1)+1)(2p-1)!!(2(n-p-1)-1)!!}_{(8)} \\
    &+ \sum_{p=0}^{n-2}\underbrace{4\binom{n}{2}\binom{n-2}{p}(2p+1)!!(2(n-p-2)+1)!!}_{(9)}
\end{split}
\end{equation}
The last sum should be taken to be $0$ when $n = 1$ and the sum is empty (this is because this case of course requires at least $n = 2$ to have two sets of type-2/3 edges). Each of these terms can be derived through a simple combinatorial argument regarding which types of edges are present and how they must be connected. For each case, say that there are $p$ type-1 sets of black edges. This means there are $n-p$, $n-p-1$, and $n-p-2$ sets of type-4 black edges for cases $(1)$-$(7)$, case $(8)$, and case $(9)$, respectively (in the latter two cases, the remaining set(s) of edges are type-2 and/or type-3). Each case then comes down to deciding how to order the sets of edges, how to choose which edges are connected together, and then pairing off the remaining edges of the same type to build the remaining 2- and 4-vertex connected components. We do not detail how to count every single case, but we discuss two examples, case $(1)$ and case $(8)$. The rest should be straightforward to derive by extending these arguments.

In case $(1)$, we merge two 2-vertex connected components of type 1. First, we have a factor of $\binom{n}{p}$ to account for all ways of having $p$ type-1 sets of edges. We then must select $2$ of the $p$ horizontal black edges to merge into a single connected component, hence the factor of $\binom{p}{2}$; see \cref{fig:c2n-1_1-7}. The additional factor of $2$ comes from the two possible ways of merging these into a single connected component. Finally, the remaining double factorial factors are the number of ways of pairing off the vertical black edges with those of the same type. We then must sum from $p = 0$ to $n$ to account for all possible black edge type distributions. 

Case $(8)$ proceeds similarly. First, we have a factor of $\binom{n}{1}$, or $n$, to choose where the type-2 or type-3 set of edges is. The factor of $2$ out front now actually accounts for whether it is type 2 or type 3. Next, we have $\binom{n-1}{p}$ to account for the placement of the $p$ type-1 sets of edges. Next, there are now $2p+1$ black edges that span the second and third rows (i.e., they are black edges that arise from type-1 sets of black edges). It is $2p+1$ because the type-2 or type-3 set of black edges contributes 1, and the $p$ type-1 sets contribute $2p$. Analogously, there are also $(2(n-p-1)+1)$ black edges spanning the first and second rows. We have to select one of each to connect to the black edge that spans the first and third rows to make a single 6-vertex connected component. The remaining factors are again the number of ways to pair off the remaining vertical black edges with those of the same type (horizontal black edges must form 2-vertex connected components to reach the required number of connected components). 

It is possible, but quite tedious, to simplify this double sum by looking at each individual term and then applying a similar technique as in the evaluation of the sum for $c_{2n}$. That is, for each term in the sum, we use the convolution of various Taylor series and compare the coefficients of $x^{n}$. We start with the first term
\begin{equation}
   (1) \to \sum_{p=0}^{n}\binom{n}{p} 2\binom{p}{2}(2p-1)!!(2(n-p)-1)!! = \frac{n!}{2^{n}} \sum_{p=0}^{n} p(p-1)\binom{2p}{p}\binom{2n-2p}{n-p}.
\end{equation}
One then has through Taylor expansion that
\begin{equation}
    x^{2}\frac{\mathrm{d}^{2}}{\mathrm{d}x^{2}}\frac{1}{\sqrt{1-4x}}=\sum_{n=0}^{\infty}\binom{2n}{n}n(n-1)x^{n},
\end{equation}
which implies that
\begin{equation}
    12x^{2}\frac{1}{(1-4x)^{3}}=\left(x^{2}\frac{\mathrm{d}^{2}}{\mathrm{d}x^{2}}\frac{1}{\sqrt{1-4x}}\right)\frac{1}{\sqrt{1-4x}}=\sum_{n=0}^{\infty}\sum_{p=0}^{n}p(p-1)\binom{2p}{p}\binom{2n-2p}{n-p}x^{n}.
\end{equation}
Using the Online Encycopledia of Integer Sequences (OEIS), we find the three-fold convolution of powers of 4 \href{https://oeis.org/A038845 }{A038845} \cite{oeis} has formula $(n+2)(n+1)2^{2n-1}$, meaning
\begin{equation}
    \sum_{n=0}^{\infty}12(n+2)(n+1)2^{2n-1}x^{n+2}=12x^{2}\frac{1}{(1-4x)^{3}}=\sum_{n=0}^{\infty}\sum_{p=0}^{n}p(p-1)\binom{2p}{p}\binom{2n-2p}{n-p}x^{n}.
\end{equation}
Therefore, comparing powers of $x$, we get that
\begin{equation}
    \sum_{p=0}^{n}p(p-1)\binom{2p}{p}\binom{2n-2p}{n-p} = 12n(n-1)2^{2n-5},
\end{equation}
meaning the first term in the sum is (after some algebra)
\begin{equation}
    \sum_{p=0}^{n}\binom{n}{p} 2\binom{p}{2}(2p-1)!!(2(n-p)-1)!! = (2n)!! \frac{3n(n-1)}{8}.
\end{equation}
Note also by the symmetry between $p$ and $n-p$, the contribution of the second term is the same. 

We can perform similar manipulations for the other terms. In particular, 
\begin{align}
    (3) \to \sum_{p=0}^{n}\binom{n}{p}2p\binom{2n-2p}{2}(2p-1)!!(2(n-p-1)-1)!! = \frac{n!}{2^{n-1}}\sum_{p=0}^{n}(n-p)p\binom{2p}{p}\binom{2n-2p}{n-p}. 
\end{align}
Instead of taking the Taylor expansion for the second derivative of $(1-4x)^{-1/2}$ and convolving it with that for $(1-4x)^{-1/2}$, we convolve the Taylor series for the first derivative with itself. That is,
\begin{equation}
    \frac{4x^{2}}{(1-4x)^{3}}=\left(x\frac{\mathrm{d}}{\mathrm{d}x}\frac{1}{\sqrt{1-4x}}\right)^{2} = \sum_{n=0}^{\infty}\sum_{p=0}^{n} p(n-p)\binom{2p}{p}\binom{2n-2p}{n-p}x^{n},
\end{equation}
which, using the same result as for $(1)$ (just with a difference of a factor of $3$), yields
\begin{equation}
    \sum_{p=0}^{n} p(n-p)\binom{2p}{p}\binom{2n-2p}{n-p} = 4n(n-1)2^{2n-5}.
\end{equation}
This means that the third term yields a contribution of
\begin{equation}
    (3) \to \frac{n!}{2^{n-1}}4n(n-1)2^{2n-5} = (2n)!!\frac{2n(n-1)}{8}.
\end{equation}
Again, by the symmetry between $n$ and $n-p$, the contribution from the fourth term is the same. 

Next:
\begin{align}
    (5) \to \sum_{p=0}^{n}\binom{n}{p}6\binom{2p}{4}(2(p-2)-1)!!(2(n-p)-1)!! = \frac{n!}{2^{n}}\sum_{p=0}^{n}p(p-1)\binom{2p}{p}\binom{2n-2p}{n-p} = (2n)!!\frac{3n(n-1)}{8}
\end{align}
because this is the exact same as $(1)$. Again, by symmetry, $(6)$ has the same contribution. 

We also have that
\begin{align}
    (7) \to \sum_{p=0}^{n}\binom{n}{p}2\binom{2p}{2}\binom{2(n-p)}{2}(2(p-1)-1)!!(2(n-p-1)-1)!! = \frac{n!}{2^{n-1}} \sum_{p=0}^{n}p(n-p)\binom{2p}{p}\binom{2n-2p}{n-p} = (2n)!!\frac{2n(n-1)}{8},
\end{align}
which follows because this term happens to be the same as $(3)$. 

We now move on to the final two cases. Again, similar manipulations yield that
\begin{align}
\begin{split}
    (8) &\to \sum_{p=0}^{n-1}2\binom{n}{1}\binom{n-1}{p}(2p+1)(2(n-p-1)+1)(2p-1)!!(2(n-p-1)-1)!! \\
    &= \frac{n!}{2^{n-1}}\sum_{p=0}^{n}(2p+1)(n-p)\binom{2p}{p}\binom{2n-2p}{n-p} \\
    &= \frac{n!}{2^{n-1}}2\sum_{p=0}^{n}p(n-p)\binom{2p}{p}\binom{2n-2p}{n-p} + \frac{n!}{2^{n-1}}\sum_{p=0}^{n}(n-p)\binom{2p}{p}\binom{2n-2p}{n-p}.
\end{split}
\end{align}
We have expanded the upper limit to $p = n$ because the factor of $n-p$ sets this additional contribution to $0$. The first term in the last equation is simply twice the contribution of (3), which is $(2n)!!4n(n-1)/8$. The second term requires yet another manipulation of Taylor series. By very similar arguments to the above, we have that
\begin{equation}
    x\frac{\mathrm{d}}{\mathrm{d}x}\frac{1}{\sqrt{1-4x}}=\sum_{n=0}^{\infty}\binom{2n}{n}nx^{n},
\end{equation}
which implies that
\begin{equation}
    \frac{2x}{(1-4x)^{2}} = \left(x\frac{\mathrm{d}}{\mathrm{d}x}\frac{1}{\sqrt{1-4x}}\right)\frac{1}{\sqrt{1-4x}}=\sum_{n=0}^{\infty}\sum_{p=0}^{n}p\binom{2p}{p}\binom{2n-2p}{n-p}x^{n},
\end{equation}
which is the same as the sum we are interested in (up to the symmetry of replacing $n-p$ with $p$). Using OEIS sequence \href{https://oeis.org/A002697}{A002697} \cite{oeis}, that is, the convolution of powers of $4$, we find that 
\begin{equation}
    \frac{2x}{(1-4x)^{2}} = \sum_{n=0}^{\infty}2(n+1)4^{n}x^{n+1},
\end{equation}
which means that, comparing powers of $x^{n}$,
\begin{equation}
    \frac{n!}{2^{n-1}}\sum_{p=0}^{n}(n-p)\binom{2p}{p}\binom{2n-2p}{n-p} = \frac{n!}{2^{n-1}} 2 n4^{n-1} = n(2n)!!.
\end{equation}
Finally, then 
\begin{equation}
    (8) \to  (2n)!!\frac{4n(n-1)}{8}  + (2n)!!n = \frac{4n(n+1)}{8}. 
\end{equation}

Last, we get that
\begin{align}
\begin{split}
    (9) \to &\sum_{p=0}^{n-2}4\binom{n}{2}\binom{n-2}{p}(2p+1)!!(2(n-p-2)+1)!! \\
    &= \frac{n!}{2^{n-1}}\sum_{p=0}^{n-2}\binom{2p+2}{p+1}\binom{2n-2p-2}{n-p-1}(p+1)(n-p-1)\\
    &= \frac{n!}{2^{n-1}}\sum_{x=0}^{n}\binom{2x}{x}\binom{2n-2x}{n-x}(x)(n-x),
\end{split}
\end{align}
where we have set $x = p+1$ and then expanded the limits of summation to include $x = 0$ and $x = n$ (because these terms contribute 0). Therefore, this contribution is the same as $(3)$, $(4)$, and $(7)$, which is $(2n)!!2n(n-1)/8$.

Therefore, in total, we have that
\begin{equation}
    \frac{c_{2n-1}}{(2n)!!} = 4 \frac{3n(n-1)}{8} + 4 \frac{2n(n-1)}{8} + \frac{4n(n-1)}{8} + n = (3n-2)n.
\end{equation}
Therefore, 
\begin{equation}
    c_{2n-1} = (2n)!!(3n-2)n.
\end{equation}

Numerically evaluating the sums yields the same value up to $n=40$, and this also matches the value of $c_{2n-1}$ computed via the recursion. We note that $(3n-2)n$ are the so-called octagonal numbers, which are OEIS entry \href{https://oeis.org/A000567}{A000567} \cite{oeis}. However, we are not sure whether there is a deeper connection between these numbers and the graph theoretic problem at the core of this calculation. Additionally, while it is nice that we have been able to find an exact formula for a second coefficient, this calculation does not seem scalable, meaning other methods are likely needed to try to find the full expansion of the second moment.

\section{Alternative method for computing coefficients \texorpdfstring{$c_i$}{ci}}\label{app:alternative}

In this section, we present an alternative method for computing coefficients $c_i$ in
\begin{equation} \label{eq:M2alexey}
    M_{2}(k,n) = (2n-1)!!\sum_{i=1}^{2n} c_{i}k^{i}.
\end{equation}

Using this method, we obtain a useful expression for $c_1$. We also outline how this method can be used to set up an alternative recursive code for computing the coefficients $c_i$ for all $i$. While we have not implemented this code, there is a possibility it is more efficient than the recursive code discussed in the main text. It is also possible that this new method may yield other useful analytical results about  $c_i$, including their asymptotic behavior.

We start by recalling Eq.\ (\ref{eqn:second_moment_graph}):
\begin{equation}
    M_2(k,n) = (2n-1)!!\sum_{G\in\mathbb{G}_{n}^{2}}k^{C(G)},
\end{equation}
where the sum goes over all graphs possessing the allowed assignments of black and red edges. The new method relies on the following key simplifying observation: for a given fixed assignment of black edges, the contribution to $M_2(k,n)$ (summed over all allowed red edge assignments) depends only on $\vec e = (e_{11}, e_{12}, e_{13},e_{23},e_{33})$, where $e_{ij}$ is the number of black edges that connect row $i$ to row $j$. In particular, the answer does not depend on what columns the black edges are connecting. The proof of this key observation is simple: for a fixed set of black edges, the contribution to $M_2(k,n)$ is summed over all possible red perfect matchings in each of the three rows. This means that we can swap any two vertices in a given row (while pulling the ends of the black edges to the new destinations) without changing the answer. This completes the proof.

Let $p_1$ be the number of type-1 sets of black edges, $p_4$ be the number of type-4 sets of black edges, and $p$ be the combined number of type-2 and type-3 sets of black edges (type-2 and type-3 sets are equivalent as far as their contributions to $e_{ij}$). Then $e_{11} = p_1$, $e_{33} = p_4$, $e_{12} = p + 2 p_4$, $e_{23} = p + 2 p_1$, and $e_{13} = p$. We then write
\ba
M_2(k,n) = \sum_{p_1 = 0}^n \sum_{p_4 = 0}^{n-p_1} {\binom{n}{p_1}} {\binom{n-p_1}{p_4}} 2^p g(\vec e),
\ea
where $p = n - p_1 - p_4$. The combinatorial factors come from choosing $p_1$ sets of type-1 black edges out of $n$ possible locations, then $p_4$ sets of type-4 black edges from $n-p_1$ possible locations, and finally multiplying by a factor of $2$ for each choice of whether a given contribution to $p$ is type-2 or type-3. Additionally,
\ba
g(\vec e) = \sum_{i = 1}^{2 n} d_i(\vec e) k^i,
\ea
where $d_i(\vec e)$ is the number of ways (using the allowed red-edge assignments) to make $i$ loops given the black edges specified by $\vec e$.

The coefficients $d_i(\vec e)$ can then be computed with the help of the visualization shown in Fig.\ \ref{fig:newmethod}(a). The three black dots labeled $1$, $2$, and $3$ represent the three rows. The numbers $e_{jk}$ on the five edges (including the two loops) show how many black edges connect row $j$ to row $k$. Roughly speaking, the coefficient $d_i(\vec e)$ is the number of ways to connect all the black edges specified by $\vec e$ into exactly $i$ loops.   
The red edges are used to connect the black edges to each other and are taken into account automatically, which is one of the key advantages of this approach (slightly more specifically, for any two black edges that share a row, it is possible to connect them with a red edge between the vertices in that shared row). 
Each way of joining the edges $\vec e$ into loops also comes with a combinatorial factor that takes into account the fact that all edges are distinguishable and the fact that edges that stay in the same row can each be traversed in one of two directions. 
\begin{figure}[ht!]
    \centering
    \includegraphics[width=0.7\linewidth]{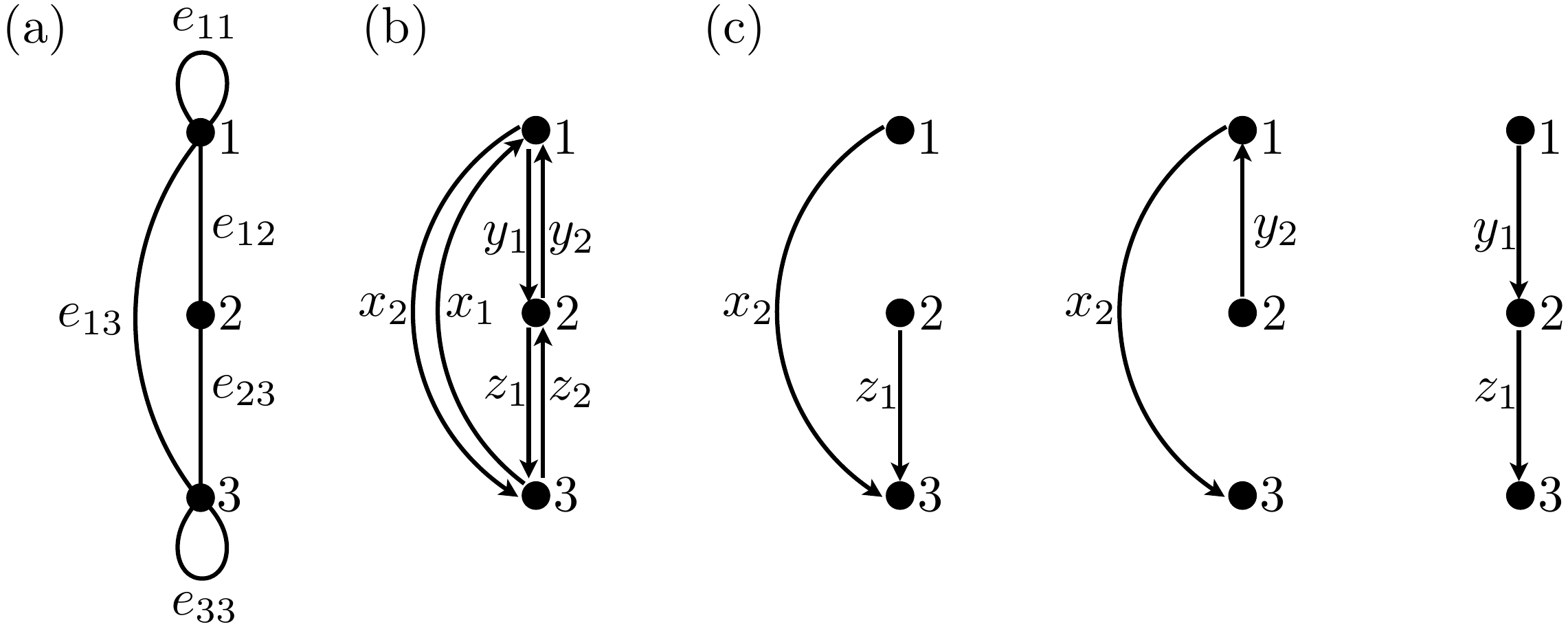}
    \caption{Graphs useful for understanding the new method for calculating coefficients $c_{i}$. (a) Once the types of black edges are assigned, the contribution to $M_2(k,n)$ depends only on the number of black $e_{ij}$ edges connecting row $i$ to row $j$. (b) In order to compute $c_1$, the number of single-loop graphs contributing to $M_2$, we first set $e_{11} = e_{33} = 0$ (later adding in the effect of nonzero values), fix the winding number $w$ of the loop, and break up $1$-$3$ edges into $x_1 = (e_{13} + w)/2$ clockwise edges and $x_2 = (e_{13} - w)/2$ counterclockwise edges. We similarly break up the $1$-$2$ and $2$-$3$ edges. We then use the BEST theorem \cite{vanaardenne-ehrenfest51} to count the number of Eulerian circuits on this directed graph. (c) For $u = 3$, the three types of arborescences contributing to $t_u(G) = x_2 z_1 + y_2 x_2  + y_1 z_1$ for the graph $G$ shown in (b).}
    \label{fig:newmethod}
\end{figure}

The coefficients $g(\vec e)$ can be computed using a recursive procedure. Instead of doing a recursion on $n$ (which is what we do in the main text, with details presented in \cref{app:recursion}), we perform the recursion on the number of black edges $e_{11} + e_{12} + e_{13} + e_{23} + e_{33}$. As in the main text, we need to define a more general function $g(\vec e, \sigma, c, s)$ to make the recursion work. $\sigma$ is a binary variable, so that $\sigma = 1$ means we are in the process of building a loop, while $\sigma = 0$ means that we need to start a new loop. If $\sigma = 1$, we need to also specify $s \in \{1,2,3\}$ (standing for start) indicating the row where the current loop started and $c \in \{1,2,3\}$ (standing for current) indicating the row where we currently are.

As in the main text, the recursive procedure is efficient, i.e.\ takes polynomial time in the number of edges. We first directly compute $g(\vec e, \sigma, c, s)$ for small values of $e_{11} + e_{12} + e_{13} + e_{23} + e_{33}$. Then the recursive step goes as follows. If $\sigma = 0$, we can either (1) close the loop right away by reducing $e_{11}$ or $e_{33}$ by 1, keep $\sigma = 0$, and multiply by $k$, or (2) set $\sigma = 1$, start a new loop at row $i$, set $s = i$, reduce $e_{ij}$ by 1 (for some $j$), and set $c = j$. If $\sigma = 1$, we can either (1) close the loop by reducing $e_{cs}$ by 1, set $\sigma = 0$, and multiply by $k$, or (2) continue building the loop, keep $\sigma = 1$, keep $s$ unchanged, reduce $e_{cj}$ by 1 (for some $j$), and change the value of $c$ to $j$. As we do these calculations, we need to also include appropriate combinatorial factors deciding which black edge to take (e.g.,\ if we pick one of $e_{ij}$ edges, we need to multiply by $e_{ij}$, and if $i = j$, we need to multiply by another factor of 2).

While we have not coded up this procedure, we believe that it offers another complementary way of understanding and analyzing the second moment. 

\subsection{Computing \texorpdfstring{$c_1$}{c1}}

Again, while we have not coded up the above recursive procedure, we will show how to use the new approach to compute $c_1$ in Eq.\ (\ref{eq:M2alexey}), i.e.,\ the number of ways to build a single-loop graph, which we were not able to directly compute using the original method. 

To proceed, we will at first ignore the contributions of the edges $e_{11}$ and $e_{33}$ (effectively pretending that they are equal to zero), but we will address how to deal with them later on. We will also assign a direction to this single loop, and we will later divide the final answer by two because each loop will be counted twice (because there are two possible directions around a loop). While it may seem to make things more difficult to add directionality to a previously undirected graph, it will actually allow us to make use of known results.

To proceed, we sort the contributions to $c_1$ according to the winding number $w$ of the loop around the triangle formed by rows 1, 2, and 3, which can now be well defined because we have added directionality to the edges. Once $w$ is fixed, the total numbers of edges in the triangle of each directionality also become fixed. Specifically, as shown in Fig.\ \ref{fig:newmethod}(b), $x_1 = (e_{13} + w)/2$ is the number of $1$-$3$ edges traversed (i.e.,\ directed) from $3$ to $1$, $x_2 = (x-w)/2$ is the number of $1$-$3$ edges traversed from $1$ to $3$, $y_1 = (e_{12} + w)/2$ is the number $1$-$2$ edges traversed from $1$ to $2$, $y_2 = (e_{12}-w)/2$ is the number of $1$-$2$ edges traversed from $2$ to $1$, $z_1 = (e_{23}+w)/2$ is the number $2$-$3$ edges traversed from $2$ to $3$, and $z_2 = (e_{23} - w)/2$ is the number $2$-$3$ edges traversed from $3$ to $2$. Note that, here, we are treating a positive winding number as going \emph{clockwise} around the graph. There will also be combinatorial factors associated with \emph{which} edges go in which direction, but we will handle that factor later. We are now interested in the number of Eulerian circuits on the resulting directed graph $G$, i.e.,\ the number of directed closed paths that visit each edge exactly once. The BEST theorem \cite{vanaardenne-ehrenfest51} says that the number of such Eulerian circuits is
\ba
ec(G) = t_u(G) \prod_{v \in V} (\text{deg}(v) - 1)!,
\ea
where $V = \{1,2,3\}$ is the set of 3 vertices of our graph, $\text{deg}(v)$ is the indegree of vertex $v$, and $t_u(G)$ is the number of arborescences of $G$ with root $u$, i.e.,\ the number of directed tree subgraphs of $G$ such that, for any vertex $v$, there is exactly one directed path from $v$ to $u$. If graph the $G$ has an Eulerian circuit, it is known that $t_u(G)$ is independent of the choice of $u$. Choosing $u = 3$, the three types of trees (arborescences) contributing to $t_u(G)$ for the graph $G$ in Fig.\ \ref{fig:newmethod}(b) are shown in Fig.\ \ref{fig:newmethod}(c). The result is $t_u(G) = x_2 z_1 + y_2 x_2  + y_1 z_1 $. The term $x_2 z_1$ [corresponding to the first graph in Fig.\ \ref{fig:newmethod}(c)]  counts trees (arborescences) made up of a $1\rightarrow 3$ edge and a $2 \rightarrow 3$ edge; the term $y_2 x_2$ [corresponding to the second graph in Fig.\ \ref{fig:newmethod}(c)] counts trees made up of a $2 \rightarrow 1$ edge and a $1 \rightarrow 3$ edge; and the  term $y_1 z_1$ [corresponding to the third graph in Fig.\ \ref{fig:newmethod}(c)] counts trees made up of a $1 \rightarrow 2$ edge and a $2 \rightarrow 3$ edge. Plugging in the definitions of $x_i$, $y_i$, and $z_i$, we find $t_u(G) = (w^2 + e_{12} e_{13} + e_{13} e_{23} + e_{23} e_{12})/4$. Therefore,
\ba
ec(G) &=& \frac{1}{4}(w^2 + e_{12} e_{13} + e_{13} e_{23} + e_{23} e_{12}) \left(\frac{e_{12} + e_{13}}{2}-1\right)!\left(\frac{e_{12} + e_{23}}{2}-1\right)!\left(\frac{e_{13} + e_{23}}{2}-1\right)! 
\nonumber \\
&=& \frac{1}{4}\left(w^2 + 3 n^2 - (p_1 - p_4)^2 - 2 n  (p_1 + p_4)\right) (n-p_1-1)!(n-1)!(n-p_4-1)!.
\ea
We now include the aformentioned combinatorial factors that account for which edges receive which directionality. When choosing which $x_1$ of the $e_{13}$ edges to make into $3 \rightarrow 1$ edges, we pick up a combinatorial factor of ${\binom{e_{13}}{x_1}} = {\binom{n - p_1 - p_4}{(n-p_1 - p_4 +w)/2}}$. Similarly for $e_{12}$ and $e_{23}$: ${\binom{e_{12}}{y_1}} = {\binom{n - p_1 + p_4}{(n-p_1 + p_4 +w)/2}}$ and ${\binom{e_{23}}{z_1}} = {\binom{n + p_1 - p_4}{(n+p_1 - p_4 +w)/2}}$. 

We can now also account for the fact that $e_{11}$ and $e_{33}$ may actually be nonzero. We keep $G$ defined as before (i.e.\ using only $1$-$2$, $1$-$3$, and $2$-$3$ edges), but we now dress the loops defined on $G$ (and counted above) with additional $1$-$1$ and $3$-$3$ edges. The number of times our loop visits vertex $1$ is given by $\deg(1) = n-p_1$, so we need to sort $e_{11} = p_1$ edges into $n-p_1$ buckets, which gives a factor of ${\binom{e_{11} + n-p_1-1}{e_{11}}} = {\binom{n-1}{p_1}}$ (by the standard ``stars and bars'' argument). Similarly, $e_{33} = p_4$ loops give ${\binom{e_{33} + n-p_4-1}{e_{33}}}={\binom{n-1}{p_4}}$. Because all $e_{11} = p_1$ edges are distinguishable and can be traversed in two different ways, we also get a factor of $p_1! 2^{p_1}$ (that is, after the bucket counts are decided, we still have to order the edges and assign each a direction). We similarly get a factor of $p_4! 2^{p_4}$. Putting all these elements together, we have
\ba \label{eq:c1}
c_1 &=& \sum_{p_1 = 0}^n \sum_{p_4 = 0}^{n-p_1} {\binom{n}{p_1}} {\binom{n-p_1}{p_4}} 2^p d_1(\vec e) \nonumber \\
&=& \sum_{p_1 = 0}^{n-1} \sum_{p_4 = 0}^{n-\text{max}(p_1,1)} {\binom{n}{p_1}} {\binom{n-p_1}{p_4}} 2^p \sum_w  \frac{1}{8}\left(w^2 + 3 n^2 - (p_1 - p_4)^2 - 2 n  (p_1 + p_4)\right) (n-p_1-1)!(n-1)!(n-p_4-1)! \nonumber \\
&& \times {\binom{n - p_1 - p_4}{(n-p_1 - p_4 +w)/2}} {\binom{n - p_1 + p_4}{(n-p_1 + p_4 +w)/2}} {\binom{n + p_1 - p_4}{(n+p_1 - p_4 +w)/2}} {\binom{n-1}{p_1}} {\binom{n-1} {p_4}}  p_1! 2^{p_1} p_4! 2^{p_4} \\
&=&  n! ((n-1)!)^3 2^{n-3}\sum_{p_1 = 0}^n \sum_{p_4 = 0}^{n-p_1} \sum_{w = -n + p_1 + p_4}^{n-p_1 - p_4} \frac{{\binom{n - p_1 + p_4}{(n - p_1 + p_4 + w)/2}} {\binom{n + p_1 - p_4}{(n + p_1 - p_4 + w)/2}}  (w^2 + 3 n^2 - (p_1 - p_4)^2 -2 n (p_1 + p_4))}{p_1 ! p_4 ! ((n - p_1 - p_4 - w)/2)! ((n - p_1 - p_4 + w)/2)!}  \nonumber.
\ea
In the second line, we have introduced an extra factor of $1/2$ because we counted every loop twice because of the two directions in which each loop can be traversed. In the second line, we also excluded the cases where all black edge sets are of type-1 ($p_1 = n$) and where all black edge sets are of type-4 ($p_4 = n$), as there is no single-loop contribution in this case (allowing for $p_1 = n$ would make ${\binom{n-1}{p_1}}$ undefined; similarly for $p_4 = n$). In the last line, to simplify the expression, we allow $p_1 = n$ and $p_4 = n$ because the corresponding contribution is now well-defined and vanishes anyway. In the last line, the sum over $w$ runs in increments of $2$ due to a parity constraint (flipping the directionality of a single edge actually changes the winding number by $2$). While one can evaluate the sum over $w$ in the final expression in Eq.\ (\ref{eq:c1}) in terms of hypergeometric functions, we were not able to then evaluate the remaining sums over $p_4$ and $p_1$ to obtain a closed-form expression for $c_1$.

Numerical evaluation of the final expression in Eq.\ (\ref{eq:c1}) agrees with the evaluation of $c_1$ using the recursive method in the main text up to $n=40$ (which is the largest $n$ we apply the latter method to). The final expression in Eq.\ (\ref{eq:c1}) is, however, so simple that it can easily be evaluated for much larger values of $n$. For example, Mathematica~\cite{Mathematica} on a personal computer evaluates it for $n=200$ in about 15 seconds. One can also use Eq.\ (\ref{eq:c1}) to study in detail the asymptotic dependence of $c_1$ on $n$. We also hope that the method introduced in this section can yield other useful analytical results about $c_i$.

\end{appendix}

\end{document}